\newcommand{\fullgridgraph}{G^\mathrm{f}}\newcommand{\bindinggraph}{G^\mathrm{b}}
\newif\ifabstract
\newif\iffull
\newtoks\magicAppendix
\newtoks\magictoks
\newif\iflater
\long\def\later#1{\magictoks={#1}%
  \edef\magictodo{\noexpand\magicAppendix={\the\magicAppendix \par
    \the\magictoks%
  }}
  \magictodo}
\long\def\both#1{\magictoks={#1}%
  \edef\magictodo{\noexpand\magicAppendix={\the\magicAppendix \par
    \noexpand\setcounter{theorem-preserve}{\noexpand\arabic{theorem}}%
    \noexpand\setcounter{theorem}{\arabic{theorem}}%
    \noexpand\setcounter{section-preserve}{\noexpand\arabic{section}}%
    \noexpand\setcounter{section}{\arabic{section}}%
	\noexpand\let\noexpand\oldsection=\noexpand\thesection
	\noexpand\def\noexpand\thesection{\thesection}
	\noexpand\let\noexpand\oldlabel=\noexpand\label
	\noexpand\let\noexpand\label=\noexpand\blank
    \the\magictoks%
    \noexpand\setcounter{theorem}{\noexpand\arabic{theorem-preserve}}%
    \noexpand\setcounter{section}{\noexpand\arabic{section-preserve}}%
	\noexpand\let\noexpand\thesection=\noexpand\oldsection
	\noexpand\let\noexpand\label=\noexpand\oldlabel
  }}
  \magictodo
  \the\magictoks}
\long\def\later#1{#1}
\long\def\both#1{#1}
\long\def\magicappendix{
	\latertrue%
	\the\magicAppendix%
}
\title{Doubles and Negatives are Positive (in Self-Assembly)}
\institute{}
\author{
  Jacob Hendricks%
        \thanks{Department of Computer Science and Computer Engineering, University of Arkansas,
      \protect\url{jhendric@uark.edu}
      Supported in part by National Science Foundation Grant CCF-1117672.}
\and
  Matthew J. Patitz%
    \thanks{Department of Computer Science and Computer Engineering, University of Arkansas,
      \protect\url{patitz@uark.edu}
      Supported in part by National Science Foundation Grant CCF-1117672.}
\and
 Trent A. Rogers%
        \thanks{Department of Mathematical Sciences, University of Arkansas,
      \protect\url{tar003@uark.edu}
      Supported in part by National Science Foundation Grant CCF-1117672.}
      }
\date{}
\begin{document}

\maketitle
\vspace{-18pt}
\begin{abstract}
In the abstract Tile Assembly Model (aTAM), the phenomenon of cooperation occurs when the attachment of a new tile to a growing assembly requires it to bind to more than one tile already in the assembly.  Often referred to as ``temperature-2'' systems, those which employ cooperation are known to be quite powerful (i.e. they are computationally universal and can build an enormous variety of shapes and structures).  Conversely, aTAM systems which do not enforce cooperative behavior, a.k.a. ``temperature-1'' systems, are conjectured to be relatively very weak, likely to be unable to perform complex computations or algorithmically direct the process of self-assembly.  Nonetheless, a variety of models based on slight modifications to the aTAM have been developed in which temperature-1 systems are in fact capable of Turing universal computation through a restricted notion of cooperation.  Despite that power, though, several of those models have previously been proven to be unable to perform or simulate the stronger form of cooperation exhibited by temperature-2 aTAM systems.

In this paper, we first prove that another model in which temperature-1 systems are computationally universal, namely the restricted glue TAM (rgTAM) in which tiles are allowed to have edges which exhibit repulsive forces, is also unable to simulate the strongly cooperative behavior of the temperature-2 aTAM.  We then show that by combining the properties of two such models, the Dupled Tile Assembly Model (DTAM) and the rgTAM into the DrgTAM, we derive a model which is actually more powerful at temperature-1 than the aTAM at temperature-2.  Specifically, the DrgTAM, at temperature-1, can simulate any aTAM system of any temperature, and it also contains systems which cannot be simulated by any system in the aTAM.
\end{abstract}
\vspace{-15pt}

\vspace{-20pt}
\section{Introduction}
\label{sec:intro}
\vspace{-10pt}

Composed of large collections of relatively simple components which autonomously combine to form predetermined structures, self-assembling systems provide a framework in which structures can grow from the bottom up, with precise placement of individual molecules.  Natural self-assembling systems, the results of which include structures ranging from crystalline snowflakes to cellular membranes and viruses, have inspired a large body of research focused on both studying their properties and creating artificial self-assembling systems to mimic them.  As experimental and theoretical research into self-assembly has increased in sophistication, particular attention has been focused upon the domain of \emph{algorithmic self-assembly}, which is self-assembly intrinsically directed by algorithms, or step-by-step procedures used to perform computations.  An example of a model supporting algorithmic self-assembly is the abstract Tile Assembly Model (aTAM) \cite{Winf98}, which has spawned much research investigating its powers and limitations, and even more fundamentally those of algorithmic self-assembly in general.

In the aTAM, the fundamental components are square \emph{tiles} which have sticky \emph{glues} on the edges which allow them to bind with other tiles along edges sharing matching glues.  Self-assembly begins from special \emph{seed} assemblies, and progresses as tiles attach one at a time to the growing assembly.  As simple as the aTAM sounds, when initially introducing it in 1998 \cite{Winf98}, Winfree showed it be to capable of Turing universal computation, i.e. it can perform any computation possible by any computer.  It was soon also shown that the algorithmic nature of the aTAM can be harnessed to build squares \cite{RotWin00} and general shapes \cite{SolWin07} with (information theoretically) optimal efficiency in terms of the number of unique kinds of tiles used in the assemblies.  The rich set of results displaying the power of the aTAM (e.g. \cite{IUSA,jCCSA,jSADS} to name just a few), however, have appeared to be contingent upon a minimal value of $2$ for a system parameter known as the \emph{temperature}.  The temperature of an aTAM system is the threshold which, informally stated, determines how many glues a tile must bind to a growing assembly with in order to remain attached.  Temperature-$2$ systems have the property that they can enforce \emph{cooperation} in which the attachment of a tile requires it to correctly bind to at least two tiles already in the assembly (thus, those two tiles \emph{cooperate} to allow the new tile to attach). This cooperation allows for each tile to effectively perform a primitive logical operation (e.g. \texttt{and}, \texttt{or}, \texttt{xor}, etc.) on the ``input'' values supplied by the tiles they bind to, and careful combination of these operations, just as with the gates in a modern electronic processor, allow for complex computations to occur.  In contrast, the requirement for cooperation cannot be enforced in temperature-$1$ systems which only require one binding side, and it has thus been conjectured that temperature-$1$ aTAM systems are ``weak'' in the sense that they cannot perform universal computation or be guided algorithmically \cite{jLSAT1}.  While this long-standing conjecture remains unproven in the general case of the aTAM, a growing body of work has focused on attempts to circumvent the limitations of temperature-$1$ self-assembly by making small variations to the aTAM.  For instance, it has been shown that the following models are computationally universal at temperature-$1$: the 3-D aTAM \cite{CooFuSch11}, aTAM systems which compute probabilistically \cite{CooFuSch11}, the restricted glues TAM (rgTAM) which allow glues with repulsive (rather than just attractive) forces \cite{SingleNegative}, the Dupled aTAM which allows tiles shaped like $2 \times 1$ rectangles \cite{Duples}, and the Signal-passing Tile Assembly Model \cite{Signals} which contains dynamically reconfigurable tiles.

While such results may seem to indicate that those computationally universal models are as powerful as the temperature-$2$ aTAM, in \cite{IUNeedsCoop} it was shown that 3-D temperature-$1$ aTAM systems cannot possibly simulate very basic ``glue cooperation'' exhibited in the temperature-$2$ aTAM where a new tile actually binds to two already placed tiles.  Essentially, the weaker form of cooperation exploited by the 3-D temperature-$1$ aTAM to perform computation does allow for the restriction of tile placements based on the prior placement of two other tiles, but that form of cooperation seems to be fundamentally restrictive and ``non-additive'', meaning that the previously placed tiles can only prevent certain future tile bindings, but not cooperate to support new binding possibilities.  In fact, that lesser form of cooperation now appears to be the limit for those temperature-$1$ models which can compute (with perhaps the exception of the active signal-passing tiles), as it was shown in \cite{Duples} that the DaTAM also cannot simulate glue cooperation.  It appears that the landscape modeling the relative powers of models across various parameters is more subtle and complicated than originally recognized, with the original notion of cooperative behavior being more refined.

The contributions of this paper are threefold.  First, we show that the rgTAM is also not capable of simulating glue cooperation.  Second, we introduce the Dupled restricted glue TAM (DrgTAM) which allows for both square tiles and ``duple'' tiles, which are simply pre-formed pairs of $2$ tiles joined along one edge before assembly begins, and it allows for glues with negative strength (i.e. those which exert repulsive force).  However, it is restricted similar to the rgTAM in that the magnitude of glue strengths cannot exceed $1$ (i.e. only strengths $1$ and $-1$ are allowed).  Third, we show that by creating the DrgTAM by combining two models (the rgTAM and the Dupled aTAM) which are computationally universal at temperature $1$ but which cannot independently simulate glue cooperation, the result is a model which in some measures is greater than the sum of its parts.  That is, the resulting DrgTAM is capable of both universal computation \emph{and} the simulation of glue cooperation.  This is the first such result for passive (i.e. non-active) tile assembly systems.  In fact, we show the stronger result that there is a single tile set in the DrgTAM which can be configured to, in a temperature-$1$ system, simulate any arbitrary aTAM system, making it intrinsically universal for the aTAM.  Coupled with the result in \cite{Duples} which proves that there are temperature-$1$ systems in the DTAM, which are thus also in the DrgTAM, that cannot be simulated by the aTAM at temperature-$2$, this actually implies that the DrgTAM is more powerful than the temperature-$2$ aTAM.  

The paper is organized as follows.  In Section~\ref{sec:prelims} we give high-level sketches of the definitions of the models and of the concepts of simulation used throughout the paper.  In Section~\ref{rgTAS_cannot_coop} we prove that rgTAM systems cannot simulate the glue cooperation of temperature-$2$ aTAM systems, and in Section~\ref{sec:DrgTAM_sim} we present the proof that the DrgTAM can simulate the temperature-$2$ aTAM and in fact contains a tile set which is intrinsically universal for it.  Due to space constraints, the formal definitions as well as all proofs can be found in the Appendix.

\vspace{-15pt}
\section{Preliminaries}\label{sec:prelims}

\vspace{-10pt}
 Throughout this paper, we use three tile assembly models: 1. the aTAM, 2. the restricted glue TAM (rgTAM), and 3. the dupled rgTAM (DrgTAM). We now informally describe these models.  For formal definitions see Section~\ref{sec-tam-formal}.
\vspace{-10pt}
\subsubsection{Informal description of the abstract Tile Assembly Model}
\label{sec-tam-informal}

A \emph{tile type} is a unit square with four sides, each consisting of a \emph{glue label}, often represented as a finite string, and a nonnegative integer \emph{strength}. A glue~$g$ that appears on multiple tiles (or sides) always has the same strength~$s_g$.
There are a finite set $T$ of tile types, but an infinite number of copies of each tile type, with each copy being referred to as a \emph{tile}. An \emph{assembly}
is a positioning of tiles on the integer lattice $\Z^2$, described  formally as a partial function $\alpha:\Z^2 \dashrightarrow T$.
Let $\mathcal{A}^T$ denote the set of all assemblies of tiles from $T$, and let $\mathcal{A}^T_{< \infty}$ denote the set of finite assemblies of tiles from $T$.
We write $\alpha \sqsubseteq \beta$ to denote that $\alpha$ is a \emph{subassembly} of $\beta$, which means that $\dom\alpha \subseteq \dom\beta$ and $\alpha(p)=\beta(p)$ for all points $p\in\dom\alpha$.
Two adjacent tiles in an assembly \emph{interact}, or are \emph{attached}, if the glue labels on their abutting sides are equal and have positive strength.
Each assembly induces a \emph{binding graph}, a grid graph whose vertices are tiles, with an edge between two tiles if they interact.
The assembly is \emph{$\tau$-stable} if every cut of its binding graph has strength at least~$\tau$, where the strength   of a cut is the sum of all of the individual glue strengths in the cut. When $\tau$ is clear from context, we simply say that a $\tau$-stable assembly is stable.

A \emph{tile assembly system} (TAS) is a triple $\calT = (T,\sigma,\tau)$, where $T$ is a finite set of tile types, $\sigma:\Z^2 \dashrightarrow T$ is a finite, $\tau$-stable \emph{seed assembly},
and $\tau$ is the \emph{temperature}.
An assembly $\alpha$ is \emph{producible} if either $\alpha = \sigma$ or if $\beta$ is a producible assembly and $\alpha$ can be obtained from $\beta$ by the stable binding of a single tile.
In this case we write $\beta\to_1^\calT \alpha$ (to mean~$\alpha$ is producible from $\beta$ by the attachment of one tile), and we write $\beta\to^\calT \alpha$ if $\beta \to_1^{\calT*} \alpha$ (to mean $\alpha$ is producible from $\beta$ by the attachment of zero or more tiles).
When $\calT$ is clear from context, we may write $\to_1$ and $\to$ instead.
We let $\prodasm{\calT}$ denote the set of producible assemblies of $\calT$.
An assembly is \emph{terminal} if no tile can be $\tau$-stably attached to it.
We let   $\termasm{\calT} \subseteq \prodasm{\calT}$ denote  the set of producible, terminal assemblies of $\calT$.
A TAS $\calT$ is \emph{directed} if $|\termasm{\calT}| = 1$. Hence, although a directed system may be nondeterministic in terms of the order of tile placements,  it is deterministic in the sense that exactly one terminal assembly is producible (this is analogous to the notion of {\em confluence} in rewriting systems).

Since the behavior of a TAS $\calT=(T,\sigma,\tau)$ is unchanged if every glue with strength greater than $\tau$ is changed to have strength exactly $\tau$, we assume  that all glue strengths are in the set $\{0, 1, \ldots , \tau\}$.
\vspace{-15pt}
\subsubsection{Informal description of the restricted glue Tile Assembly Model}
\label{sec-rgtam-informal}

The rgTAM was introduced in~\cite{SingleNegative} where it was shown that the rgTAM is computationally universal even in the case where only a single glue has strength $-1$. The definition used in~\cite{SingleNegative} and the definition given here are similar to the irreversible negative glue tile assembly model given in~\cite{DotKarMasNegativeJournal}.

The restricted glue Tile Assembly Model (rgTAM) can be thought of as the aTAM where the temperature is restricted to $1$ and glues may have strengths $-1, 0,$ or $1$. A system in the rgTAM is an ordered pair $(T, \sigma)$ where $T$ is the \emph{tile set}, and $\sigma$ is a stable \emph{seed assembly}. We call an rgTAM system an rgTAS. \emph{Producible} assemblies in an rgTAS can be defined recursively as follows. Let $\mathcal{T} = (T,\sigma)$ be an rgTAS. Then, an assembly $\alpha$ is producible in $\mathcal{T}$ if 1. $\alpha = \sigma$, 2. $\alpha$ is the result of a stable attachment of a single tile to a producible assembly, or 3. $\alpha$ is one side of a cut of strength $\leq 0$ of a producible assembly.

In~\cite{DotKarMasNegativeJournal}, Doty et al. give a list of the choices that can be made when defining a model with negative glues. These choices are (1) seeded/unseeded, (2) single-tile addition/two-handed assembly, (3) irreversible/reversible, (4) detachment precedes attachment/detachment and attachment in arbitrary order, (5) finite tile counts/infinite tile counts, and (6) tagged result/tagged junk. Here we have chosen the rgTAM to be a seeded, single-tile addition, irreversible model that uses infinite tiles. We also assume that attachment and detachment in the model occur in arbitrary order, however the results presented here also hold in the case where detachment precedes attachment. Finally, the definition of simulation (see Section~\ref{sec:simulation_def_informal}) implicitly enforces a notion of tagged result and tagged junk. In particular, if detachment occurs in a simulating system, of the two resulting assemblies one contains the seed and represents some assembly in the simulated system, while the other resulting assembly must map to the empty tile.

\vspace{-15pt}
\subsubsection{Informal description of the Dupled restricted glue Tile Assembly Model}
\label{sec-drgtam-informal}

The DrgTAM is an extension of the rgTAM which allows for systems with square tiles as well as rectangular tiles. The rectangular tiles are $2 \times 1$ or $1 \times 2$ rectangles which can logically be thought of as two square tiles which begin pre-attached to each other along an edge, hence the name \emph{duples}.  A \emph{DrgTAM system} (DrgTAS) is an ordered 4-tuple $(T,S,D,\sigma)$ where, as in a TAS, $T$ is a tile set and $\sigma$ is a seed assembly. $S$ is the set of singleton (i.e. square) tiles which are available for assembly, and $D$ is the set of duple tiles.  The tile types making up $S$ and $D$ all belong to $T$, with those in $D$ each being a combination of two tile types from $T$.

It should be noted that the glue binding two tiles that form a duple must have strength $1$, and the glues exposed by a duple may have strength $-1$, $0$, or $1$. Also notice that for an assembly $\alpha$ in a DrgTAS, a cut of strength $\leq 0$ may separate two nodes of the grid graph that correspond to two tiles of a duple. Then, the two producible assemblies on each side of this cut each contain one tile from the duple.

\ifabstract
\later{
\section{Formal descriptions of the Tile Assembly Models}

We now give the formal definitions of the tile assembly models.

\subsection{Formal description of the abstract Tile Assembly Model}
\label{sec-tam-formal}

In this section we provide a set of definitions and conventions that are used throughout this paper.

We work in the $2$-dimensional discrete space $\Z^2$. Define the set
$U_2=\{(0,1), \\(1,0), (0,-1), (-1,0)\}$ to be the set of all
\emph{unit vectors} in $\mathbb{Z}^2$.
We also sometimes refer to these vectors by their
cardinal directions $N$, $E$, $S$, $W$, respectively.
All \emph{graphs} in this paper are undirected.
A \emph{grid graph} is a graph $G =
(V,E)$ in which $V \subseteq \Z^2$ and every edge
$\{\vec{a},\vec{b}\} \in E$ has the property that $\vec{a} - \vec{b} \in U_2$.

Intuitively, a tile type $t$ is a unit square that can be
translated, but not rotated, having a well-defined ``side
$\vec{u}$'' for each $\vec{u} \in U_2$. Each side $\vec{u}$ of $t$
has a ``glue'' with ``label'' $\textmd{label}_t(\vec{u})$--a string
over some fixed alphabet--and ``strength''
$\textmd{str}_t(\vec{u})$--a nonnegative integer--specified by its type
$t$. Two tiles $t$ and $t'$ that are placed at the points $\vec{a}$
and $\vec{a}+\vec{u}$ respectively, \emph{bind} with \emph{strength}
$\textmd{str}_t\left(\vec{u}\right)$ if and only if
$\left(\textmd{label}_t\left(\vec{u}\right),\textmd{str}_t\left(\vec{u}\right)\right)
=
\left(\textmd{label}_{t'}\left(-\vec{u}\right),\textmd{str}_{t'}\left(-\vec{u}\right)\right)$.

In the subsequent definitions, given two partial functions $f,g$, we write $f(x) = g(x)$ if~$f$ and~$g$ are both defined and equal on~$x$, or if~$f$ and~$g$ are both undefined on $x$.

Fix a finite set $T$ of tile types.
A $T$-\emph{assembly}, sometimes denoted simply as an \emph{assembly} when $T$ is clear from the context, is a partial
function $\pfunc{\alpha}{\Z^2}{T}$ defined on at least one input, with points $\vec{x}\in\Z^2$ at
which $\alpha(\vec{x})$ is undefined interpreted to be empty space,
so that $\dom \alpha$ is the set of points with tiles.

We write $|\alpha|$ to denote $|\dom \alpha|$, and we say $\alpha$ is
\emph{finite} if $|\alpha|$ is finite. For assemblies $\alpha$
and $\alpha'$, we say that $\alpha$ is a \emph{subassembly} of
$\alpha'$, and write $\alpha \sqsubseteq \alpha'$, if $\dom \alpha
\subseteq \dom \alpha'$ and $\alpha(\vec{x}) = \alpha'(\vec{x})$ for
all $x \in \dom \alpha$.

We now give a brief formal definition of the aTAM. See \cite{Winf98,RotWin00,Roth01,jSSADST} for other developments of the model.  Our notation is that of \cite{jSSADST}, which also contains a more complete definition.

Given a set $T$ of tile types, an {\it assembly} is a partial function $\pfunc{\alpha}{\mathbb{Z}^2}{T}$. An assembly is {\it $\tau$-stable}
if it cannot be broken up into smaller assemblies without breaking bonds of total strength at least $\tau$, for some $\tau \in \mathbb{N}$.

Self-assembly begins with a {\it seed assembly} $\sigma$ and
proceeds asynchronously and nondeterministically, with tiles
adsorbing one at a time to the existing assembly in any manner that
preserves $\tau$-stability at all times.  A {\it tile assembly system}
({\it TAS}) is an ordered triple $\mathcal{T} = (T, \sigma, \tau)$,
where $T$ is a finite set of tile types, $\sigma$ is a seed assembly
with finite domain, and $\tau \in \N$.  A {\it generalized tile
assembly system} ({\it GTAS})
is defined similarly, but without the finiteness requirements.  We
write $\prodasm{\mathcal{T}}$ for the set of all assemblies that can arise
(in finitely many steps or in the limit) from $\mathcal{T}$.  An
assembly $\alpha \in \prodasm{\mathcal{T}}$ is {\it terminal}, and we write $\alpha \in
\termasm{\mathcal{T}}$, if no tile can be $\tau$-stably added to it. It is clear that $\termasm{\mathcal{T}} \subseteq \prodasm{\mathcal{T}}$.

An assembly sequence in a TAS $\mathcal{T}$ is a (finite or infinite) sequence $\vec{\alpha} = (\alpha_0,\alpha_1,\ldots)$ of assemblies in which each $\alpha_{i+1}$ is obtained from $\alpha_i$ by the addition of a single tile. The \emph{result} $\res{\vec{\alpha}}$ of such an assembly sequence is its unique limiting assembly. (This is the last assembly in the sequence if the sequence is finite.) The set $\prodasm{T}$ is partially ordered by the relation $\longrightarrow$ defined by
\begin{eqnarray*}
\alpha \longrightarrow \alpha' & \textmd{iff} & \textmd{there is an assembly sequence } \vec{\alpha} = (\alpha_0,\alpha_1,\ldots) \\
                               &              & \textmd{such that } \alpha_0 = \alpha \textmd{ and } \alpha' = \res{\vec{\alpha}}. \\
\end{eqnarray*}
If $\vec{\alpha} = (\alpha_0,\alpha_1,\ldots)$ is an assembly sequence in $\mathcal{T}$ and $\vec{m} \in \mathbb{Z}^2$, then the $\vec{\alpha}$\emph{-index} of $\vec{m}$ is $i_{\vec{\alpha}}(\vec{m}) = $min$\{ i \in \mathbb{N} | \vec{m} \in \dom \alpha_i\}$.  That is, the $\vec{\alpha}$-index of $\vec{m}$ is the time at which any tile is first placed at location $\vec{m}$ by $\vec{\alpha}$.  For each location $\vec{m} \in \bigcup_{0 \leq i \leq l} \dom \alpha_i$, define the set of its input sides IN$^{\vec{\alpha}}(\vec{m}) = \{\vec{u} \in U_2 | \mbox{str}_{\alpha_{i_{\alpha}}(\vec{m})}(\vec{u}) > 0 \}$.

We say that $\mathcal{T}$ is \emph{directed} (a.k.a. \emph{deterministic}, \emph{confluent}, \emph{produces a unique assembly}) if the relation $\longrightarrow$ is directed, i.e., if for all $\alpha,\alpha' \in \prodasm{T}$, there exists $\alpha'' \in \prodasm{T}$ such that $\alpha \longrightarrow \alpha''$ and $\alpha' \longrightarrow \alpha''$. It is easy to show that $\mathcal{T}$ is directed if and only if there is a unique terminal assembly $\alpha \in \prodasm{T}$ such that $\sigma \longrightarrow \alpha$.

A set $X \subseteq \Z^2$ {\it weakly self-assembles} if there exists
a TAS ${\mathcal T} = (T, \sigma, \tau)$ and a set $B \subseteq T$
such that $\alpha^{-1}(B) = X$ holds for every terminal assembly
$\alpha \in \termasm{T}$.  Essentially, weak self-assembly can be thought of
as the creation (or ``painting'') of a pattern of tiles from $B$ (usually taken to be a
unique ``color'') on a possibly larger ``canvas'' of un-colored tiles.

A set $X$ \emph{strictly self-assembles} if there is a TAS $\mathcal{T}$ for
which every assembly $\alpha\in\termasm{T}$ satisfies $\dom \alpha =
X$. Essentially, strict self-assembly means that tiles are only placed
in positions defined by the shape.  Note that if $X$ strictly self-assembles, then $X$ weakly
self-assembles. (Let all tiles be in $B$.)

\subsection{Formal description of the restricted glue Tile Assembly Model}
\label{sec-rgtam-formal}

In this section we formally define the restricted glue Tile Assembly Model (rgTAM). Since the rgTAM is based on the aTAM, most of the formal definition of Section~\ref{sec-tam-formal} apply here.
The rgTAM can be thought of as the aTAM where every system (rgTAS) in the rgTAM has the properties that $\tau = 1$ and glues may have strengths $-1, 0,$ or $1$. A system in the rgTAM is defined as an ordered pair $(T, \sigma)$ where $T$ is a set of tile types, and $\sigma$ is a stable seed assembly.

An assembly sequence in an rgTAS $\mathcal{T}$ is a (finite or infinite) sequence $\vec{\alpha} = (\alpha_0,\alpha_1,\ldots)$ of assemblies in which each $\alpha_{i+1}$ is obtained from $\alpha_i$
in one of two ways. First, $\alpha_{i+1}$ can obtained from $\alpha_i$ by the addition of a single tile such that the sum of the strengths of bound glues of this single tile in $\alpha_{i+1}$ is $\geq 1$. Second, $\alpha_{i+1}$ can obtained from $\alpha_i$ if $\alpha_{i+1}$ lies on one side of a cut of $\alpha_{i}$ such that the strength of this cut is $\leq 0$. Unlike an assembly sequence in the aTAM, assembly sequences in the rgTAM may not have a unique limiting assembly, and therefore, may not have a result. However, given an assembly $\alpha$ in an rgTAS, and an assembly sequence $\vec{\alpha}$ if the limit of $\vec{\alpha}$ is $\alpha$, then we say that the \emph{result} (denoted $\res{\vec{\alpha}}$) of $\vec{\alpha}$ is $\alpha$. The notations used in Section~\ref{sec-tam-formal} apply to the rgTAM. In addition to these notations, we distinguish between tile attachment and assemblies produced by a cut of strength $\leq 0$ as follows.

\begin{eqnarray*}
\alpha \rightarrow_+ \alpha' & \textmd{iff} & \alpha' \textmd{ is obtained from } \alpha \textmd{ by a single stable tile addition } \\
\alpha \rightarrow_- (\alpha'_1, \alpha'_2) & \textmd{iff} & \alpha'_1 \textmd{ and } \alpha'_2 \textmd{ lie on each side of a cut of } \alpha \textmd{ such that the } \\
&&\textmd{ strength of this cut is } \leq 0
\end{eqnarray*}

\subsection{Formal description of the Dupled restricted glues Tile Assembly Model}
\label{sec-Drgtam-formal}

This section gives a formal definition of the Dupled restricted glues Tile Assembly Model (DrgTAM).
First, we define the dupled aTAM (DaTAM), which is a mild extension of Winfree's abstract tile assembly model \cite{Winf98}. Then, as in~\ref{sec-rgtam-formal}, we define the DrgTAM by restricting temperature to $1$ and glues strengths to $-1$, $0$, or $1$.

Given $V \subseteq \Z^2$, the \emph{full grid graph} of $V$ is the undirected graph $\fullgridgraph_V=(V,E)$,
and for all $\vec{x}, \vec{y}\in V$, $\left\{\vec{x},\vec{y}\right\} \in E \iff \| \vec{x} - \vec{y}\| = 1$; i.e., if and only if $\vec{x}$ and $\vec{y}$ are adjacent on the $2$-dimensional integer Cartesian space. Fix an alphabet $\Sigma$. $\Sigma^*$ is the set of finite strings over $\Sigma$. Let $\Z$, $\Z^+$, and $\N$ denote the set of integers, positive integers, and nonnegative integers, respectively.

A \emph{square tile type} is a tuple $t \in (\Sigma^* \times \N)^{4}$; i.e. a unit square, with four sides, listed in some standardized order, and each side having a \emph{glue} $g \in \Sigma^* \times \N$ consisting of a finite string \emph{label} and nonnegative integer \emph{strength}. Let $T\subseteq (\Sigma^* \times \N)^{4}$ be a set of tile types. We define a set of \emph{singleton types} to be any subset $S \subseteq T$. Let $t = ((g_N,s_N),(g_S,s_S),(g_E,s_E),(g_W,s_W)) \in T$, $d\in \{N,S,E,W\} = \mathcal{D}$, and write $Glue_d(t) = g_d$ and $Strength_d(t) = s_d$. A \emph{duple type} is defined as an element of the set
$\{ (x,y,d) \mid x,y\in T, \; d\in\mathcal{D}, \; Glue_d(x) = Glue_{-d}(y), \; \textmd{ and }Strength_d(x)=Strength_{-d}(y)\geq\tau \}$.

A {\em configuration} is a (possibly empty) arrangement of tiles on the integer lattice $\Z^2$, i.e., a partial function $\alpha:\Z^2 \dashrightarrow T$.
Two adjacent tiles in a configuration \emph{interact}, or are \emph{attached}, if the glues on their abutting sides are equal (in both label and strength) and have positive strength.
Each configuration $\alpha$ induces a \emph{binding graph} $\bindinggraph_\alpha$, a grid graph whose vertices are positions occupied by tiles, according to $\alpha$, with an edge between two vertices if the tiles at those vertices interact. An \emph{assembly} is a connected, non-empty configuration, i.e., a partial function $\alpha:\Z^2 \dashrightarrow T$ such that $\fullgridgraph_{\dom \alpha}$ is connected and $\dom \alpha \neq \emptyset$. The \emph{shape} $S_\alpha \subseteq \Z^d$ of $\alpha$ is $\dom \alpha$. Let $\alpha$ be an assembly and $B \subseteq
\mathbb{Z}^2$. $\alpha$ \emph{restricted to} $B$, written as $\alpha
\upharpoonright B$, is the unique assembly satisfying $\left(\alpha
\upharpoonright B\right) \sqsubseteq \alpha$, and $\dom{\left(\alpha
\upharpoonright B\right)} = B$

Given $\tau\in\Z^+$, $\alpha$ is \emph{$\tau$-stable} if every cut of~$\bindinggraph_\alpha$ has weight at least $\tau$, where the weight of an edge is the strength of the glue it represents. When $\tau$ is clear from context, we say $\alpha$ is \emph{stable}.
Given two assemblies $\alpha,\beta$, we say $\alpha$ is a \emph{subassembly} of $\beta$, and we write $\alpha \sqsubseteq \beta$, if $S_\alpha \subseteq S_\beta$ and, for all points $p \in S_\alpha$, $\alpha(p) = \beta(p)$. Let $\mathcal{A}^T$ denote the set of all assemblies of tiles from $T$, and let $\mathcal{A}^T_{< \infty}$ denote the set of finite assemblies of tiles from $T$.

A \emph{dupled tile assembly system} (DTAS) is a tuple $\mathcal{T} = (T,S,D,\sigma,\tau)$, where $T$ is a finite tile set, $S \subseteq T$ is a finite set of singleton types, $D$ is a finite set of duple tile types, $\sigma:\Z^2 \dashrightarrow T$ is the finite, $\tau$-stable, \emph{seed assembly}, and $\tau\in\Z^+$ is the \emph{temperature}.

Given two $\tau$-stable assemblies $\alpha,\beta$, we write $\alpha \to_1^{\mathcal{T}} \beta$ if $\alpha \sqsubseteq \beta$, $0 < |S_{\beta} \setminus S_{\alpha}| \leq 2$. In this case we say $\alpha$ \emph{$\mathcal{T}$-produces $\beta$ in one step}. The \emph{$\mathcal{T}$-frontier} of $\alpha$ is the set $\partial^\mathcal{T} \alpha = \bigcup_{\alpha \to_1^\mathcal{T} \beta} S_{\beta} \setminus S_{\alpha}$, the set of empty locations at which a tile could stably attach to $\alpha$.

A sequence of $k\in\Z^+ \cup \{\infty\}$ assemblies $\alpha_0,\alpha_1,\ldots$ over $\mathcal{A}^T$ is a \emph{$\mathcal{T}$-assembly sequence} if, for all $1 \leq i < k$, $\alpha_{i-1} \to_1^\mathcal{T} \alpha_{i}$.
The {\em result} of an assembly sequence is the unique limiting assembly (for a finite sequence, this is the final assembly in the sequence).
If $\vec{\alpha} = (\alpha_0,\alpha_1,\ldots)$ is an assembly sequence in $\mathcal{T}$ and $\vec{m} \in \mathbb{Z}^2$, then the $\vec{\alpha}$\emph{-index} of $\vec{m}$ is $i_{\vec{\alpha}}(\vec{m}) = $min$\{ i \in \mathbb{N} | \vec{m} \in \dom \alpha_i\}$.  That is, the $\vec{\alpha}$-index of $\vec{m}$ is the time at which any tile is first placed at location $\vec{m}$ by $\vec{\alpha}$.  For each location $\vec{m} \in \bigcup_{0 \leq i \leq l} \dom \alpha_i$, define the set of its input sides IN$^{\vec{\alpha}}(\vec{m}) = \{\vec{u} \in U_2 | \mbox{str}_{\alpha_{i_{\alpha}}(\vec{m})}(\vec{u}) > 0 \}$.

We write $\alpha \to^\mathcal{T} \beta$, and we say $\alpha$ \emph{$\mathcal{T}$-produces} $\beta$ (in 0 or more steps) if there is a $\mathcal{T}$-assembly sequence $\alpha_0,\alpha_1,\ldots$ of length $k$ such that
(1) $\alpha = \alpha_0$,
(2) $S_\beta = \bigcup_{0 \leq i < k} S_{\alpha_i}$, and
(3) for all $0 \leq i < k$, $\alpha_{i} \sqsubseteq \beta$.
If $k$ is finite then it is routine to verify that $\beta = \alpha_{k-1}$.

We say $\alpha$ is \emph{$\mathcal{T}$-producible} if $\sigma \to^\mathcal{T} \alpha$, and we write $\prodasm{\mathcal{T}}$ to denote the set of $\mathcal{T}$-producible assemblies. An assembly $\alpha$ is \emph{$\mathcal{T}$-terminal} if $\alpha$ is $\tau$-stable and $\partial^\mathcal{T} \alpha=\emptyset$.
We write $\termasm{\mathcal{T}} \subseteq \prodasm{\mathcal{T}}$ to denote the set of $\mathcal{T}$-producible, $\mathcal{T}$-terminal assemblies. If $|\termasm{\mathcal{T}}| = 1$ then  $\mathcal{T}$ is said to be {\em directed}.

We say that a DTAS $\mathcal{T}$ \emph{strictly (a.k.a. uniquely) self-assembles} a shape $X \subseteq \Z^2$ if, for all $\alpha \in \termasm{\mathcal{T}}$, $S_{\alpha} = X$; i.e., if every terminal assembly produced by $\mathcal{T}$ places tiles on -- and only on -- points in the set $X$.

Now, the DrgTAM is defined a in Section~\ref{sec-rgtam-formal} and a DrgTAS is defined to be a system in the DrgTAM. Note that the glue binding two tiles that form a duple must have strength $1$, and the glues exposed by a duple may have strength $-1$, $0$, or $1$. Also notice that for an assembly $\alpha$ in a DrgTAS, a cut of strength $\leq 0$ may separate two nodes of the grid graph that correspond to two tiles of a duple. Then, the two producible assemblies on each side of this cut each contain one tile from the duple.

} 

\vspace{-10pt}
\subsection{Informal Definitions of Simulation}
\label{sec:simulation_def_informal}
\vspace{-5pt}

In this section, we present a high-level sketch of what we mean when saying that one system \emph{simulates} another.  Please see Section~\ref{sec:simulation_def_formal} for complete, technical definitions, which are based on those of \cite{IUNeedsCoop}.

For one system $\mathcal{S}$ to simulate another system $\mathcal{T}$, we allow $\mathcal{S}$ to use square (or rectangular when simulating duples) blocks of tiles called \emph{macrotiles} to represent the simulated tiles from $\mathcal{T}$.  The simulator must provide a scaling factor $c$ which specifies how large each macrotile is, and it must provide a \emph{representation function}, which is a function mapping each macrotile assembled in $\mathcal{S}$ to a tile in $\mathcal{T}$.  Since a macrotile may have to grow to some critical size (e.g. when gathering information from adjacent macrotiles about the simulated glues adjacent to its location) before being able to compute its identity (i.e. which tile from $\mathcal{T}$ it represents), it's possible for non-empty macrotile locations in $\mathcal{S}$ to map to empty locations in $\mathcal{T}$, and we call such growth \emph{fuzz}.  We follow the standard simulation definitions (see \cite{IUNeedsCoop,2HAMIU,Signals3D,IUSA}), and restrict fuzz to being laterally or vertically adjacent to macrotile positions in $\mathcal{S}$ which map to non-empty tiles in $\mathcal{T}$.

Given the notion of block representations, we say that $\mathcal{S}$ simulates $\mathcal{T}$ if and only if (1) for every producible assembly in $\mathcal{T}$, there is an equivalent producible assembly in $\mathcal{S}$ when the representation function is applied, and vice versa (thus we say the systems have \emph{equivalent productions}), and (2) for every assembly sequence in $\mathcal{T}$, the exactly equivalent assembly sequence can be followed in $\mathcal{S}$ (modulo the application of the representation function), and vice versa (thus we say the systems have \emph{equivalent dynamics}).  Thus, equivalent production and equivalent dynamics yield a valid simulation.

\newcommand{\REPL}{\mathsf{REPR}}
\newcommand{\frakC}{\mathfrak{C}}

We say that a tile set $U$ is \emph{intrinsically universal} for a class $\frakC$ of tile assembly systems if, for every system $\calT \in \frakC$ a system $\mathcal{U}_{\mathcal{T}}$ can be created for which: 1. $U$ is the tile set, 2. there is some initial seed assembly consisting of tiles in $U$ which is constructed to encode information about the system $\calT$ being simulated, 3. there exists a representation function $R$ which maps macrotiles in the simulator $\mathcal{U}_{\mathcal{T}}$  to tiles in the simulated system, and 4. under $R$, $\mathcal{U}_{\mathcal{T}}$  has equivalent productions and equivalent dynamics to $\calT$.  Essentially, there is one tile set which can be used to simulate any system in the class, using only custom configured input seed assemblies.  For formal definitions of intrinsic universality in tile assembly, see \cite{IUSA,IUNeedsCoop,Signals3D}.

\ifabstract
\later{
\section{Formal Definitions of Simulation}
\label{sec:simulation_def_formal}

In this section we formally define what it means for an rgTAS to simulate a TAS and a what it means for a DrgTAS to simuate a TAS.

From this point on, let $T$ be a tile set, and let $m\in\Z^+$.
An \emph{$m$-block supertile} or \emph{macrotile} over $T$ is a partial function $\alpha : \Z_m^2 \dashrightarrow T$, where $\Z_m = \{0,1,\ldots,m-1\}$.
Let $B^T_m$ be the set of all $m$-block supertiles over $T$.
The $m$-block with no domain is said to be $\emph{empty}$.
For a general assembly $\alpha:\Z^2 \dashrightarrow T$ and $(x_0, x_1)\in\Z^d$, define $\alpha^m_{x_0,x_1}$ to be the $m$-block supertile defined by $\alpha^m_{x_0, x_1}(i_0, i_1) = \alpha(mx_0+i_0, mx_1+i_1)$ for $0 \leq i_0,i_1< m$.

For some tile set $S$, a partial function $R: B^{S}_m \dashrightarrow T$ is said to be a \emph{valid $m$-block supertile representation} from $S$ to $T$ if for any $\alpha,\beta \in B^{S}_m$ such that $\alpha \sqsubseteq \beta$ and $\alpha \in \dom R$, then $R(\alpha) = R(\beta)$.

For a given valid $m$-block supertile representation function $R$ from tile set~$S$ to tile set $T$, define the \emph{assembly representation function}\footnote{Note that $R^*$ is a total function since every assembly of $S$ represents \emph{some} assembly of~$T$; the functions $R$ and $\alpha$ are partial to allow undefined points to represent empty space.}  $R^*: \mathcal{A}^{S} \rightarrow \mathcal{A}^T$ such that $R^*(\alpha') = \alpha$ if and only if $\alpha(x_0, x_1) = R\left(\alpha'^m_{x_0,x_1}\right)$ for all $(x_0,x_1) \in \Z^2$.
For an assembly $\alpha' \in \mathcal{A}^{S}$ such that $R(\alpha') = \alpha$, $\alpha'$ is said to map \emph{cleanly} to $\alpha \in \mathcal{A}^T$ under $R^*$ if for all non empty blocks $\alpha'^m_{x_0,x_1}$, $(x_0,x_1)+(u_0,u_1) \in \dom \alpha$ for some $u_0,u_1 \in U_2$ such that $u_0^2 + u_1^2 \leq 1$, or if $\alpha'$ has at most one non-empty $m$-block~$\alpha^m_{0, 0}$.  In other words, $\alpha'$ may have tiles on supertile blocks representing empty space in $\alpha$, but only if that position is adjacent to a tile in $\alpha$.  We call such growth ``around the edges'' of $\alpha'$ \emph{fuzz} and thus restrict it to be adjacent to only valid supertiles, but not diagonally adjacent (i.e.\ we do not permit \emph{diagonal fuzz}).

\subsection{rgTAS simulation of a TAS}\label{sec:rgTASsimTAS-formal}

To state our main results, we must formally define what it means for an rgTAS to ``simulate'' a TAS.  Our definitions are similar to the definitions of simulation of a TAS by a TAS given in \cite{IUNeedsCoop}.

In the following definitions, let $\mathcal{T} = \left(T,\sigma_T,\tau_T\right)$ be a TAS, let $\mathcal{S} = \left(S,\sigma_S,\tau_S\right)$ be a TAS, and let $R$ be an $m$-block representation function $R:B^S_m \rightarrow T$.

\begin{definition}
\label{def-equiv-prod} We say that $\mathcal{S}$ and $\mathcal{T}$ have \emph{equivalent productions} (under $R$), and we write $\mathcal{S} \Leftrightarrow \mathcal{T}$ if the following conditions hold:
\begin{enumerate}
        \item $\left\{R^*(\alpha') | \alpha' \in \prodasm{\mathcal{S}}\right\} = \prodasm{\mathcal{T}}$.
        \item For all $\alpha'\in \prodasm{\mathcal{S}}$, $\alpha'$ maps cleanly to $R^*(\alpha')$.
\end{enumerate}
\end{definition}

\begin{definition}
\label{def-t-follows-s} We say that $\mathcal{T}$ \emph{follows} $\mathcal{S}$ (under $R$), and we write $\mathcal{T} \dashv_R \mathcal{S}$ if
(1) $\alpha' \rightarrow_{+}^\mathcal{S} \beta'$, for some $\alpha',\beta' \in \prodasm{\mathcal{S}}$, implies that $R^*(\alpha') \to^\mathcal{T} R^*(\beta')$, and
(2) $\alpha' \rightarrow_{-}^\mathcal{S} (\beta'_1, \beta'_2)$ for some $\alpha',\beta'_1,\beta'_2 \in \prodasm{\mathcal{S}}$, implies that either of the following holds.
\begin{enumerate}
	\item[(a)] $\sigma_S\subseteq \beta'_1$ and $R^*(\alpha') = R^*(\beta'_1)$, and there is some $\gamma \in B^S_m$ such that $\beta'_2 \subseteq \gamma$ and $\gamma \not\in \dom R$, and moreover, if $\beta'_2 \rightarrow^\mathcal{S} \beta''_2$ for some $\beta''_2\in \prodasm{\mathcal{S}}$, then there is some $\gamma' \in B^S_m$ such that $\beta''_2 \subseteq \gamma'$ and $\gamma' \not\in \dom R$.
	\item[(b)] $\sigma_S\subseteq \beta'_2$ and $R^*(\alpha') = R^*(\beta'_2)$, and there is some $\gamma \in B^S_m$ such that $\beta'_1 \subseteq \gamma$ and $\gamma \not\in \dom R$, and moreover, if $\beta'_1 \rightarrow^\mathcal{S} \beta''_1$ for some $\beta''_1\in \prodasm{\mathcal{S}}$, then there is some $\gamma' \in B^S_m$ such that $\beta''_1 \subseteq \gamma'$ and $\gamma' \not\in \dom R$.
\end{enumerate}
\end{definition}

Condition (2) in the definition above says that when a cut is made to an assembly $\alpha'\in\prodasm{S}$ that represents $\alpha\in\prodasm{T}$, the two assemblies that are produced are such that one of the assemblies, $\beta'_1$ say, still represents $\alpha$ and is identifiable by the fact that it contains the seed $\sigma_S$, while the other assembly, $\beta'_2$, represents the empty tile. In addition, the result of any assembly sequence starting from $\beta'_2$ must also represent the empty tile. Informally, ``junk'' that falls off of an assembly during simulation must represent the empty tile and cannot grow into anything other than an assembly that represents the empty tile.

\begin{definition}
\label{def-s-models-t} We say that $\mathcal{S}$ \emph{models} $\mathcal{T}$ (under $R$), and we write $\mathcal{S} \models_R \mathcal{T}$, if for every $\alpha \in \prodasm{\mathcal{T}}$, there exists $\Pi \subset \prodasm{\mathcal{S}}$ where $R^*(\alpha') = \alpha$ for all $\alpha' \in \Pi$, such that, for every $\beta \in \prodasm{\mathcal{T}}$ where $\alpha \rightarrow^\mathcal{T} \beta$, (1) for every $\alpha' \in \Pi$ there exists $\beta' \in \prodasm{\mathcal{S}}$ where $R^*(\beta') = \beta$ and $\alpha' \rightarrow^\mathcal{S} \beta'$, and (2) for every $\alpha'' \in \prodasm{\mathcal{S}}$ where $\alpha'' \rightarrow^\mathcal{S} \beta'$, $\beta' \in \prodasm{\mathcal{S}}$, $R^*(\alpha'') = \alpha$, and $R^*(\beta') = \beta$, there exists $\alpha' \in \Pi$ such that $\alpha' \rightarrow^\mathcal{S} \alpha''$.
\end{definition}

The previous definition essentially specifies that every time $\mathcal{S}$ simulates an assembly $\alpha \in \prodasm{\mathcal{T}}$, there must be at least one valid growth path in $\mathcal{S}$ for each of the possible next steps that $\mathcal{T}$ could make from $\alpha$ which results in an assembly in $\mathcal{S}$ that maps to that next step.

\begin{definition}
\label{def-s-simulates-t} We say that $\mathcal{S}$ \emph{simulates} $\mathcal{T}$ (under $R$) if $\mathcal{S} \Leftrightarrow_R \mathcal{T}$ (equivalent productions), $\mathcal{T} \dashv_R \mathcal{S}$ and $\mathcal{S} \models_R \mathcal{T}$ (equivalent dynamics).
\end{definition}

\subsection{Dupled rgTAS simulation of a TAS}\label{sec:DrgTASsimTAS-formal}

Here we formally define what it means for a DrgTAS to ``simulate'' a TAS.  The definition of a DrgTAS lends itself to a simulation definition statement that is equivalent to the definition of simulation for a TAS simulating another TAS. Therefore, our definitions come from \cite{IUNeedsCoop}.

Now let $\mathcal{T} = \left(T,\sigma_T,\tau_T\right)$ be a TAS, let $\mathcal{U} = \left(U,S,D,\sigma_U\right)$ be a DrgTAS, and let $R$ be an $m$-block representation function $R:B^U_m \rightarrow T$. Then we may define \emph{equivalent production}, \emph{follows}, and \emph{models} for $\mathcal{U}$ and $\mathcal{T}$ (under $R$) exactly as defined in Section~\ref{sec:rgTASsimTAS-formal} and therefore define simulation as follows.

\begin{definition}
\label{def-d-simulates-t} We say that $\mathcal{U}$ \emph{simulates} $\mathcal{T}$ (under $R$) if $\mathcal{U} \Leftrightarrow_R \mathcal{T}$ (equivalent productions), $\mathcal{T} \dashv_R \mathcal{U}$ and $\mathcal{U} \models_R \mathcal{T}$ (equivalent dynamics).
\end{definition}
} 

\vspace{-10pt}
\section{A temperature-$2$ aTAM system that cannot be simulated by any rgTAS}\label{rgTAS_cannot_coop}
\vspace{-10pt}

In this section we show that there exists a temperature-$2$ aTAM system that cannot be simulated by any rgTAM system. Here we give an overview of the TAS, $\mathcal{T}$, that we show cannot be simulated by any rgTAS, and an overview of the proof. For details of the proof, see Section~\ref{sec:negResProof} in the Appendix.

\vspace{-5pt}
\begin{theorem}\label{thm:rgTAScannotSIMaTAM}
There exists a temperature-$2$ aTAM system $\mathcal{T} = (T,\sigma, 2)$ such that $\mathcal{T}$ cannot be simulated by any rgTAS.
\end{theorem}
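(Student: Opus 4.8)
The plan is to exhibit a temperature-$2$ aTAM system $\mathcal{T}$ whose dynamics genuinely require additive glue cooperation, and then argue that no rgTAS $\mathcal{S}$ can reproduce both its productions and its dynamics under any representation function $R$ and any scale $m$. For $\mathcal{T}$, I would use a small ``cooperation gadget'': a seed from which two separate paths of tiles grow (say, one eastward arm carrying a bit-value tile and one northward arm carrying a bit-value tile), such that a designated tile $t^*$ can only attach at a particular location $p$ when \emph{both} arms have arrived there, because $t^*$ binds with two strength-$1$ glues, one from each arm, and neither glue alone suffices. Crucially, I want the arms to be designed so that for some combinations of bit-values no tile at all is placeable at $p$ (the glues don't match any tile), while for others $t^*$ (or a sibling) is placed; and I want the choice of which bit each arm carries to be made ``far away'' from $p$, so that in any simulation the macrotile at $p$ must receive information from both incoming macrotile arms before committing. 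This is exactly the ``strongly cooperative'' behavior that \cite{IUNeedsCoop} and \cite{Duples} formalize, so I would mirror their gadget.

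The core of the argument is a \emph{window-movie}-style pumping lemma adapted to the rgTAM. Suppose $\mathcal{S}$ simulates $\mathcal{T}$ under $R$ at scale $m$. Consider the macrotile location $M_p$ in $\mathcal{S}$ mapping to $p$. Before $M_p$ can map to the cooperative tile $t^*$, growth must have reached it; since $\mathcal{S}$ has temperature $1$, every tile that ever attaches does so with a single strength-$1$ bond (negative glues only subtract and cannot be the sole positive reason for an attachment). Trace back a path in the binding graph of $\mathcal{S}$ from the first tile of $M_p$ that ``knows'' it is $t^*$ to the seed $\sigma_S$. This path enters the macrotile region $M_p$ through one of its four sides — through \emph{one} neighboring macrotile, say the one corresponding to the west arm. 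I then want to show that the information from the \emph{other} arm (the north one) is irrelevant to whether that path can be completed: I can take a producible assembly of $\mathcal{S}$ in which the west arm has the correct bit but the north arm has a \emph{different} bit (so that in $\mathcal{T}$ nothing is placed at $p$), and splice in the same single-bond growth path into $M_p$, obtaining a producible assembly of $\mathcal{S}$ in which $M_p$ maps to $t^*$. This contradicts equivalent productions: $R^*$ of that assembly is not producible in $\mathcal{T}$. The splicing step is where negative glues must be handled with care — I need the spliced path to remain connected and $1$-stable and to not be severed by a strength-$\le 0$ cut; this is ensured because each new tile on the path contributes a $+1$ bond along the path, and I choose the path to be simple, so no cut through it has nonpositive strength. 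I would also invoke condition (2) of ``$\mathcal{T}\dashv_R\mathcal{S}$'': any detached ``junk'' must map to the empty tile and stay that way, so detachment cannot be exploited by $\mathcal{S}$ to realize the cooperative tile either.

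A subtlety I'd address explicitly: in $\mathcal{T}$ there may be \emph{several} incoming arms of interacting macrotiles, and the simulation could in principle have the path into $M_p$ depend on a previously-computed combined signal that both arms feed into at some earlier macrotile. To rule this out I design $\mathcal{T}$ so that the two arms are spatially separated all the way up to $p$ — they never pass through a common macrotile until they meet at $p$ — so any path from $\sigma_S$ into $M_p$ is forced to commit to entering through a single side and hence carries information from only one arm when it first makes $M_p$'s identity determined. Then the window separating $M_p$ together with that one neighbor from the rest of $\mathcal{S}$ is ``small'' in the window-movie sense, and the spliced assembly is producible.

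The main obstacle I anticipate is the detachment mechanism: unlike a pure temperature-$1$ aTAM, an rgTAS can break an assembly along a nonpositive cut, so a naive ``once a tile is placed it stays'' pumping argument fails, and one must argue that the simulator cannot use a detach-then-regrow dance inside $M_p$ to launder the two-arm information into a single-arm-looking path. Handling this carefully — using the ``tagged junk'' clauses of Definition~\ref{def-t-follows-s} to show detached pieces are useless, and showing that the \emph{reachable} part of $M_p$ (the side containing the macrotile seed/context) still only ever received information through one side — is the delicate heart of the proof, and is where I expect most of the length of the full argument in Section~\ref{sec:negResProof} to go.
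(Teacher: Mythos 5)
There is a genuine gap at the heart of your splicing step. Your argument hinges on the claim that, because the growth path that determines the identity of the macrotile $M_p$ enters through a single side, that identity ``carries information from only one arm,'' so the same path can be replayed in an assembly where the other arm carries the wrong bit. The window movie lemma does not give you this for free: to splice, you need the two assemblies' movies along the chosen window to be \emph{identical} (same glues, same positions, same order), and with only two scenarios to compare (right bit vs.\ wrong bit) you have no pigeonhole mechanism to force such a coincidence --- indeed the glue presented across the north edge of $M_p$ will typically differ with the bit, and even inside $M_p$ the other neighbor is allowed to grow fuzz that encodes its bit. In the rgTAM this is fatal rather than cosmetic: negative glues let a simulator ``read'' previously placed tiles via blocking and detach-and-regrow (this is exactly why the rgTAM is computationally universal at temperature~1), so the west path's ability to complete, and its stability, can legitimately depend on what the north side has deposited; your proposed spliced assembly may simply not be producible. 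You flag the detachment issue as the delicate heart but do not resolve it, and you also never establish that a window movie lemma holds in the rgTAM at all --- the standard lemma is proved for attachment-only sequences.

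The paper closes these holes differently. It reuses the fingers-and-flagpole system of \cite{IUNeedsCoop}, whose arms have nondeterministically chosen, arbitrarily long lengths; this supplies the missing pigeonhole: for large $d$ two vertical windows across one arm must have matching (restricted) movies, and splicing them yields a simulated assembly with mismatched arms yet a placed flag, far outside fuzz --- an immediate violation of equivalent productions. To make the window machinery legitimate with negative glues, the paper first proves that any valid simulation can be carried out by \emph{detachment-free} assembly sequences (Lemma~\ref{lem:stable_assembly-main}), so the aTAM-style window movie lemma applies, and then refines it to ``window-crossing submovies'' together with a minimal-cut ($c_{\min}$) rewiring argument that controls negative-glue interactions across the window near the keystone/flagpole region. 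Your proposal would need analogues of all three ingredients --- a detachment-free reduction, a negative-glue-aware window movie lemma, and a source of matching movies (e.g.\ pumpable arms) --- before the splice producing the contradiction is justified.
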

\vspace{-5pt}

\begin{figure}[htb]
\begin{center}
\includegraphics[width=3.5in]{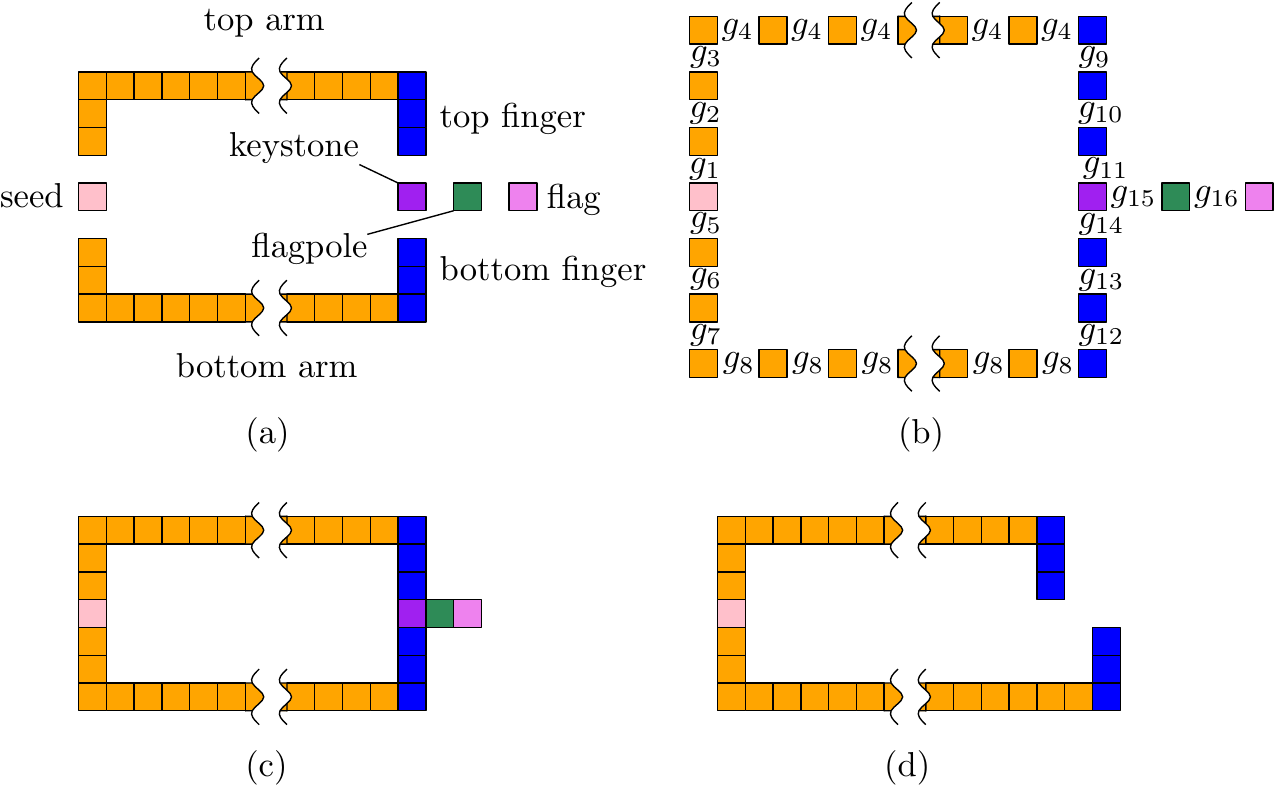}
\caption{(Figure taken from \cite{IUNeedsCoop}) (a) An overview of the tile assembly system $\mathcal{T} = (T,\sigma,2)$.~$\mathcal{T}$ runs at temperature 2 and its tile set $T$ consists of 18 tiles. (b) The glues used in the tileset $T$. Glues $g_{11}$ and $g_{14}$ are strength 1, all other glues are strength~2.  Thus the keystone tile binds with two ``cooperative'' strength~1 glues. Growth begins from the pink seed tile $\sigma$: the top and bottom arms are one tile wide and grow to arbitrary, nondeterministically chosen, lengths. Two blue figures grow as shown. (c) If the fingers happen to meet then the keystone, flagpole and flag tiles are placed, (d) if the fingers do not meet then growth terminates at the finger ``tips''.}
\label{fig:fingerFlagpole_overview}
\end{center}
\vspace{-25pt}
\end{figure}

Let $\mathcal{T} = (T, \sigma, 2)$ denote the system with $T$ and $\sigma$ given in Figure~\ref{fig:fingerFlagpole_overview}. The glues in the various tiles are all unique with the exception of the common east-west glue type used within each arm to induce non-deterministic and independent arm lengths. Glues are shown in part (b) of Figure~\ref{fig:fingerFlagpole_overview}.
Note that cooperative binding happens at most once during growth, when attaching the keystone tile to two arms of identical length. All other binding events are noncooperative and all glues are strength $2$ except for $g_{11}, g_{14}$ which are strength $1$.

The TAS $\mathcal{T}$ was used in~\cite{IUNeedsCoop} to show that there is a temperature-$2$ aTAM system that cannot be simulated by a temperature-$1$ aTAM system. To prove that there is no rgTAS that simulates $\mathcal{T}$, we use a similar proof to the proof for aTAM systems, however, we must take special care to show that allowing for a single negative glue does not give enough strength to the model to allow for simulation of cooperative glue binding.

The proof is by contradiction. Suppose that $\mathcal{S} = (S,\sigma_S)$ is an rgTAS that simulates $\mathcal{T}$. We call an assembly sequence $\vec{\alpha} = (\alpha_0, \alpha_1, \dots)$ in an rgTAS \emph{detachment free} if for all $i\geq0$, $\alpha_{i+1}$ is obtained from $\alpha_i$ by the stable attachment of a single tile. The following lemma gives sufficient conditions for the existence of a detachment free assembly sequence.

\vspace{-5pt}
\begin{lemma}\label{lem:stable_assembly-main}
Let $\mathcal{S} = (S, \sigma_S)$ be an rgTAS and let $\alpha\in \prodasm{S}$ be a finite stable assembly. Furthermore, let $\beta$ be a stable subassembly of $\alpha$. Then there exists a detachment free assembly sequence $\vec{\alpha} = (\alpha_1, \alpha_2, \dots, \alpha_{n})$ such that $\alpha_1 = \beta$, and $\alpha_n=\alpha$.
\end{lemma}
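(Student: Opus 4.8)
The plan is to build the detachment-free assembly sequence greedily, starting from the stable subassembly $\beta$ and at each step attaching a single tile of $\alpha$ that can bind stably to the current partial assembly, never removing anything. Formally, I would construct a chain $\beta = \alpha_1 \sqsubseteq \alpha_2 \sqsubseteq \cdots$ of subassemblies of $\alpha$, where each $\alpha_{i+1}$ is obtained from $\alpha_i$ by adding one tile from $\dom\alpha \setminus \dom\alpha_i$ whose glues shared with $\alpha_i$ (equivalently, with $\alpha$) sum to strength $\geq 1$. Since $\dom\alpha$ is finite, if this process never gets stuck before reaching $\alpha$ we are done, so the entire content of the lemma is the claim that we never get stuck: as long as $\alpha_i \neq \alpha$, there is always a tile of $\alpha$ adjacent to $\alpha_i$ that attaches with positive strength.

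**The key step: why we never get stuck.**

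Suppose for contradiction that $\alpha_i \sqsubsetneq \alpha$ but no tile of $\alpha$ outside $\alpha_i$ can attach to $\alpha_i$ with strength $\geq 1$. Consider the cut of the binding graph of $\alpha$ separating $\dom\alpha_i$ from $\dom\alpha \setminus \dom\alpha_i$. I want to argue this cut has strength $\leq 0$, contradicting the $1$-stability of $\alpha$. The cut strength is the sum, over all adjacent pairs $(p,q)$ with $p\in\dom\alpha_i$ and $q\notin\dom\alpha_i$, of the glue strength between $\alpha(p)$ and $\alpha(q)$; in the rgTAM these individual strengths are in $\{-1,0,1\}$. Group these edges by the location $q$. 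For each such location $q$, the tile $\alpha(q)$ is adjacent to $\alpha_i$, and the sum of its glue strengths with $\alpha_i$ is exactly its ``binding strength'' to $\alpha_i$; by the stuck assumption this sum is $\leq 0$ for every such $q$. (Here I use connectivity of $\dom\alpha$ via the full grid graph to know at least one such $q$ exists whenever $\alpha_i \neq \alpha$, and that every $q \notin \dom\alpha_i$ reachable in $\alpha$ is eventually adjacent to some grown region — but actually I only need: the frontier is nonempty, which follows from connectivity.) Summing over all frontier locations $q$ gives total cut strength $\leq 0$, contradicting $\tau = 1$ stability of $\alpha$. Hence some frontier tile attaches with strength $\geq 1$, and the greedy process continues.

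**Remaining bookkeeping and the main obstacle.**

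The remaining points are routine: each $\alpha_i$ is a valid producible assembly of $\mathcal{S}$ because $\beta$ is producible (being a stable subassembly of the producible $\alpha$ — I should check the excerpt's production definition allows using $\beta$ as a starting point, or alternatively prepend a sequence producing $\beta$ from $\sigma_S$ and note that tile \emph{attachments} only are used since $\beta$ is stable; in fact the lemma as stated only asks for a sequence from $\beta$ to $\alpha$, so producibility of $\beta$ is assumed implicitly via $\beta \sqsubseteq \alpha \in \prodasm{S}$ and is not strictly needed for the combinatorial claim), each $\alpha_i$ is $1$-stable because it is a subassembly of the stable $\alpha$ cut along... wait — this is the subtle point: a subassembly of a stable assembly need not be stable in general. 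However, I only ever need that each $\alpha_{i+1}$ is obtained by a \emph{stable tile addition}, i.e., that the newly added tile binds with strength $\geq 1$; I do \emph{not} need $\alpha_i$ itself to be $1$-stable as a standalone assembly for the sequence to be ``detachment free'' in the sense defined (detachment-free just means every step is a single-tile attachment). So I can sidestep stability of intermediate assemblies entirely.

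The main obstacle I anticipate is precisely this: making sure the argument that ``stuck implies a nonpositive cut of $\alpha$'' is airtight, in particular that the edges in the binding graph of $\alpha$ crossing the $(\alpha_i, \text{rest})$ partition are \emph{exactly} accounted for by summing the attachment strengths of frontier tiles, with no double-counting and no missed negative-glue contributions. The negative glues are the whole reason the rgTAM-specific care is needed: a frontier tile might have a positive glue to $\alpha_i$ but also a negative glue to $\alpha_i$ netting $\leq 0$, and such a tile genuinely cannot be added yet — but then those same edges, positive and negative, appear in the cut and their net contribution is $\leq 0$, so they only help drive the cut strength down. So the bound goes through. I would also remark that finiteness of $\alpha$ guarantees termination of the greedy process at $\alpha$ itself (the sequence has length $n = |\dom\alpha| - |\dom\beta| + 1$).
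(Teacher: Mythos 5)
Your proof is correct and follows essentially the same route as the paper's: the paper fixes a maximal subassembly $\gamma$ reachable from $\beta$ by attachments and uses $1$-stability of $\alpha$ (positivity of the cut around a connected component of $\alpha\setminus\gamma$) to find a tile that stably attaches, which is the contrapositive of your ``stuck implies a cut of strength $\leq 0$'' argument. Your version of the counting (grouping the cut edges by frontier location, integrality giving $\leq 0$, negative glues only lowering the cut strength) is the same key idea, stated if anything slightly more cleanly since it avoids the connected-component detour, and your observation that intermediate assemblies need not themselves be stable does not affect the lemma as stated.
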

\vspace{-10pt}

\begin{figure}[htp]
\begin{center}
\includegraphics[width=4.7in]{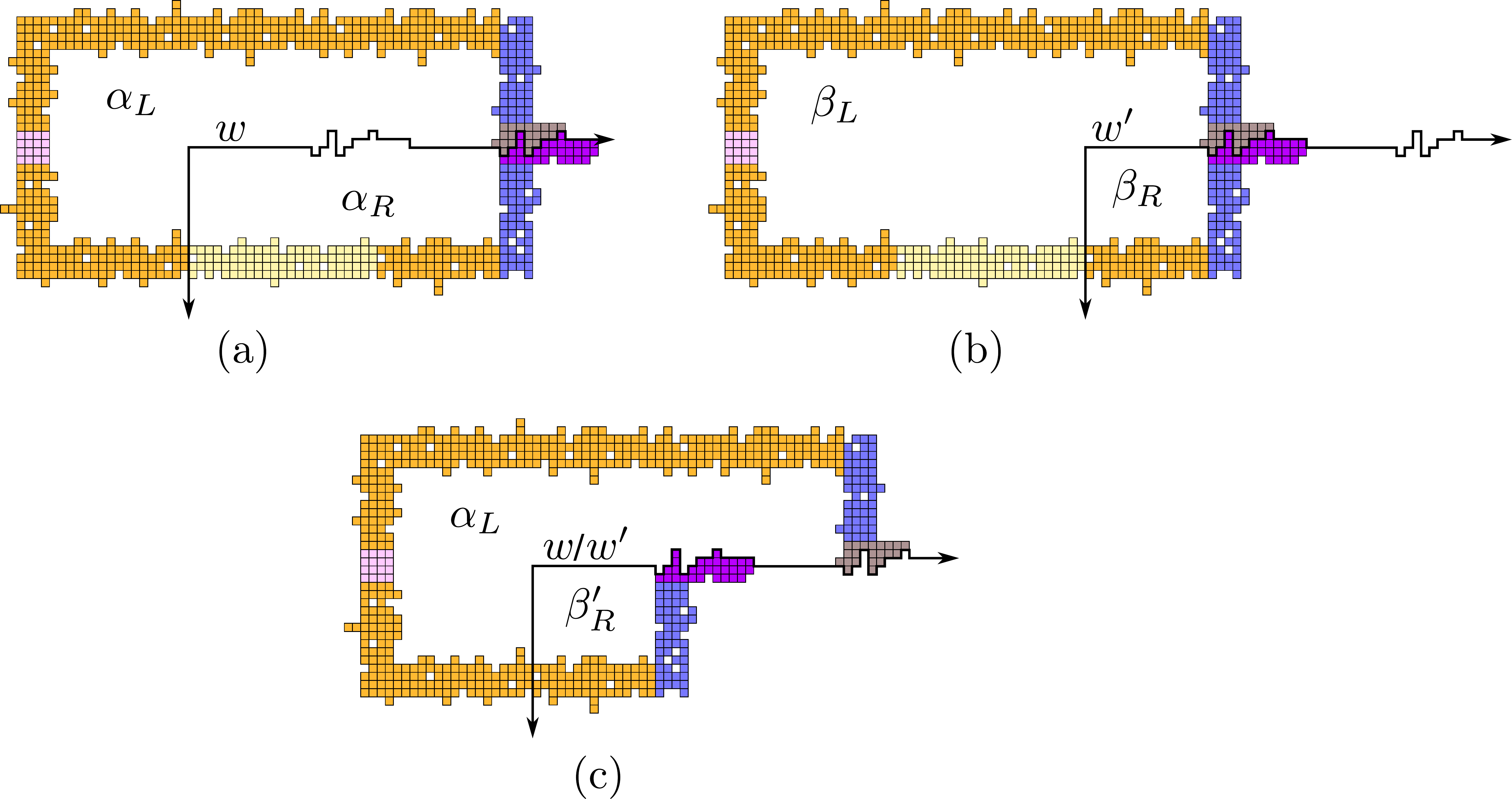}
\caption{An example assembly formed by $S$ simulating $\mathcal{T}$ -- (a) and (b), and the resulting producible assembly (c) constructed via a ``splicing'' technique that uses the window movie lemma. The assembly in (c) shows that $\mathcal{S}$ is incapable of valid simulation of $\mathcal{T}$.}
\label{fig:bad_sim_overview-main}
\end{center}
\vspace{-25pt}
\end{figure}

A corollary of this lemma is that if an rgTAS gives a valid simulation of $\mathcal{T}$, it can do so using detachment free assembly sequences. Using detachment free assembly sequences, it is possible to use a technique for ``splicing'' subassemblies of producible assemblies of $\mathcal{S}$.
This technique uses a lemma referred to as the ``window movie lemma''. For aTAM systems, this lemma is shown in~\cite{IUNeedsCoop} (Lemma 3.1). We give a version of the window movie lemma that holds for detachment free assembly sequences. See Section~\ref{sec:negResProof} for the formal definitions of windows and window movies, and for a formal statement of the window movie lemma that we use.  Figure~\ref{fig:bad_sim_overview-main} gives a depiction of this splicing technique. Here we use this lemma for detachment free assembly in the rgTAM. Then, using this splicing technique, we show that if $\mathcal{S}$ can simulate $\mathcal{T}$, it can also produce assemblies that violate the definition of simulation. In other words, we arrive at our contradiction and conclude that there is no rgTAS that can simulate $\mathcal{T}$.

\ifabstract
\later{

\section{Proof of Theorem~\ref{thm:rgTAScannotSIMaTAM}}\label{sec:negResProof}

Before we prove Theorem~\ref{thm:rgTAScannotSIMaTAM} we will give necessary conditions for any rgTAS system that can simulate $\mathcal{T}$. Let $\mathcal{S} = (S,\sigma_S)$ denote any rgTAS that simulates $\mathcal{T}$. We call an assembly sequence $\vec{\alpha} = (\alpha_0, \alpha_1, \dots)$ in an rgTAS \emph{detachment free} if for all $i\geq0$, $\alpha_{i+1}$ is obtained from $\alpha_i$ by the stable attachment of a single tile. The following lemma gives sufficient conditions for the existence of a detachment free assembly sequence.

\begin{lemma}\label{lem:stable_assembly}
Let $\mathcal{S} = (S, \sigma_S)$ be an rgTAS and let $\alpha\in \prodasm{S}$ be a finite stable assembly. Furthermore, let $\beta$ be a stable subassembly of $\alpha$. Then there exists a detachment free assembly sequence $\vec{\alpha} = (\alpha_1, \alpha_2, \dots, \alpha_{n})$ such that $\alpha_1 = \beta$, and $\alpha_n=\alpha$.
\end{lemma}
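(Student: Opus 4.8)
The plan is to prove the lemma by a straightforward greedy construction that builds $\alpha$ out from $\beta$ one tile at a time; the crux is a pigeonhole argument showing that at temperature~$1$ stability always guarantees \emph{some} legally attachable tile, even when some glues carry strength $-1$. Observe first that the statement is purely combinatorial: producibility of $\alpha$ (or of $\beta$) is never used, only that $\alpha$ is finite and $1$-stable and that $\beta\sqsubseteq\alpha$ is a (nonempty, grid-connected) subassembly.

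The key claim I would isolate is: for every assembly $\gamma$ with $\beta\sqsubseteq\gamma\sqsubseteq\alpha$ and $\gamma\neq\alpha$, there is a position $v\in\dom\alpha\setminus\dom\gamma$ with $\gamma\rightarrow_+^{\mathcal{S}}\alpha\upharpoonright(\dom\gamma\cup\{v\})$. To prove it, let $R=\dom\alpha\setminus\dom\gamma\neq\emptyset$ and consider the cut of $\alpha$ induced by the bipartition $(\dom\gamma,R)$; both blocks are nonempty because $\beta\neq\emptyset$, so $1$-stability of $\alpha$ gives this cut weight at least $1$ (where, as in the rgTAM, the weight sums the strengths of all matching glues crossing the cut, negative ones included). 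That weight equals $\sum_{v\in R}s_\gamma(v)$, where $s_\gamma(v)$ is the sum of the strengths of the glues between the tile at $v$ and its occupied neighbors inside $\dom\gamma$ --- exactly the strength with which $\alpha(v)$ would attach to $\gamma$. Since each $s_\gamma(v)$ is an integer and $\sum_{v\in R}s_\gamma(v)\geq 1$, some $v\in R$ has $s_\gamma(v)\geq 1$. For that $v$: $s_\gamma(v)\neq 0$ forces $v$ to be grid-adjacent to $\dom\gamma$, so $\alpha\upharpoonright(\dom\gamma\cup\{v\})$ is grid-connected and hence an assembly; and $\alpha(v)$ binds to $\gamma$ with strength $s_\gamma(v)\geq 1$, which is precisely the condition for a legal single-tile addition in the rgTAM.

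Granting the claim, I would finish as follows. Put $\alpha_1=\beta$; while $\alpha_i\neq\alpha$, apply the claim with $\gamma=\alpha_i$ to get $v_i$ and set $\alpha_{i+1}=\alpha\upharpoonright(\dom\alpha_i\cup\{v_i\})$, so that $\alpha_i\rightarrow_+^{\mathcal{S}}\alpha_{i+1}$, $\beta\sqsubseteq\alpha_{i+1}\sqsubseteq\alpha$, and $|\dom\alpha_{i+1}|=|\dom\alpha_i|+1$. Since $\alpha$ is finite this terminates after $n-1=|\dom\alpha|-|\dom\beta|$ steps at an $\alpha_n$ with $\dom\alpha_n=\dom\alpha$; as $\alpha_n\sqsubseteq\alpha$, this means $\alpha_n=\alpha$. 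Every step was a stable single-tile attachment, so $(\alpha_1,\dots,\alpha_n)$ is detachment free (and if $\beta=\alpha$ already, the one-term sequence $(\beta)$ suffices).

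The one delicate point --- and the only place I expect to have to be careful --- is the negative-glue bookkeeping in the key claim: we need the \emph{individual} attachment strength $s_\gamma(v)$ to be at least $1$, not merely the total cut weight, and this is exactly where integrality of glue strengths is invoked (a positive-integer sum has a positive-integer summand). It is also worth noting explicitly that the intermediate assemblies $\alpha_i$ need not themselves be $1$-stable: in the rgTAM a legal addition only constrains the newly placed tile, so no effort is needed to preserve stability along the sequence, and stability of the \emph{final} assembly $\alpha$ enters only through the cut bound used above.
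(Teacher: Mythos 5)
Your proof is correct and uses essentially the same argument as the paper: both rest on the observation that the cut separating the not-yet-placed tiles from the current subassembly has strength at least $1$ (by stability of $\alpha$), so by integrality some single tile attaches with net strength $\geq 1$, and this is iterated until $\alpha$ is reached. The only differences are cosmetic --- the paper phrases the iteration as an extremal/contradiction argument over a maximal reachable subassembly and works with a connected component of the remaining tiles, whereas you run a direct greedy induction using the full complement as one side of the cut.
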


\begin{proof}
Let $W$ be the set of of subassemblies of $\alpha$ such that $\eta\in W$ if and only if there exists an assembly sequence consisting of stable assemblies starting from $\beta$ with result $\eta$ that is detachment free. Note that since $\alpha$ is finite, $W$ is finite. Therefore, we can let $\gamma$ denote a subassembly of $W$ such that for any $\eta$ in $W$, $|\dom \gamma| \geq |\dom \eta|$. In other words, $\gamma$ is such that no other subassembly in $W$ has more tiles than $\gamma$. We will show that $\gamma = \alpha$.

For the sake of contradiction, assume that $\gamma \neq \alpha$. Then there is some tile of $\alpha$ that is not in $\gamma$. Consider the binding graph of $\alpha$ with nodes corresponding to tiles of $\gamma$ removed, and call the resulting graph $G$. Notice that a connected component (possibly with edges corresponding to the negative glue) of $G$ corresponds to a subassembly of tiles, $x$ say, in $\alpha$ such that no tile of $x$ is in $\gamma$. Now, since $\alpha$ is stable, the cut $c$ of the binding graph of $\alpha$ that separates $x$ from $\alpha$ must have strength greater than $0$. Since $x$ is taken to be a connected component of $G$, all of the edges defining the cut $c$ correspond to exposed glues of $\gamma$. Since the strength of these edges sum to a positive strength, at least one tile of $x$ can stably bind to $\gamma$ resulting in $\gamma'$ because at least on position must receive positive strength across the cut.  Note that $\gamma'$ is in $W$ since it is obtained from $\gamma$ by a single tile addition. Finally, the fact that $|\dom \gamma'| = |\dom \gamma| + 1 > |\dom \gamma|$, contradicts our choice of $\gamma$.

\end{proof}

The following lemma states that if an rgTAS gives a valid simulation of $\mathcal{T}$, it can do so using detachment free assembly sequences.

\begin{corollary}\label{cor:detachmentfree}
Let $\mathcal{S} = (S, \sigma_S)$ be an rgTAS that simulates $\mathcal{T}$ under $R$, and let $\alpha$ be in $\prodasm{T}$. Then there exists a stable assembly $\alpha'' \in \prodasm{S}$ and a detachment free assembly sequence $\vec{\alpha}$ starting from $\sigma_S$ with result $\alpha''$ such that $\alpha''$ represents $\alpha$ under $R$.
\end{corollary}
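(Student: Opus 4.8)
The plan is to feed a stable, seed-containing representative of $\alpha$ into Lemma~\ref{lem:stable_assembly}. It suffices to treat finite $\alpha$, which is the case needed for Theorem~\ref{thm:rgTAScannotSIMaTAM}. Since $\mathcal{S}$ simulates $\mathcal{T}$ under $R$, the equivalent-productions condition $\mathcal{S}\Leftrightarrow_R\mathcal{T}$ provides some $\alpha'\in\prodasm{S}$ with $R^*(\alpha')=\alpha$; clean mapping forces every non-empty $m$-block of $\alpha'$ to be one of the $|\dom\alpha|$ blocks in $\dom R$ or adjacent to one, so $\alpha'$ has at most a bounded multiple of $|\dom\alpha|$ non-empty blocks and is therefore finite. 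This $\alpha'$ need not be stable and need not \emph{a priori} contain the seed, so the work is in manufacturing a stable $\alpha''\in\prodasm{S}$ with $R^*(\alpha'')=\alpha$ and $\sigma_S\sqsubseteq\alpha''$, after which the lemma does the rest.

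First I would record that any producible assembly of $\mathcal{S}$ whose image under $R^*$ is non-empty must contain $\sigma_S$; in particular $\sigma_S\sqsubseteq\alpha'$, since every assembly in $\prodasm{T}$ contains the non-empty seed $\sigma_T$. To see this, trace a derivation of $\alpha'$ from $\sigma_S$: if it uses no detachment then $\alpha'$ grows from $\sigma_S$ by tile additions alone and the claim is immediate; otherwise consider the last detachment along the derivation. By the \emph{follows} condition $\mathcal{T}\dashv_R\mathcal{S}$, one side of that detachment contains $\sigma_S$ while the other side---together with everything it can ever $\to^{\mathcal{S}}$-produce---fits inside a single macrotile outside $\dom R$ and hence maps under $R^*$ to the empty assembly. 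If the derivation had retained the latter side, then (only tile additions follow the last detachment) $R^*(\alpha')$ would be empty, contradicting $R^*(\alpha')=\alpha$; so it retained the seed side, and $\sigma_S\sqsubseteq\alpha'$.

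Next I would iteratively remove instabilities. As long as the current assembly (initially $\alpha'$) is not stable it has a cut of strength $\le 0$, so $\alpha'\rightarrow_-^{\mathcal{S}}(\beta'_1,\beta'_2)$ with $\beta'_1,\beta'_2\in\prodasm{S}$; by the \emph{follows} condition the seed-containing side, say $\beta'_1$, satisfies $\sigma_S\sqsubseteq\beta'_1$ and $R^*(\beta'_1)=R^*(\alpha')=\alpha$. Replacing the current assembly by $\beta'_1$ strictly decreases its (finite) domain, since a cut of the binding graph leaves both sides non-empty, so after finitely many iterations we reach a stable $\alpha''\in\prodasm{S}$ with $\sigma_S\sqsubseteq\alpha''$ and $R^*(\alpha'')=\alpha$. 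Since $\sigma_S$ is $\tau$-stable by hypothesis on the seed and $\sigma_S\sqsubseteq\alpha''$, Lemma~\ref{lem:stable_assembly} applied to the finite stable assembly $\alpha''$ and its stable subassembly $\sigma_S$ yields a detachment-free assembly sequence from $\sigma_S$ with result $\alpha''$, as required.

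I expect the main obstacle to be the bookkeeping of the second and third steps together: verifying, purely from the \emph{follows} half of the simulation definition, that each detachment can be taken so as to preserve \emph{both} the seed and the represented assembly $\alpha$, and that this cleanup terminates because every detachment strictly shrinks a finite domain. Once that is in place the corollary is a direct invocation of Lemma~\ref{lem:stable_assembly}.
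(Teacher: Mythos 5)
Your proposal is correct and follows essentially the same route as the paper's proof: obtain a representative $\alpha'$ of $\alpha$ from equivalent productions, repeatedly detach along cuts of strength $\le 0$ keeping the seed-containing side that still represents $\alpha$ (justified by the \emph{follows} condition), and then apply Lemma~\ref{lem:stable_assembly} with $\beta=\sigma_S$ to the resulting stable $\alpha''$. The extra bookkeeping you supply (finiteness of $\alpha'$ via clean mapping, termination of the cleanup, and the explicit argument that $\sigma_S\sqsubseteq\alpha'$) only fills in details the paper leaves implicit.
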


\begin{proof}
Let $\alpha'$ be in $\prodasm{S}$ such that $\alpha'$ represents $\alpha$ under $R$.
We obtain $\alpha''$ from $\alpha'$ by allowing detachment to occur for each cut of $\alpha'$ with strength $<1$. In particular, there exists an assembly sequence $\vec{\alpha}_{d} = (\alpha_1, \alpha_2, \dots, \alpha_n)$ where $\alpha_1 = \alpha'$, $\alpha_n = \alpha''$, and $\alpha_{i+1}$ is obtained from $\alpha_i$ by the detachment along a strength $<1$ cut. The existence of $\vec{\alpha}_d$ follows from the fact that as detachment occurs in $\alpha_i$ along a cut $c$, one side of the cut must be an assembly that maps to $\alpha$ under $R$ (by the definition of simulation in Section~\ref{sec:simulation_def_formal}). We take this assembly to be $\alpha_{i+1}$.
Therefore, we have a stable assembly $\alpha''$ that represents $\alpha$ under $R$. Finally, since the seed $\sigma_S$ is a stable subassembly of $\alpha''$, by Lemma~\ref{lem:stable_assembly} there exists a detachment free assembly sequence $\vec{\alpha}$ with result $\alpha''$.
\end{proof}

To show that $\mathcal{T}$ cannot be simulated by an rgTAS, we will use the window movie lemma. This lemma was introduced in~\cite{IUNeedsCoop} (Lemma 3.1) and was used to show that there does not exist a temperature $1$ aTAM system that can simulated $\mathcal{T}$. We will start by stating the definitions of a window and window movie.

\begin{definition}
A \emph{window} $w$ is a set of edges forming a cut-set in the infinite grid graph.
\end{definition}

Often a window is depicted as paths (possibly closed) in the 2D plane. See Figure~\ref{fig:bad_sim_overview} for an example. Given a window and an assembly sequence, one can observe the order and sequence that tiles attach across the window. This gives rise to the following definition.

\begin{definition}\label{def:windowMovie}
Given an assembly sequence $\vec{\alpha}$ and a window $w$, the associated {\em window movie} is the maximal sequence $M_{\vec{\alpha},w} = (v_{0}, g_{0}) , (v_{1}, g_{1}), (v_{2}, g_{2}), \ldots$ of pairs of grid graph vertices $v_i$ and glues $g_i$, given by the order of the appearance of the glues along window $w$ in the assembly sequence $\vec{\alpha}$.
Furthermore, if $k$ glues appear along $w$ at the same instant (this happens upon placement of a tile which has multiple  sides  touching $w$) then these $k$ glues appear contiguously and are listed in lexicographical order of the unit vectors describing their orientation in $M_{\vec{\alpha},w}$.
\end{definition}

Now we can state the window movie lemma for detachment free assembly sequences.

\begin{lemma}[Window movie lemma]
\label{lem:windowmovie}
Let $\vec{\alpha} = (\alpha_i \mid 0 \leq i < l)$ and $\vec{\beta} = (\beta_i \mid 0 \leq i < m)$, with
$l,m\in\Z^+ \cup \{\infty\}$,
be \emph{detachment free} assembly sequences in $\mathcal{T}$ with results $\alpha$ and $\beta$, respectively.
Let $w$ be a window that partitions~$\alpha$ into two configurations~$\alpha_L$ and $\alpha_R$, and $w' = w + \vec{c}$ be a translation of $w$ that partitions~$\beta$ into two configurations $\beta_L$ and $\beta_R$.
Furthermore, define $M_{\vec{\alpha},w}$, $M_{\vec{\beta},w'}$ to be the respective window movies for $\vec{\alpha},w$ and $\vec{\beta},w'$, and define $\alpha_L$, $\beta_L$ to be the subconfigurations of $\alpha$ and $\beta$ containing the seed tiles of $\alpha$ and $\beta$, respectively.
Then if $M_{\vec{\alpha},w} = M_{\vec{\beta},w'}$, it is the case that  the following two assemblies are also producible:
(1) the assembly $\alpha_L \beta'_R = \alpha_L \cup \beta'_R$ and
(2) the assembly $\beta'_L \alpha_R = \beta'_L \cup \alpha_R$, where $\beta'_L=\beta_L-\vec{c}$ and $\beta'_R=\beta_R-\vec{c}$.
\end{lemma}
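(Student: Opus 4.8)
The plan is to adapt the ``splicing'' argument behind the aTAM window movie lemma of \cite{IUNeedsCoop}, using the detachment-freeness hypothesis to keep the construction monotone and checking that the single new feature here---glues of strength $-1$---causes no trouble. By symmetry it is enough to show $\alpha_L \cup \beta'_R$ is producible: applying the same statement with $(\vec{\alpha},w)$ and $(\vec{\beta},w')$ interchanged and $\vec{c}$ replaced by $-\vec{c}$ then yields producibility of $\beta'_L \cup \alpha_R$. First I would translate $\vec{\beta}$ by $-\vec{c}$, so that $\alpha$ and $\beta' := \beta - \vec{c}$ are partitioned by the \emph{same} window $w$ and their window movies are literally equal as sequences; since $\supp(\alpha_L)$ and $\supp(\beta'_R)$ lie on opposite sides of $w$, the configuration $\alpha_L \cup \beta'_R$ is well defined, and it contains the seed (inherited from $\alpha_L$).

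The core step is to produce a detachment-free assembly sequence $\vec{\gamma}$ with result $\alpha_L \cup \beta'_R$ by shuffling the subsequence $A$ of $\vec{\alpha}$ of its tile placements inside $\alpha_L$ together with the ($-\vec{c}$-translated) subsequence $B$ of $\vec{\beta}$ of its tile placements inside $\beta'_R$. I would require the shuffle to respect (i) the internal orders of $A$ and $B$ and (ii) the order in which window-touching placements are recorded in the common window movie $M$ (each window-touching tile contributes a contiguous block of entries to $M$, distinct tiles contribute disjoint blocks, and every entry comes from exactly one such tile on one side or the other). Such a shuffle exists and never stalls: since every window-touching tile of $A$ and of $B$ appears in $M$ in the order it is placed, $M$ induces a single linear order on all window-touching tiles that both $A$ and $B$ respect, so there is no circular wait---if the construction were stuck, the tile owning the earliest not-yet-played $M$-block would have all of its across-$w$ partners (which own strictly earlier blocks) already placed and all of its same-side subsequence predecessors placeable, a contradiction. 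This is the bookkeeping of \cite{IUNeedsCoop}, essentially unchanged.

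The delicate point, and the one place where strength-$(-1)$ glues must be handled with care, is that every tile addition in $\vec{\gamma}$ is legal, i.e.\ the signed sum of the strengths of its bound glues is at least $1$. When a tile $t$ at a position $p$ of $\alpha_L$ is placed in $\vec{\gamma}$, I claim its occupied neighbourhood is \emph{identical} to what it was when $t$ was placed in $\vec{\alpha}$: the neighbours of $p$ on $\alpha_L$'s side of $w$ are exactly the tiles of $A$ before $t$ that abut $p$, which are present by (i); and the neighbours of $p$ across $w$ (necessarily window tiles, since those edges lie in $w$) are exactly those whose $M$-blocks precede $t$'s, which---because $M_{\vec{\alpha},w}=M_{\vec{\beta},w'}$ records the glues (hence their strengths and signs) together with their order, and $\vec{\gamma}$ respects (ii)---are present in $\vec{\gamma}$ exactly when their counterparts were present in $\vec{\alpha}$, and absent exactly when they were absent. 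Because the whole local configuration around $p$ is reproduced \emph{exactly}, not merely ``with at least as much attractive strength,'' the signed bound-glue sum at $p$ in $\vec{\gamma}$ equals the one in $\vec{\alpha}$, which is $\geq 1$; the same argument, with the counterpart of (i) for $B$, handles tiles placed in $\beta'_R$. Since $\vec{\gamma}$ places every tile of $\alpha_L$ and of $\beta'_R$ and no others, $\res{\vec{\gamma}} = \alpha_L \cup \beta'_R$, which is thus producible; I expect the main obstacle to be precisely this exactness bookkeeping across the window in the presence of a negative glue, since the rest is a faithful transcription of the aTAM proof.
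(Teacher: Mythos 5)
Your proposal is correct and follows essentially the same route as the paper, which simply observes that under the detachment-free hypothesis the splicing proof of the aTAM window movie lemma (Lemma~3.1 of \cite{IUNeedsCoop}) carries over verbatim; you have merely written out that adaptation explicitly. Your key observation---that equality of the \emph{full} window movies reproduces the across-window glue environment exactly, so the signed bound-glue sum (including any strength~$-1$ contributions) at each placement is unchanged---is precisely why no extra hypotheses are needed here, in contrast to Corollary~\ref{cor:windowmovie}, where only window-crossing submovies match and explicit negative-glue conditions must be imposed.
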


Under the assumption that the assembly sequences in Lemma~\ref{lem:windowmovie} are detachment free, Lemma~\ref{lem:windowmovie} follows directly from the proof of the window movie lemma for aTAM systems (Lemma 3.1 in~\cite{IUNeedsCoop}).
We can also define a restricted form of a window movie. For windows $w$ and $w'$, and assembly sequences $\vec{\alpha}$ and $\vec{\beta}$, Lemma~\ref{lem:windowmovie} holds even if the window movies
$M_{\vec{\alpha},w}$ and $M_{\vec{\alpha},w'}$ match on specific \emph{submovies} (subsequences of the movies $M_{\vec{\alpha},w}$ and $M_{\vec{\alpha},w'}$). We specify a particular submovie as follows.

Consider the window movie $M_{\vec{\alpha},w}$. Location-glue pairs are added to a window movie by observing tile placements given by $\vec{\alpha}$. Suppose that step $i$ of $\vec{\alpha}$ is the placement of a tile $t$ that adds a location-glue pair $(l,g)$ to the window movie.  We call this tile placement \emph{non-window crossing} if the tile can stably bind even in the absence of any positive glue along the window $w$.
We also define a \emph{window crossing submovie} to be the subsequence of a window movie, $M$, that consists of all of the steps of $M$ except for the steps corresponding to the addition of a non-window crossing tile. We denote the window crossing submovie of $M$ by ${\cal W}(M)$. Note that every window movie has a unique window crossing submovie. Then, Corollary~\ref{cor:windowmovie} says that in certain cases, Lemma~\ref{lem:windowmovie} holds even if two window movies only match on their window crossing submovies.

\begin{corollary}
\label{cor:windowmovie}
Suppose that the following two conditions hold.
\begin{enumerate}
	\item[(1)] For all $(l,g)$ in $M_{\vec{\alpha},w}$ such that $(l,g)$ corresponds to the placement of a tile $t$ with north glue $g$,
if there exists a tile $t'$ in $\beta$ at location $l' = l + c + (0,1)$ such that the south glue $g'$ of $t$ and $g$ are the negative glue, then there exists a tile in $\alpha$ at location $l + (0,1)$ with south glue $g$. We also include the similar conditions for $(l,g)$ in $M_{\vec{\alpha},w}$ where $g$ is a south, east, or west glue.
	\item[(2)] For all $(l',g')$ in $M_{\vec{\beta},w'}$ such that $(l',g')$ corresponds to the placement of a tile $t'$ with north glue $g'$,
if there exists a tile $t$ in $\alpha$ at location $l = l' - c + (0,1)$ such that the south glue $g$ of $t$ and $g'$ are the negative glue, then there exists a tile in $\beta$ at location $l + (0,1)$ with south glue $g'$. We also include the similar conditions for $(l',g')$ in $M_{\vec{\beta},w'}$ where $'g$ is a south, east, or west glue.
\end{enumerate}
Then, the statement of Lemma~\ref{lem:windowmovie} holds if the window movies $M_{\vec{\alpha},w}$ and $M_{\vec{\beta},w'}$ are replaced by their window crossing submovies ${\cal W}\left(M_{\vec{\alpha},w}\right)$ and ${\cal W}\left(M_{\vec{\beta},w'}\right)$.
\end{corollary}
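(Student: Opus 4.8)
The plan is to reduce the corollary to the already-established window movie lemma (Lemma~\ref{lem:windowmovie}) by showing that, under hypotheses (1) and (2), matching window crossing submovies is enough to reconstruct matching full window movies (up to the behavior that does not affect producibility of the spliced assemblies). First I would recall exactly which part of the full window movie is discarded when passing to the window crossing submovie: precisely the steps in which a tile attaches adjacent to the window but would be stable even with no positive glue contribution from across $w$. The key observation is that such a ``non-window crossing'' tile placement is, from the point of view of the side of the cut on which it sits, completely internal — its attachment is justified by glues strictly inside one of $\alpha_L,\alpha_R$ (respectively $\beta_L,\beta_R$). Therefore, when we form the spliced assemblies $\alpha_L \cup \beta'_R$ and $\beta'_L \cup \alpha_R$, the placements of non-window crossing tiles on each side can simply be carried along verbatim from the side they came from; they do not need to be ``synchronized'' across the two assembly sequences.

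The one subtlety, and the reason hypotheses (1) and (2) are needed, is that a tile can attach with \emph{negative} glue strength across the window (strength $-1$) while still being stable thanks to sufficient positive strength on its interior sides — and in the rgTAM a negative glue contribution from across the cut is not ``harmless'' the way a zero glue would be: after splicing, a position on the far side of the window in the spliced assembly may carry a \emph{different} tile (or no tile) than in the original, so a glue that used to be matched-and-negative might become unmatched, changing the strength of cuts in the spliced assembly and potentially making it non-$\tau$-stable or blocking a tile addition. Hypotheses (1) and (2) are exactly the conditions ruling this out: whenever a tile $t$ placed in (say) $\vec{\alpha}$ has a side on the window carrying the negative glue, and the opposing position across the window in $\beta$ would present the matching negative glue, then in fact $\alpha$ already has a tile at the opposing position presenting that same glue. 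Consequently the negative-glue edge across the window in the spliced assembly is present exactly when it was present in one of the originals, so no new positive-strength cut is created and no previously-stable configuration becomes unstable. I would make this precise by checking, edge by edge across the (translated) window in each of the two spliced assemblies, that the strength of every cut is unchanged from the corresponding cut in $\alpha$ or $\beta$, using (1) and (2) to handle the negative-glue edges and the equality of window crossing submovies to handle the positive-glue edges.

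Concretely, the steps in order are: (i) show that inserting the non-window crossing steps back into $\mathcal{W}(M_{\vec{\alpha},w})$ and $\mathcal{W}(M_{\vec{\beta},w'})$, in the order given by $\vec{\alpha}$ and $\vec{\beta}$ respectively, yields valid detachment free assembly sequences with the same results $\alpha$ and $\beta$ — this is immediate since those steps are stable by their interior glues alone; (ii) observe that the window crossing submovies being equal, $\mathcal{W}(M_{\vec{\alpha},w}) = \mathcal{W}(M_{\vec{\beta},w'})$, pins down every positive-strength glue that ever appears on the window in either sequence, so the positive part of the two full window movies agree; (iii) use hypotheses (1) and (2) to argue that the negative glues appearing on the window, though possibly differing between the two full window movies, are in both cases ``backed by'' an actual matching tile on the opposite side already present in $\alpha$ (resp. $\beta$), hence the splice does not introduce a cut of higher strength than in the source assembly; (iv) now run the argument of Lemma~\ref{lem:windowmovie}: replay $\vec{\alpha}$ restricted to $\alpha_L$, then replay $\vec{\beta}$ restricted to $\beta'_R$, checking at each step that the tile being added is still stable in the combined assembly (positive glues across the window are identical by (ii); negative glues across the window are controlled by (iii); interior glues are unaffected), giving producibility of $\alpha_L \cup \beta'_R$, and symmetrically for $\beta'_L \cup \alpha_R$.

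The main obstacle I expect is step (iii) / the stability bookkeeping in step (iv): one must be careful that a negative glue which, in $\vec{\alpha}$, sat on the window adjacent to \emph{empty space} (and was therefore never "matched" and contributed nothing) does not suddenly become matched by a tile coming from $\beta'_R$ in the spliced assembly, thereby lowering a cut's strength below $\tau$ or, worse, allowing a spurious detachment that would violate the equivalent-productions clause of simulation. Hypotheses (1) and (2) are phrased precisely to forbid this — they force the opposing tile to already be present in the source assembly whenever the opposing glue would match negatively — so the resolution is to invoke them at exactly this point; but verifying that they cover every orientation (north/south/east/west, and both the $\vec{\alpha}$-to-$\beta$ and $\vec{\beta}$-to-$\alpha$ directions) and every combination of which side holds the seed is the delicate, case-heavy part of the argument.
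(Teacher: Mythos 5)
Your overall strategy is the same one the paper (implicitly) uses: the paper offers no detailed proof of Corollary~\ref{cor:windowmovie}, only the remark following its statement that conditions (1) and (2) guarantee the spliced assemblies $\alpha_L\beta'_R$ and $\beta'_L\alpha_R$ contain no negative-glue interactions across the window that were not already present in $\alpha$ or $\beta$, so the replay argument of Lemma~\ref{lem:windowmovie} (itself inherited from the aTAM window movie lemma) goes through, with non-window-crossing tiles attaching by their interior glues alone. Your steps (i), (iii), (iv) and your identification of the negative-glue matching issue as the reason for hypotheses (1) and (2) are exactly this argument, spelled out in more detail than the paper gives.

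One sub-claim in your plan is inaccurate, though repairable. In step (ii) you assert that equality of the window crossing submovies ``pins down every positive-strength glue that ever appears on the window,'' so the positive parts of the full movies agree; this is false in general, because a non-window-crossing placement (stable from interior glues alone) can still present a positive glue on the window, and such presentations are excluded from ${\cal W}(M_{\vec{\alpha},w})$ and ${\cal W}(M_{\vec{\beta},w'})$ and need not coincide between the two sequences. Step (iv) as written leans on this false premise. The fix is the weaker, sufficient fact: only window-crossing attachments need positive strength from across the window, and for each such attachment in (say) $\vec{\beta}$ the matching entry of ${\cal W}(M_{\vec{\alpha},w})$ corresponds to a window-crossing tile in $\vec{\alpha}$ carrying the same glue on its window side; since that tile is window-crossing (interior glues contribute at most $0$, and all rgTAM glue strengths are at most $1$), it must have bound to a matching glue already presented by $\alpha_L$ at the opposing location before the corresponding submovie step, so the glue your $\beta'_R$ tile needs is present in the splice at the right time (and symmetrically for $\beta'_L\alpha_R$). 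With that substitution for (ii), your argument matches the intended proof.
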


\begin{figure}[htb]
\begin{center}
\includegraphics[width=4.5in]{images/fingerFlagpole_overview}
\caption{(Figure taken from \cite{IUNeedsCoop}) (a) An overview of the tile assembly system $\mathcal{T} = (T,\sigma,2)$.~$\mathcal{T}$ runs at temperature 2 and its tile set $T$ consists of 18 tiles. (b) The glues used in the tileset $T$. Glues $g_{11}$ and $g_{14}$ are strength 1, all other glues are strength~2.  Thus the keystone tile binds with two ``cooperative'' strength~1 glues. Growth begins from the pink seed tile $\sigma$: the top and bottom arms are one tile wide and grow to arbitrary, nondeterministically chosen, lengths. Two blue figures grow as shown. (c) If the fingers happen to meet then the keystone, flagpole and flag tiles are placed, (d) if the fingers do not meet then growth terminates at the finger ``tips''.}
\label{fig:fingerFlagpole_overview_append}
\end{center}
\end{figure}

Condition (1) in Corollary~\ref{cor:windowmovie} is saying that when we attempt to assemble $\alpha_L \beta'_R$, we can rest assured that there are no negative glue interactions across the window $w$ between negative glues exposed by tiles of $\beta'_R$ and negative glues exposed by $\alpha_L$ that are not present in the assembly $\alpha$. This implies that using the assembly sequence $\vec{\alpha}$ to attach tiles from $\alpha_L$, and the assembly sequence $\vec{\beta}$ to attach tiles from $\beta'_R$, $\alpha_L \beta'_R$ can be assembled since there are no negative glue interactions in $\alpha_L \beta'_R$ that are not present in $\alpha$ or $\beta$. Similarly, Condition (2) says the same for the assembly of $\beta'_L \alpha_R$.

\begin{figure}[htp]
\begin{center}
\includegraphics[width=4.5in]{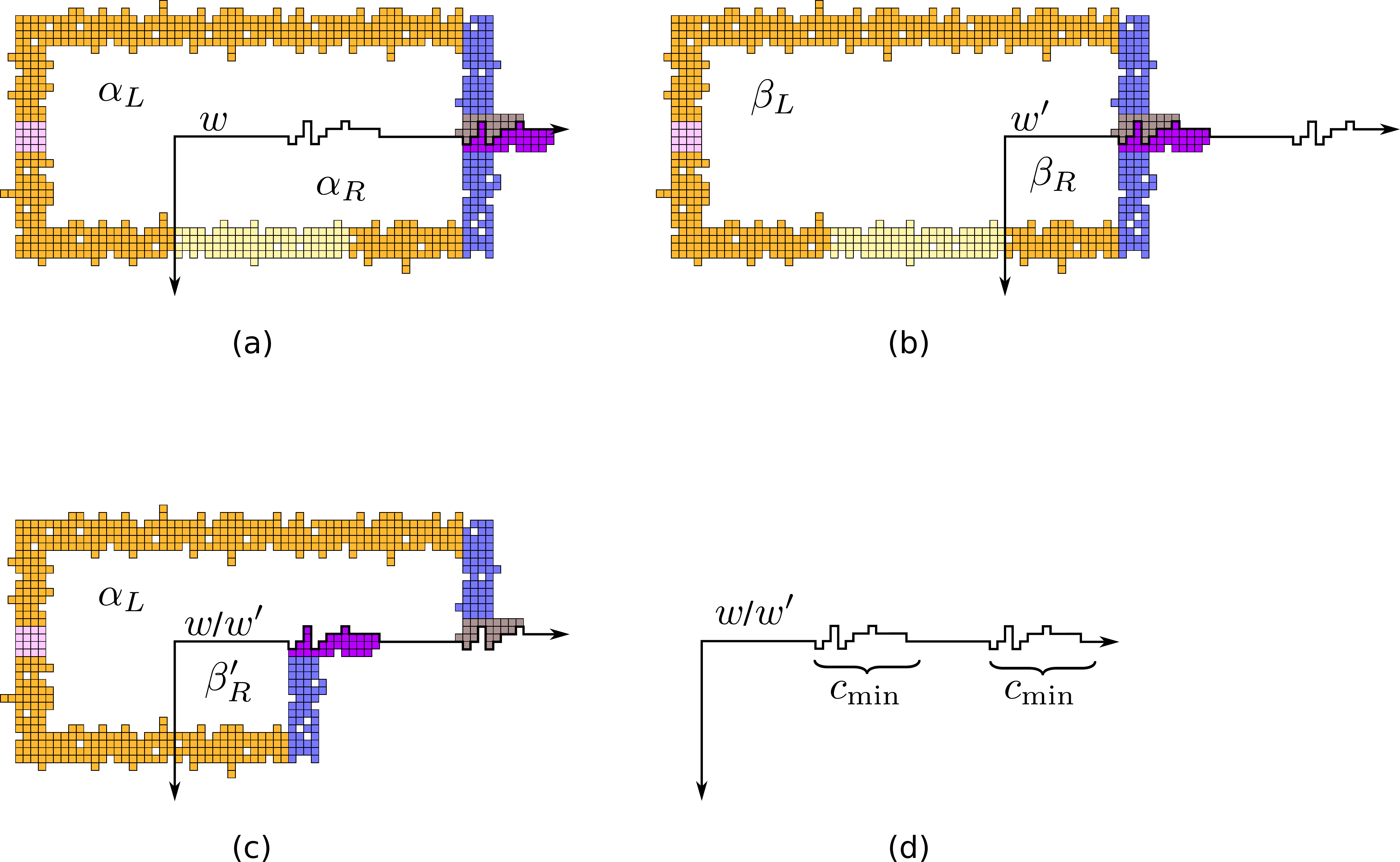}
\caption{An example of an assembly formed by $S$ simulating $\mathcal{T}$ and the identical window crossing submovies $w$ and $w'$ -- (a) and (b), and the resulting producible assembly constructed via Corollary~\ref{cor:windowmovie} (c). (d) shows the windows $w$ and $w'$ (which are equivalent up to shifting). The portions of these windows that are determined by $c_{\min}$ are labeled.}
\label{fig:bad_sim_overview}
\end{center}
\end{figure}

With Corollary~\ref{cor:detachmentfree} and Corollary~\ref{cor:windowmovie}, we are now ready to prove Theorem~\ref{thm:rgTAScannotSIMaTAM}. For the sake of contradiction, suppose that $\mathcal{S} = (S, \sigma_S)$ is an rgTAS that simulates $\mathcal{T}$, the finger and flagpole system, $\mathcal{T}$ with representation function $R: \mathcal{A}^{S} \rightarrow \mathcal{A}^T$ and scale factor $m\in \mathbb{N}$.

Now let $\alpha_d$ in $\termasm{T}$ be the assembly where the top and bottom arms are $d$ tiles long.
By Corollary~\ref{cor:detachmentfree}, we can find a detachment free assembly sequence $\vec{\alpha}'_d$ in $\mathcal{S}$ such that the stable result $\alpha'_d$ represents $\alpha_d$.
Now let $c$ be a set of edges in the binding graph $G$ of $\alpha'_d$ such that $c$ is a cut-set of the subgraph of $G$ corresponding to the subassembly, $\eta$, of tiles contained in the keystone macrotile, the flagpole macrotile, the flag macrotile, and the macrotiles immediately surrounding these macrotiles in $\alpha'_d$. Then let $C$ be the set of all such cuts $c$. Since $|C| < \infty$, we can find a cut $c_{\min}$ such that for any cut $c$ in $C$, the strength of $c_{\min}$ is less than or equal to the strength of $c$. In other words, $c_{\min}$ is a cut with minimal strength.

For the proof here, we must be more selective about our choice of assembly sequence $\vec{\alpha}'_d$ resulting in $\alpha'_d$. In this proof, we will use the window movie lemma for detachment free assembly sequences (Lemma~\ref{lem:windowmovie}). For some $d$ to be chosen later, the windows, $w$ and $w'$, that we will use for Lemma~\ref{lem:windowmovie} will be windows that cut an arm of $\alpha'_d$ vertically.
Note that we can also ensure that other than the edges corresponding to bonds between tiles of belonging to macrotiles of an arm, the only edges in $w$ or $w'$ are exactly the edges of $c_{\min}$. Moreover, without loss of generality, suppose that a tile in the flagpole region stably binds below the cut $c_{\min}$. We will choose the windows $w$ and $w'$ to cut the bottom arm of $\alpha_d'$. See Figure~\ref{fig:bad_sim_overview} for an example of such windows.

\begin{claim}
$\vec{\alpha}'_d$ can be chosen so that every location-glue pair of $M_{\vec{\alpha}'_d, w}$ or $M_{\vec{\alpha}'_d, w'}$ whose glues lie on $c_{\min}$ corresponds to a tile placement that is non-window crossing.
\end{claim}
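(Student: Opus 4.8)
The plan is to exploit the considerable freedom we have in ordering the tile attachments of $\alpha'_d$: by Lemma~\ref{lem:stable_assembly}, starting from $\sigma_S$ (a stable subassembly of $\alpha'_d$) we can reach $\alpha'_d$ by a detachment-free sequence, and the greedy argument in its proof lets us prioritize which addable tile to place next. So it suffices to exhibit one such ordering under which, whenever a tile placement contributes a glue lying on $c_{\min}$ to $M_{\vec{\alpha}'_d,w}$ (or to $M_{\vec{\alpha}'_d,w'}$), that placement is non-window crossing. The heart of the matter is a structural fact that follows from the minimality of $c_{\min}$: \emph{every tile $t$ of $\alpha'_d$ incident to an edge of $c_{\min}$ is held in place by its bonds to tiles on its own side of $w$ alone}, i.e. the strengths of $t$'s glues to tiles lying on $t$'s side of $w$ sum to at least $1$.

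To prove this, suppose some $c_{\min}$-incident tile $t$ had own-side bond strength $\le 0$. By the choice of $w$, the only bonds of $t$ that cross $w$ lie on $c_{\min}$, and since $\alpha'_d$ is stable the cut isolating $t$ has strength $\ge 1$; hence the strengths of $t$'s $c_{\min}$-bonds sum to $\ge 1$. Now slide $t$ to the opposite side of the cut: the resulting cut $c'$ is obtained from $c_{\min}$ by deleting $t$'s $c_{\min}$-bonds (total strength $\ge 1$) and adding $t$'s own-side bonds (total strength $\le 0$), so $\mathrm{str}(c') \le \mathrm{str}(c_{\min}) - 1 < \mathrm{str}(c_{\min})$. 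Since $t$ is neither the seed nor one of the keystone/flagpole/flag tiles, relocating it preserves the property that the cut isolates $\eta$, so $c' \in C$, contradicting the minimality of $c_{\min}$.

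With this fact in hand, I would build $\vec{\alpha}'_d$ in stages: first attach all of $\alpha_L$ (the side of $w$ containing $\sigma_S$); then grow $\alpha_R$ across the arm-macrotile bonds, attaching its tiles so that each $c_{\min}$-incident tile of $\alpha_R$ is placed only after enough of its own-side neighbours to give it strength $\ge 1$, and before its $c_{\min}$-neighbours in $\alpha_L$; finally attach the remaining $c_{\min}$-incident tiles of $\alpha_L$, each of which is again supported with strength $\ge 1$ from its own side by the structural fact. Under this order every glue on $c_{\min}$ is first recorded in the window movie by a placement that needs no glue crossing $w$, hence by a non-window crossing placement, and the symmetric application of the structural fact handles the glues contributed by the $\alpha_L$-side tiles. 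Repeating the argument verbatim for the translated window $w'$ (which again consists only of arm-macrotile bonds together with $c_{\min}$) yields the statement for $M_{\vec{\alpha}'_d,w'}$.

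The step I expect to be the main obstacle is the treatment of negative glues lying on $c_{\min}$: if a $c_{\min}$-incident tile also exposes a negative glue across $w$, then "own-side strength $\ge 1$" no longer immediately forces the placement to be non-window crossing, so the ordering must be refined so that such a tile is attached \emph{before} the neighbour across that negative $c_{\min}$-edge, with a correspondingly strengthened version of the minimality argument; a secondary point requiring care is verifying that sliding a tile across the cut always keeps us inside the family $C$, which must be checked against the exact definition of $\eta$ and the macrotile structure.
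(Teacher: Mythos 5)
Your key instinct---use the minimality of $c_{\min}$ by modifying the cut and deriving a strictly cheaper cut in $C$---is the same engine that drives the paper's proof, but the way you deploy it leaves a real gap. Your structural fact is a \emph{per-tile} statement (each $c_{\min}$-incident tile has own-side strength $\geq 1$ in the finished assembly $\alpha'_d$), and that is too weak to produce the ordering you then assert. Consider two adjacent $c_{\min}$-incident tiles $t_1,t_2$ on the same side of $w$ whose only own-side support is the strength-$1$ bond between them: each satisfies your fact, yet neither can be the first of the two to attach without help from a positive glue across $c_{\min}$. Your staged schedule (``attach all of $\alpha_L$, then grow $\alpha_R$ so that each $c_{\min}$-incident tile is placed after enough own-side neighbours'') simply postulates that such an order exists; it also assumes that $\alpha_L$ by itself can be reached by a detachment-free sequence of \emph{stable} intermediate assemblies, which is not justified (stability of $\alpha'_d$ does not give stability of $\alpha_L$ or of the prefixes you would need). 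So the proposal is missing exactly the argument that handles blocked configurations of \emph{sets} of tiles, not single tiles.

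The paper closes this hole with an extremal argument on partial assemblies rather than a precomputed schedule: let $\gamma$ be a maximal subassembly of $\alpha'_d$ reachable from $\sigma_S$ by a detachment-free sequence in which every placement contributing a glue on $c_{\min}$ is non-window crossing. If $\gamma\neq\alpha'_d$, take a connected component $x$ of the untiled remainder; stability of $\alpha'_d$ makes the cut $c$ separating $x$ from $\gamma$ positive, so either some tile of $x$ attaches without window help (contradicting maximality of $\gamma$), or every possible attachment needs a positive glue shared between $c$ and $c_{\min}$, in which case $\mathrm{str}(c\setminus c_{\min})\leq 0$ and the rewired cut $(c\setminus c_{\min})\cup(c_{\min}\setminus c)$ is a cut of $\eta$ of strength strictly below $c_{\min}$, a contradiction. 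Note that this is your ``sliding'' computation, but performed on a whole component $x$ at the moment the greedy process stalls---which is precisely what defeats the mutually-supporting-tiles obstruction above. If you replace your per-tile fact and staged ordering by this maximal-$\gamma$ argument (in the spirit of the proof of Lemma~\ref{lem:stable_assembly}), the two secondary worries you flagged (negative glues on $c_{\min}$, and whether the modified cut stays in $C$) are also absorbed: the rewired cut is by construction a cut of the subgraph on $\eta$, and negative window glues only lower attachment strength, so they never turn a stalled configuration into one your argument would miss.
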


For the moment, suppose that the claim holds and $\vec{\alpha}'_d$ is chosen as such. Then, let $g$ be the number of glues of tiles in $S$. We will show that $\mathcal{S}$ is capable of producing an assembly sequence that yields an invalid production for simulation. For any $d \in \mathbb{N}$, it must be the case that $\mathcal{S}$ can simulate the production of the assembly $\alpha_d$ in $\termasm{T}$ where the top and bottom arms of $\alpha_d$ are $d$ tiles long. Note that for every $d$, $\alpha_d$ is of the form depicted (c) of Figure~\ref{fig:fingerFlagpole_overview_append}.
Figure~\ref{fig:bad_sim_overview} shows our choice of windows $w$ and $w'$ that cut an arm of some $\alpha_d'$ vertically. By the claim, we can assume that every location-glue pair of $M_{\vec{\alpha}'_d, w}$ and $M_{\vec{\alpha}'_d, w'}$ corresponds to non-window crossing tile additions forming $\eta$. Therefore, the window crossing submovies $\mathcal{W}(M_{\vec{\alpha}'_d, w})$ and $\mathcal{W}(M_{\vec{\alpha}'_d, w'})$ only contain location-glue pairs corresponding to the bindings of tiles belonging to the bottom arm of $\alpha_d'$ (i.e. location-glue pairs along the vertical portion of the windows).

Then, since $m$ (macrotile size) and $g$ (the number of glues of tile types in $S$) are fixed constants, for $d$ sufficiently large, there exists two such window movies $w$ and $w'$ such that $w'$ is a horizontal translation $w$ and the window crossing submovies, $\mathcal{W}(M_{\vec{\alpha}'_d, w})$ and $\mathcal{W}(M_{\vec{\alpha}'_d, w'})$ match.  The top assemblies in Figure~\ref{fig:bad_sim_overview} give an example of two equivalent window movies. Notice that we can also choose $w$ and $w'$ so that the distance between them is at least $3m$. Then, $w$ (respectively $w'$) divides $\alpha_d'$ into configurations $\alpha_L$ and $\alpha_R$ (respectively $\beta_L$ and $\beta_R$). By Corollary~\ref{cor:windowmovie}, $\alpha_L\beta'_R$ (depicted in Figure~\ref{fig:bad_sim_overview}(c)) is a valid assembly in $\mathcal{S}$. Notice that $\alpha_L\beta'_R$ is stable and contains a tile in the flagpole macrotile region. This region lies outside of any permissible fuzz region. (See Section~\ref{sec:simulation_def_formal} for the definition of fuzz.) Therefore, the existence of the valid producible assembly $\alpha_L\beta'_R$ shows that $\mathcal{S}$ is not a valid simulation.

To finish the proof, we now prove the claim.

\noindent\textit{Proof of the claim.} Here we show that $\vec{\alpha}'_d$ as defined above can be chosen so that each glue lying on $c_{\min}$ corresponds to a tile placement that is non-window crossing.
The proof of this claim is similar to the proof of Lemma~\ref{lem:stable_assembly}.

First, let $W$ be the set of of subassemblies of $\alpha_d'$ such that $\eta\in W$ if and only if there exists an assembly sequence consisting of stable assemblies starting from $\sigma_S$ with result $\eta$ that is detachment free \emph{and} every location-glue pair of $M_{\vec{\alpha}'_d, w}$ (The proof is similar for $M_{\vec{\alpha}'_d, w'}$.)
Note that since $\alpha_d'$ is finite, $W$ is finite. Therefore, we can let $\gamma$ denote a subassembly of $W$ such that for any $\eta$ in $W$, $|\dom \gamma| \geq |\dom \eta|$. In other words, $\gamma$ is such that no other subassembly in $W$ has more tiles than $\gamma$. We will show that $\gamma = \alpha_d'$.

For the sake of contradiction, assume that $\gamma \neq \alpha_d'$. Then there is some tile of $\alpha_d'$ that is not in $\gamma$. Consider the binding graph of $\alpha_d'$ with nodes corresponding to tiles of $\gamma$ removed, and call the resulting graph $G$. Notice that a connected component (possibly with edges corresponding to the negative glue) of $G$ corresponds to a configuration of tiles, $x$ say, in $\alpha_d'$ such that no tile of $x$ is in $\gamma$. Now, since $\alpha_d'$ is stable, the cut $c$ of the binding graph of $\alpha_d'$ that separates $x$ from $\alpha_d'$ must have strength greater than $0$. Since $x$ is taken to be a connected component of $G$, all of the edges defining the cut $c$ correspond to exposed glues of $\gamma$. Since the strength of these edges sum to a positive strength, either (1) at least one tile of $x$ can stably bind to $\gamma$ resulting in $\gamma'$ in $W$, or (2) no tile can stably bind to $\gamma$ without the added strength of binding to a glue corresponding to an edge of $c_{\min}$.

\begin{figure}[htp]
\begin{center}
\includegraphics[width=4.5in]{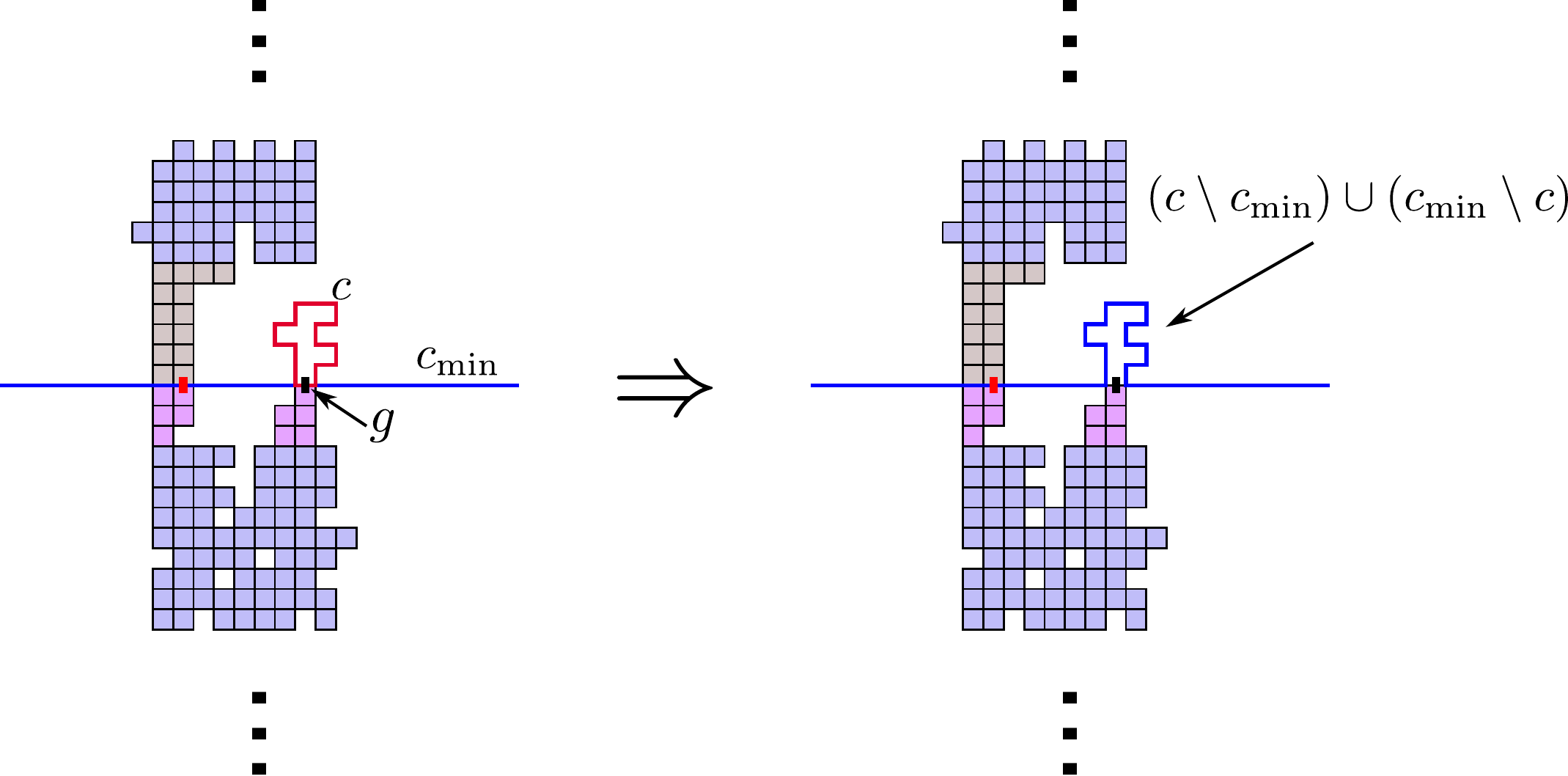}
\caption{A schematic picture of ``rewiring'' the cut $c_{\min}$ of $\eta$. On the right we see the cut $c$ as well as $c_{\min}$. $c_{\min}$ is a cut of strength $0$, and the only positive strength glue on the cut $c$ is labeled $g$ in the figure. On the right, we see $(c \setminus c_{\min}) \cup (c_{\min} \setminus c)$. Notice that this new cut has strength less than the strength of $c_{\min}$.}
\label{fig:rewire_cut}
\end{center}
\end{figure}

In Case (1), note that $|\dom \gamma'| = |\dom \gamma| + 1 > |\dom \gamma|$. This contradicts our choice of $\gamma$. In Case (2), it must be the case that the cut $c$ and the cut $c_{\min}$ share some edges with positive strength. This is because the reason we cannot place a tile using a positive strength glue on $c$ is that this glue is also in $c_{\min}$, and we are not allowing tile attachment of window crossing tiles across $c_{\min}$ in the assembly of $\gamma$. Then, notice that the sum of the strengths of the edges belonging to $c \setminus c_{\min}$ must sum to zero or less. Otherwise a tile could be added along this cut, which would once again contradict our choice of $\gamma$. Then, note that the edges in $(c \setminus c_{\min}) \cup (c_{\min} \setminus c)$ form a cut of the subassembly $\eta$ (defined above the statement of the claim) with strength strictly less than the strength of $c_{\min}$. Intuitively, $(c \setminus c_{\min}) \cup (c_{\min} \setminus c)$ is a cut that is formed by ``rewiring'' $c_{\min}$ using  $c \setminus c_{\min}$, and since the strength of $c \setminus c_{\min}$ is less than $1$, and the strength of $c \cap c_{\min}$ is greater than $0$, this rewiring results in cut with less strength than $c_{\min}$. See Figure~\ref{fig:rewire_cut} for a schematic picture of this rewiring. This contradicts our choice of $c_{\min}$. Hence, in either Case (1) or (2), we arrive at a contradiction. Therefore, $\gamma = \alpha_d'$. This proves the claim.

} 
\vspace{-10pt}
\section{Simulation of the aTAM with the DrgTAM}\label{sec:DrgTAM_sim}
\vspace{-10pt}
In this section, given an aTAM system $\calT=(T,\sigma, 2)$, we describe how to simulate $\calT$ with a DrgTAS at temperature 1 with $O(1)$ scale factor and tile complexity $O(|T|)$.  It will then follow from \cite{IUSA} that there exists a tile set in the DrgTAM at $\tau=1$ which is intrinsically universal for the aTAM at any temperature, i.e. it can be used to simulate any aTAM system of any temperature.

\vspace{-5pt}
\begin{theorem}\label{thm:DrgTAS-sim}
For every aTAM system $\calT=(T,\sigma, 2)$, there exists a DrgTAS $\mathcal{D} = (T_{\mathcal{D}}, S, D, \sigma', 1)$ such that $\mathcal{D}$ simulates $\calT$ with $O(1)$ scale factor and $|S\cup D| = O(|T|)$.
\end{theorem}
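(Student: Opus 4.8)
The plan is to give an explicit macrotile construction. Fix an aTAM system $\calT=(T,\sigma,2)$. We choose a constant scale factor $c$ (depending only on the fact that $\tau=2$, not on $T$) and build a DrgTAS $\mathcal{D}$ whose macrotiles are $c\times c$ blocks, each representing one tile of $T$. The seed $\sigma'$ is the block encoding of $\sigma$. The core difficulty that the construction must overcome is the same one that makes bare temperature-$1$ models weak: a macrotile that represents a tile $t\in T$ placed by strength-$1{+}1$ cooperation must not begin to form until \emph{both} input glues are present on the two relevant neighboring macrotiles, and it must be impossible for it to form from only one of them. I would handle this exactly the way temperature-$2$ cooperation is handled in aTAM intrinsic-universality constructions (\cite{IUSA}), but substituting the DrgTAM's two special resources --- duples and a single strength $-1$ glue --- for the missing strength-$2$ glues.

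The construction proceeds as follows. First I would set up the macrotile ``plumbing'': each macrotile has a designated region along each of its four sides where it reads the glue presented by the adjacent macrotile (the neighbor writes a binary-encoded glue label into a shared border track using a fixed-width counter/copy gadget), and an internal region where, once enough input information has arrived, a small hard-coded circuit computes the identity of the represented tile $t$ and then ``writes out'' $t$'s output glues onto the appropriate border tracks so that neighboring macrotiles can in turn read them. All of this information-gathering and circuitry is ordinary temperature-$1$ cooperation-free growth (a zig-zag/counter layout), so it costs only $O(|T|)$ tile types and constant scale. Second, and this is the crux, I would implement the cooperative attachment: when $t$ requires two strength-$1$ inputs from sides $\vec u$ and $\vec v$, the gadget that initiates $t$'s internal growth must be a duple (or a short chain assembled via duples) that physically spans from the $\vec u$-border track to the $\vec v$-border track, so that it can only be placed when \emph{both} tracks already carry the required glue --- geometrically, one half of the duple docks against the partial information coming from $\vec u$ and the other half against that coming from $\vec v$. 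Where a naive attempt would let a spurious partial macrotile nucleate from a single input, I would use the strength $-1$ glue as a ``veto'': the path that would grow from only one input exposes a negative glue that cancels the strength of the bond it would need to proceed (or, upon the arrival of the second, conflicting, input, a negative-glue interaction detaches the wrongly-started fragment, which by the simulation definition must then map to the empty tile --- the tagged-junk convention quoted in Section~\ref{sec-rgtam-informal} makes this legal). Combining the duple's rigid two-point geometry with the negative-glue veto gives genuine additive cooperation at temperature~$1$.

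Third, I would verify the simulation conditions of Definition~\ref{def-d-simulates-t}. Equivalent productions: every producible assembly of $\mathcal{D}$ either is a clean block-encoding of a producible assembly of $\calT$ or is such an encoding plus partially-grown fuzz macrotiles adjacent to filled ones (the plumbing is arranged so that a macrotile only starts to grow once an adjacent macrotile is ``committed,'' keeping fuzz to lateral/vertical adjacency as required), and conversely every producible assembly of $\calT$ is represented; this is a finite case analysis over how a macrotile's growth can proceed given its border tracks. Equivalent dynamics ($\calT\dashv_R\mathcal{D}$ and $\mathcal{D}\models_R\calT$): any single macrotile completion in $\mathcal{D}$ projects to a legal single-tile attachment in $\calT$ (because the completion gadget fired only when the encoded glues summed to $\geq 2$, exactly matching temperature-$2$ stability), and conversely for any tile $\calT$ can add, the macrotile circuitry has a growth path that adds the corresponding macrotile; detachments in $\mathcal{D}$ only ever shear off junk fragments that map to the empty tile, satisfying the ``follows'' condition. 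The main obstacle, and where I would spend the most care, is ruling out \emph{unintended} cooperation or \emph{unintended} detachment: showing that no combination of negative-glue interactions across macrotile boundaries can either glue together two fragments that shouldn't interact or detach a piece of a correctly-formed macrotile --- this is where the restriction to strength magnitudes $\{-1,0,1\}$ actually helps, since (as in Lemma~\ref{lem:stable_assembly}) a single $-1$ can never overpower more than one unit of positive bonding, so the ``veto'' is always local and controllable. Finally, with $\mathcal{D}$ in hand for arbitrary $\calT$ at $\tau=2$, the intrinsic-universality statement follows by composing with the temperature-reduction/IU machinery of \cite{IUSA} (any aTAM system of any temperature is simulated by a temperature-$2$ one with a universal tile set), and the tile count $|S\cup D|=O(|T|)$ is read off the construction since only the hard-coded identity circuit scales with $|T|$ while the plumbing and cooperation gadgets are fixed.
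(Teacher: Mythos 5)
Your central gadget is the right one, and it is essentially the paper's: a duple that spans the two input regions, together with a strength $-1$ glue that cancels one unit of binding, so that attachment to a single probe has net strength $0$ and both strength-$1$ inputs are required for stable attachment; detachment of a duple half then exposes a fresh internal glue that drives subsequent growth. The paper packages exactly this as its ``adjacent cooperator'' and ``gap cooperator'' gadgets (finger plus resistor). However, your framing of the plumbing is off in a way that matters: at temperature $1$ and constant scale there is no machinery to read a ``binary-encoded glue label'' off a border track and have a ``circuit compute'' the tile identity --- that decoding/computation is precisely the cooperation you are in the middle of constructing. The paper sidesteps this entirely by hard-coding one probe (a single-tile-wide path with its gadgets) per simulated glue and one special duple per compatible glue pair/tile type; this is also exactly what yields $|S\cup D|=O(|T|)$ at $O(1)$ scale, so you should drop the track/circuit language and adopt the lookup-by-glue-matching structure.

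The larger gap is that you defer, with only a promise of care, the part where the paper spends nearly all of its effort: geometric coordination of up to four probes (including probes for mismatching or over-binding, i.e. non-contributing, input sides) inside one constant-size macrotile. Probes must be able to cross paths already laid down by other probes, which the paper achieves with a ``crosser'' gadget that uses negative glues to detach a blocking path segment; the west probe must trigger secondary arms of the north/south probes so that N--S cooperation is still possible after being crossed; and the representation function must be pinned down by a precedence-ordered set of points of competition (regions $1$, $3$, then $2$), with a special case for a macrotile surrounded on all four sides where N--S cooperation is blocked and the represented tile is chosen via a nondeterministically placed indexed tile and a mod-$n$ lookup. You must also verify that \emph{every} fragment that detaches is inert and maps (and forever maps) to the empty tile; the paper arranges all junk to be duple halves carrying unique internal glues or one of a small set of quickly-terminal assemblies. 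Your fallback mechanism --- let growth nucleate from a single input and detach the ``wrongly-started fragment'' when a conflicting input arrives --- is exactly the kind of dynamics the simulation definition makes dangerous: unless you show such a fragment can never claim an identity-determining location and that none of its successors ever maps to a nonempty tile, equivalent dynamics fails; the paper's gadgets avoid this by guaranteeing the cooperative attachment cannot fire at all from one input. Without these pieces (or substitutes for them), the proposal does not yet constitute a proof.
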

\vspace{-5pt}
We now provide a high-level overview of the construction.  For the remainder of this section, $\calT=(T,\sigma, 2)$ will denote an arbitrary TAS being simulated, $\mathcal{D} = (T_{\mathcal{D}}, S, D, \sigma', 1)$ the simulating DrgTAS, and $R$ the representation function which maps blocks of tiles in $\mathcal{D}$ to tiles in $\calT$.  The system $\calT$ is simulated by a DrgTAS through the use of macrotiles which consist of the components shown in Figure~\ref{fig:macro_labeled-main}.  Note that macrotiles are not necessarily composed of all of the components shown in Figure~\ref{fig:macro_labeled-main}, but will consist of at least one of the subassemblies labeled probe.
\begin{figure}[htp]
\begin{center}
\includegraphics[width=3.5in]{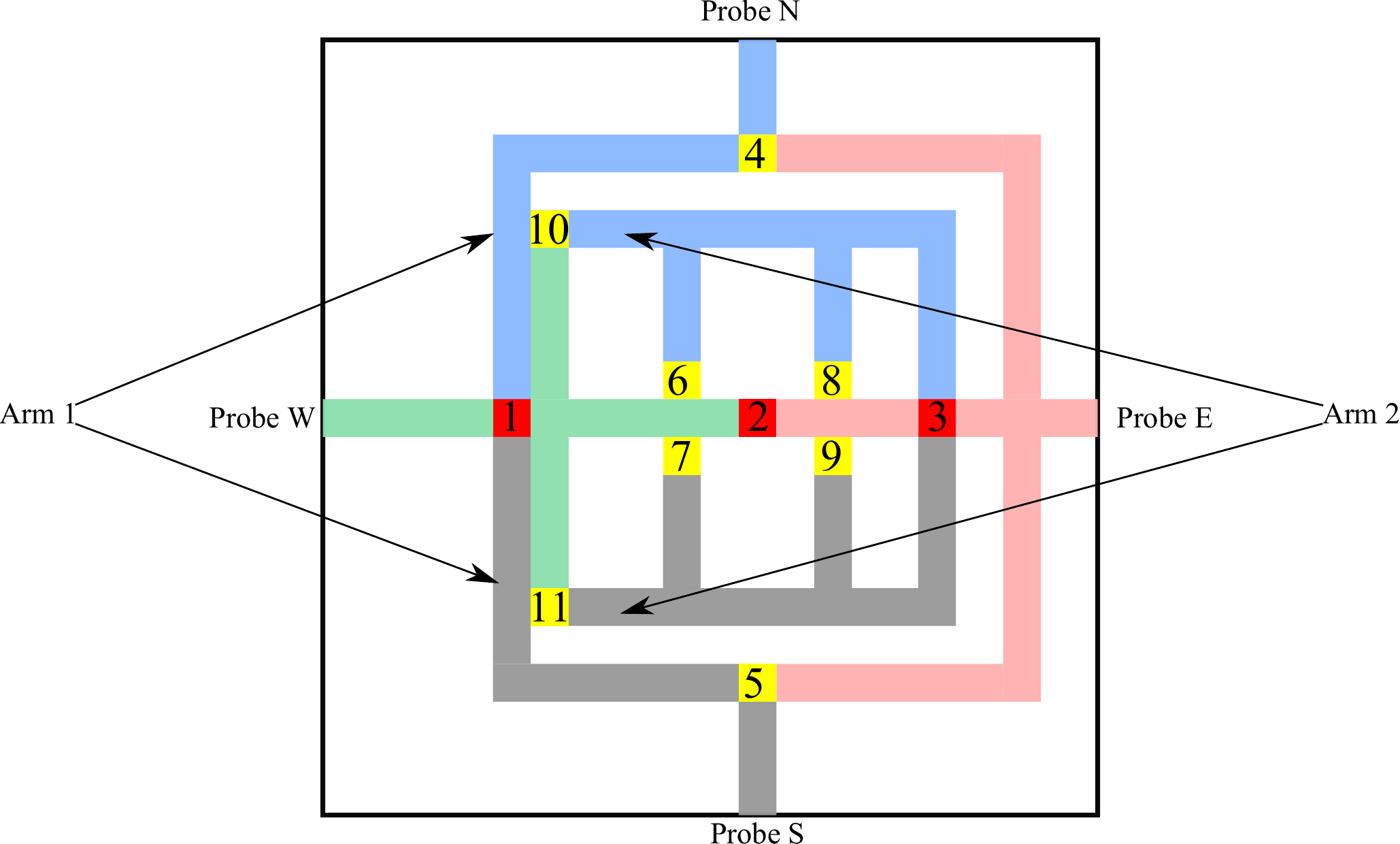}
\caption{Macrotile probes, points of cooperation, and points of competition}
\label{fig:macro_labeled-main}
\end{center}
\vspace{-25pt}
\end{figure}
Informally, the subassemblies labeled probe, which we will now refer to as probes, ``simulate'' the glues of the tiles in $T$.  If a probe is simulating a glue which is of strength $2$, then it does not require the assistance of any other probes in order to complete the macrotile containing it.  On the other hand, if the glue which the probe is simulating is of strength $1$, then the probe cannot assemble a new macrotile until another probe arrives which simulates a glue with which the other glue can cooperate and place a new tile in $\calT$.  Before probes can begin the growth of a new macrotile, they must claim (i.e. place a tile in) one of the \emph{points of competition} (shown as red in Figure~\ref{fig:macro_labeled-main}) depending on the configuration of the macrotile.  Once a special tile is placed in one of the points of competition, the representation function $R$ maps the macrotile to the corresponding tile in $T$, and the growth of the macrotile can begin.

We use the following conventions for our figures.  All duples are shown in darker colors (even after they are broken apart) and singletons are shown in lighter colors.  Negative glues are represented by red squares protruding from tiles, and positive glues are represented by all other colored squares protruding from tiles.  We represent glue mismatches (a glue mismatch occurs when two different glues are adjacent or a glue is adjacent to a tile side that does not have a glue) by showing the mismatching glues receded into the tiles from which they would normally protrude.  A red box enclosing a subassembly indicates that subassembly has total binding strength 0.

\vspace{-15pt}
\begin{figure}[htp]
\begin{center}
\includegraphics[width=3.5in]{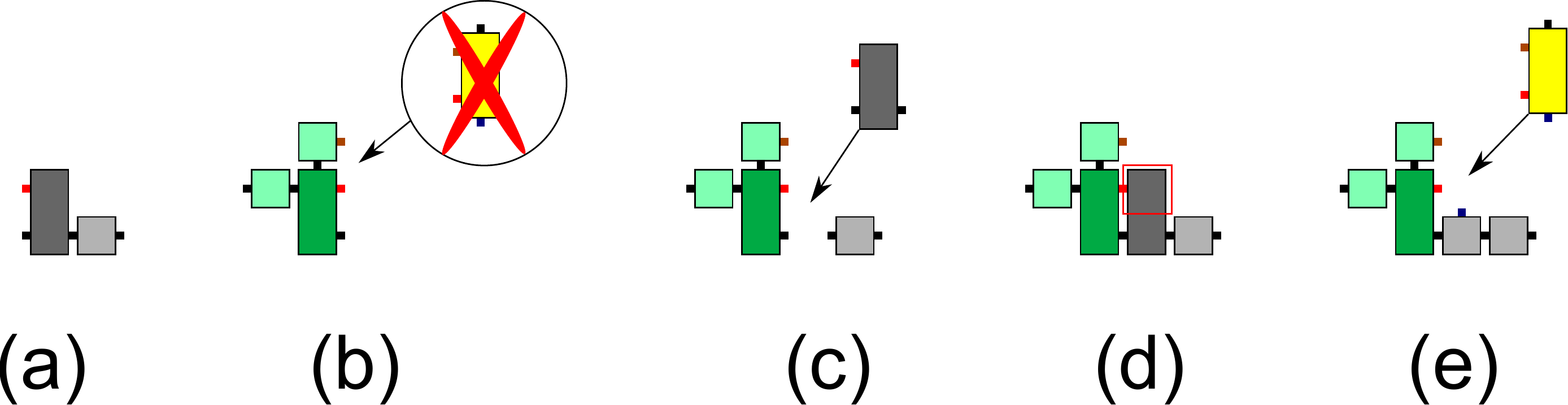}
\caption{An assembly sequence of an adjacent cooperator gadget.}
\label{fig:gad_acoop-main}
\end{center}
\vspace{-25pt}
\end{figure}

The cooperator gadget is the underlying mechanism that allows for the DrgTAM to simulate the cooperative placement of a tile in a $\tau\ge2$ TAS.  We consider two cases of cooperative tile placement: 1) the tiles that cooperatively contribute to the placement of a tile have adjacent corners (e.g. one is north of the location to be cooperatively tiled while the other is to the east or west), and 2) the tiles that cooperatively contribute to the placement of a tile are non-adjacent, that is there is a tile wide gap between the two tiles.  We create a cooperator gadget for each of these two cases.  Not surprisingly, we call the cooperator gadget that mimics the former case the \emph{adjacent cooperator gadget} and the cooperator gadget that mimics the latter case the \emph{gap cooperator gadget}. Each of these two gadgets is asymmetric in nature and consists of two parts: 1) a finger and 2) a resistor.  The function of the resistor is to cause a duple that is attached to the finger gadget to break apart and expose the internal glue of the duple which can then be used for binding of another tile.

An adjacent cooperator gadget is shown in Figure~\ref{fig:gad_acoop-main}.  Part (a) of this figure depicts the finger part of the gadget, and the subassembly labeled (b) is the resistor.  Note that the only tiles which have the ability to bind to the exposed glues are duples with a negative glue that is aligned with the negative glue that is adjacent to the exposed glues.  This means that neither subassembly can grow any further until its counterpart arrives.  In Figure~\ref{fig:gad_acoop-main} parts (c) - (e) we see the assembly sequence showing the interaction between the two parts of the cooperator gadget. In this particular assembly sequence we have assumed that the resistor piece of the gadget has arrived first.  In part (c), we see the arrival of a tile (presumably from a probe) which allows for the duple that is a part of the finger gadget to bind with total strength 1.  The 0 strength cut that is induced by this binding event is shown by the red box in part (d) of the figure.  Since the tile encapsulated in the red box is bound with total strength 0, it eventually detaches which leads us to part (e) of the figure.  Notice that the dissociation event has caused a new glue to be exposed.  This glue now allows for the binding of a duple as shown in part (e) of Figure~\ref{fig:gad_acoop-main}.

\begin{figure}[htp]
\begin{center}\includegraphics[width=4in]{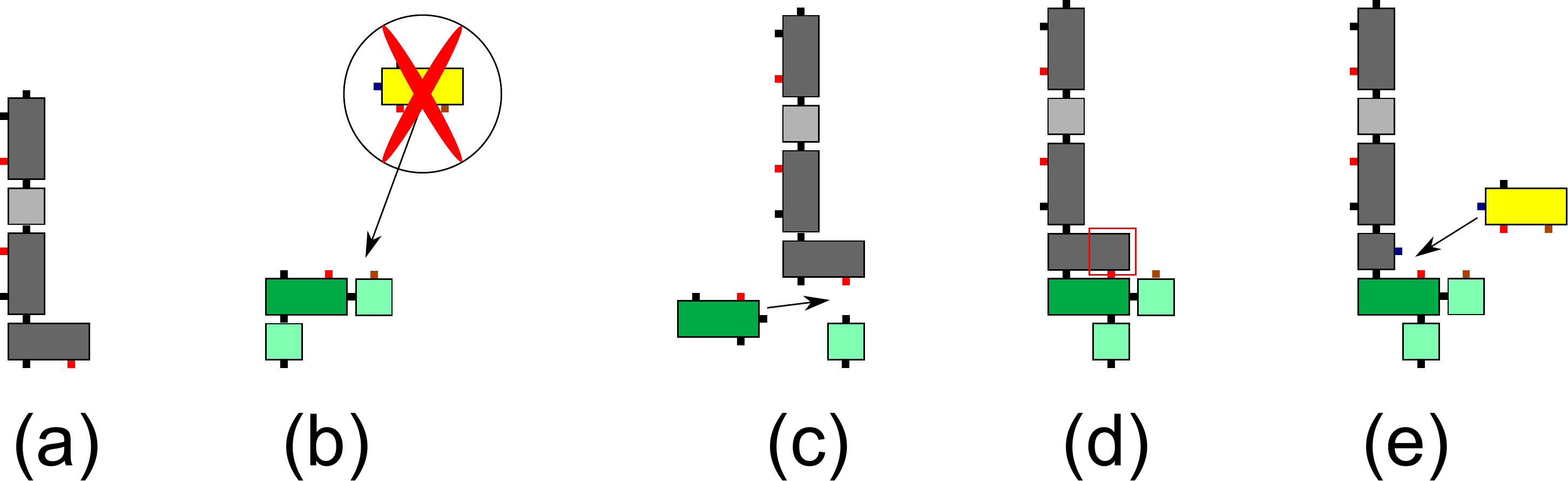}
\caption{An assembly sequence of a gap cooperator gadget.}
\label{fig:gad_coop-main}
\end{center}
\vspace{-25pt}
\end{figure}

Figure~\ref{fig:gad_coop-main} shows a gap cooperator gadget which is a simple extension of the adjacent cooperator gadget.  This extension of the adjacent cooperator gadget allows for a crosser gadget (described below) to grow a path of tiles in between the two parts of the gadget.  This gadget allows a new glue to be exposed upon the arrival of a negative glue (Figure~\ref{fig:gad_coop-main} part (c))  which causes half of the duple to detach (shown in part (d) of the figure).  This allows a duple to attach as shown in Figure~\ref{fig:gad_coop-main}(e) which depends on both of the glues exposed by the two parts of the gadget.  Notice that the binding of this tile cannot occur unless both parts of the gadget are present.

The previous gadgets showed that in order for two probes to cooperate, they must be connected by a path of tiles.  In order for other probes to cross in between these connected probes we utilize what we call a ``crosser gadget''.  The assembly sequence for a crosser is shown in Figure~\ref{fig:gad_crosser-main}.  Growth of the gadget begins with the placement of a singleton which is prevented from growing further.  This singleton exposes glues which allow for duples to bind (Figure~\ref{fig:gad_crosser-main}(b) and (c)) that cause the path of tiles blocking the singleton's growth to detach (Figure~\ref{fig:gad_crosser-main}(d)).  Note that the attachment of these duples cannot occur before the singleton arrives since they would only have total binding strength zero.

Section~\ref{sec:gadgets} offers a more in-depth description of the gadgets described above.

We can now use these gadgets to give a more complete description of the probes which are shown in Figure~\ref{fig:macro_labeled-main}.  All of the numbered regions represent gadgets.  Gadgets labeled 1-3 in the figure represent gap cooperator gadgets which allow for cooperation between the probes to which they are attached.  The gadgets labeled 5-9 denote adjacent cooperator gadgets which allow for the potential of cooperation between the probes to which they are attached.  Finally, the gadgets labeled 10 and 11 are cooperator gadgets which allow for Probe W to trigger the growth of the second arms of Probe N and Probe S.  See Section~\ref{sec:probes} for more details about the structure of probes and their accompanying gadgets.

\vspace{-15pt}
\begin{figure}[htp]
\begin{center}
\includegraphics[width=2.7in]{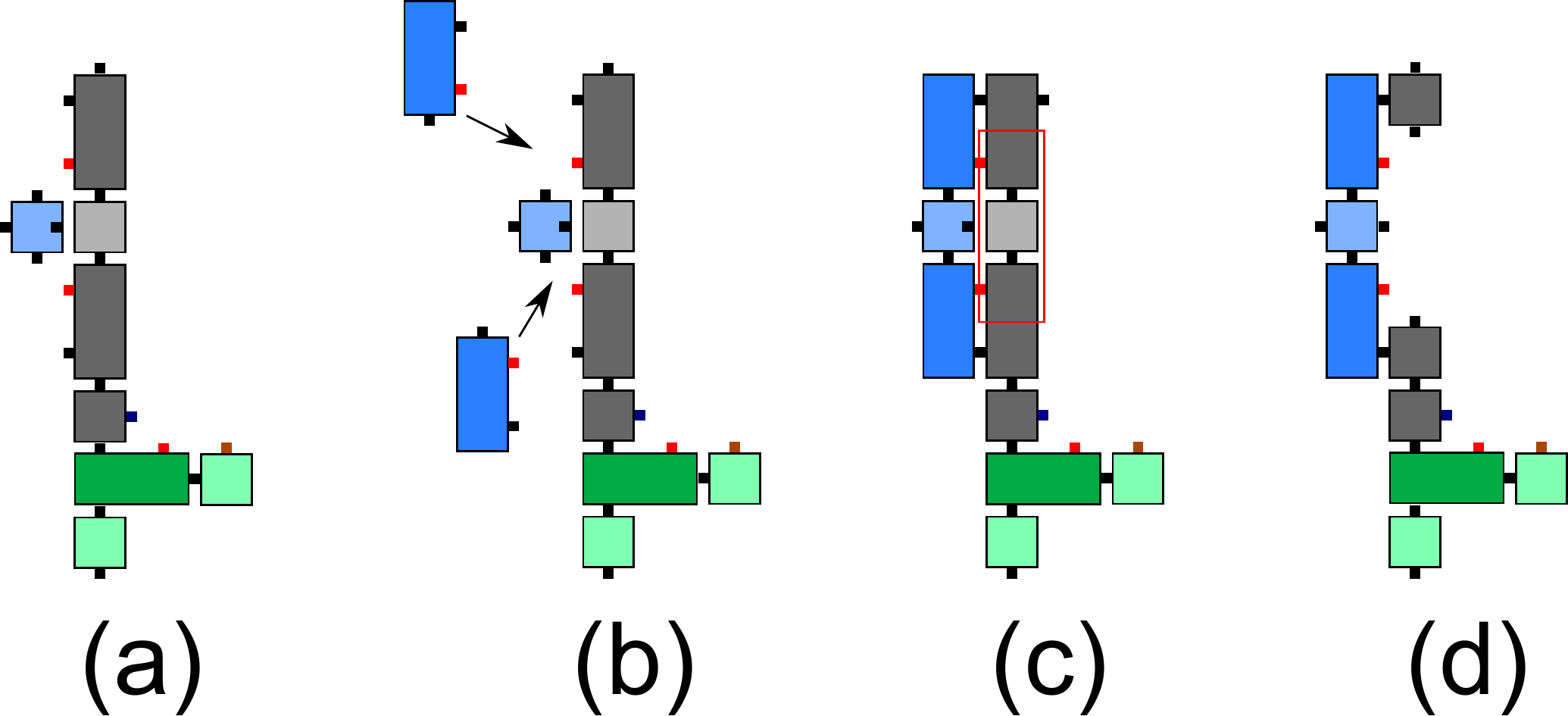}
\caption{An assembly sequence of a crosser gadget.}
\label{fig:gad_crosser-main}
\end{center}
\vspace{-25pt}
\end{figure}

The output of the representation function for a particular macrotile depends on the three regions labeled 1-3 in Figure~\ref{fig:macro_labeled-main}.  If a special tile is placed in region 1, then the macrotile region is mapped to the tile in $T$ that corresponds to the special tile regardless of the tiles in the other regions.  Similarly, region 3 takes precedence over region 2.  Finally, if a special tile has not been placed in either region 1 or 3, then the output of the representation function depends on the tile placed in region 2.  For a more detailed explanation of the representation function and regions 1-3 see Section~\ref{sec:poc_repr}.  For a case analysis of how our construction handles all possible binding scenarios, see Section~\ref{sec:case_analysis}.

The seed of our simulator is formed from a set of tiles in $S \cup D$ which have been hardcoded.  Section~\ref{sec:seed_form} gives a more detailed explanation about the construction of the seed in the simulator.

\ifabstract
\later{
\section{Gadgets: Cooperators and Crossers} \label{sec:gadgets}
We now introduce two gadgets which give the probes mentioned above the required functionality needed to imitate cooperatively placing a tile.  Furthermore, these gadgets will allow us to modularize the construction in the proceeding sections.  The first gadget that we introduce is called the cooperator gadget.  As its name suggests, its purpose is to mimic the cooperation found in $\tau = 2$ TASs. The second gadget we describe, which we call the crosser gadget, allows for probes to cross in between each other.  For example, a crosser gadget enables the east and west probes to grow through the north and the south probes.

A key observation to make during the description of these gadgets is that these gadgets are designed such that all tiles that detach from the assembly are singletons that are originally part of a duple unless otherwise specified.  We construct the duples such that this glue is unique, and consequently nothing can bind to the portion of the duple that fell off of the assembly except its counterpart which is attached to the assembly.  Indeed, observe that any duple presents at most one negative glue.  This implies that the same half is always the one which detaches, and consequently there are not any tiles which may bind to it. This means that all of the tiles that detach from the assembly are inert (i.e. unable to bind to any tile in solution).  If tiles that fell off the assembly were not inert, then it could be possible to grow assemblies which would invalidate the simulation (since the definition of simulation requires that so-called junk assemblies must never grow into assemblies which map to something other than the empty tile under $R$).  Thus, it is necessary that we be careful about what we allow to detach from the assembly.

Throughout this section, we use the following conventions for our figures.  All duples are shown in darker colors (even after they are broken apart) and singletons are shown in lighter colors.  Negative glues are represented by red squares protruding from tiles, and positive glues are represented by all other colored squares protruding from tiles.  We represent glue mismatches (a glue mismatch occurs when two different glues are adjacent or a glue is adjacent to a tile side that does not have a glue) by showing the mismatching glues receded into the tiles from which they would normally protrude.

\subsection{Cooperators}
The cooperator gadget is the underlying mechanism that allows for the DrgTAM to simulate the cooperative placement of a tile in a $\tau\ge2$ TAS.  As in the aTAM at $\tau=2$, cooperator gadgets allow the attachment of tiles in one subassembly to trigger growth in another subassembly.  We consider two cases of cooperative tile placement: 1) the tiles that cooperatively contribute to the placement of a tile have adjacent corners (e.g. one is north of the location to be cooperatively tiled while the other is to the east or west), and 2) the tiles that cooperatively contribute to the placement of a tile are non-adjacent, that is there is a tile wide gap between the two tiles.  We create a cooperator gadget for each of these two cases.  Not surprisingly, we call the cooperator gadget that mimics the former case the \emph{adjacent cooperator gadget} and the cooperator gadget that mimics the latter case the \emph{gap cooperator gadget}. Each of these two gadgets are asymmetric in nature and consist of two parts: 1) a finger and 2) a resistor.  The function of the resistor is to cause a duple that is attached to the finger gadget to break apart and expose the internal glue of the duple which can then be used for binding of another tile.

An adjacent cooperator gadget is shown in Figure~\ref{fig:gad_acoop}.  Part (a) of this figure depicts the finger part of the gadget, and the subassembly labeled (b) is the resistor.  Note that the only tiles which have the ability to bind to the exposed glues are duples with a negative glue that is aligned with the negative glue that is adjacent to the exposed glues.  This means that neither subassembly can grow any further until its counterpart arrives.  In Figure~\ref{fig:gad_acoop} parts (c) - (e) we see the assembly sequence showing the interaction between the two parts of the cooperator gadget. In this particular assembly sequence we have assumed that the resistor piece of the gadget has arrived first.  In part (c), we see the arrival of a tile (presumably from a probe) which allows for the duple that is a part of the finger gadget to bind with total strength 1.  The 0 strength cut that is induced by this binding event is shown by the red box in part (d) of the figure.  Since the tile encapsulated in the red box is bound with total strength 0, it eventually detaches which leads us to part (e) of the figure.  Notice that the dissociation event has caused a new glue to be exposed.  This glue now allows for the binding of a duple as shown in part (e) of Figure~\ref{fig:gad_acoop}.

\begin{figure}[htp]
\begin{center}
\includegraphics[width=3.5in]{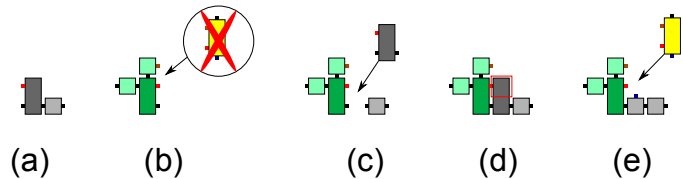}
\caption{An assembly sequence of an adjacent cooperator gadget.}
\label{fig:gad_acoop}
\end{center}
\end{figure}

We now present an example which demonstrates an adjacent cooperator gadget.  Suppose that $T$ contains the subassembly shown in Figure~\ref{fig:gad_coop_exT} (a), and the only tiles which may bind to the west glue of tile $A$ are shown in part (b) of the figure.  Observe that since we are in a system of temperature 2, only tile $C$ may bind to this subassembly.  Tile $D$ cannot bind because its binding strength to this subassembly is 1.  Part (c) of Figure~\ref{fig:gad_coop_exT} shows the subassembly after tile $C$ binds which is the only binding event that can occur at that location.  Figure~\ref{fig:gad_coop_exS} shows the assembly sequence of the adjacent cooperator gadget which simulates the binding event that occurs in Figure~\ref{fig:gad_coop_exT}.  Note that the parts of the cooperator gadget lie in the macrotile region that eventually contains a macrotile which maps to tile $C$ under the representation function.  Part (a) of this figure shows the two tiles which allow for the growth of a macrotile to begin which maps to either tile $C$ or $D$ in $T$.  Parts (b) and (c) show the assembly sequence which leads us to the subassembly shown in part (d).  The subassembly in part (d) makes it such that the tile which is placed where the arrows are pointing must have both glues match the two glues exposed by the gadget.  This ensures simulation of the binding of the tile labeled $C$.
\begin{figure}[htp]
\begin{center}
\includegraphics[width=2.5in]{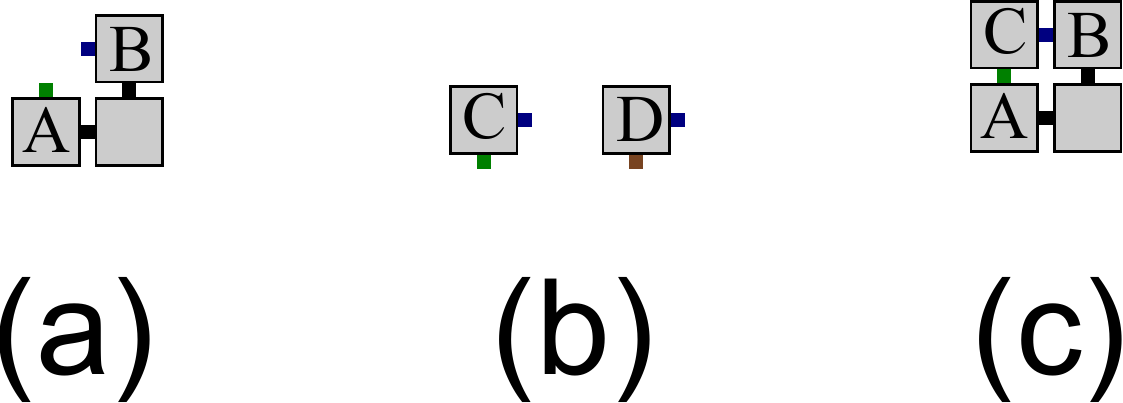}
\caption{An example subassembly sequence in $\calT$.}
\label{fig:gad_coop_exT}
\end{center}
\end{figure}

\begin{figure}[htp]
\begin{center}
\includegraphics[width=2.5in]{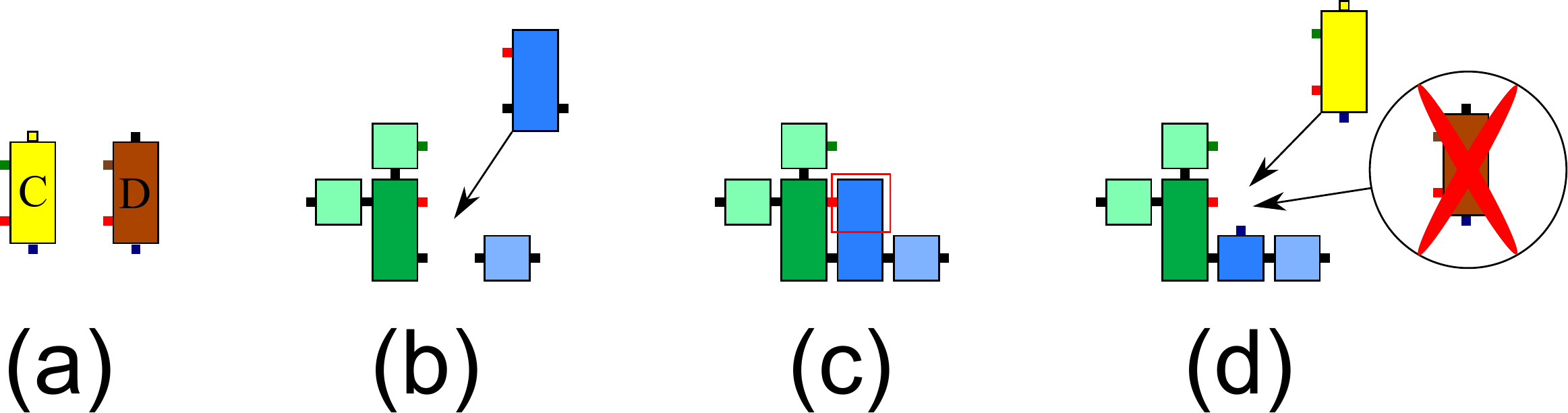}
\caption{Using an adjacent cooperator gadget to mimic the cooperative tile placement shown in Figure~\ref{fig:gad_coop_exT}.}
\label{fig:gad_coop_exS}
\end{center}
\end{figure}

Figure~\ref{fig:gad_coop} shows the finger part of the gap cooperator in part (a) and the resistor portion in part (b). Notice that the end of the finger gap cooperator gadget has the same structure as the finger portion of the adjacent cooperator, and the two resistor parts of the gadgets are equivalent as well.  The only difference between the two gadgets is that the finger gap cooperator gadget consists of an extra three tiles which precede the duple exposed to the resistor part of the gadget.  In the next section, we will see that these extra tiles are necessary in order for the crosser gadget to be implemented.  Parts (c)-(e) of Figure~\ref{fig:gad_coop}, show the assembly assembly sequence of a gap cooperator when its two pieces interact.

\begin{figure}[htp]
\begin{center}
\includegraphics[width=4.5in]{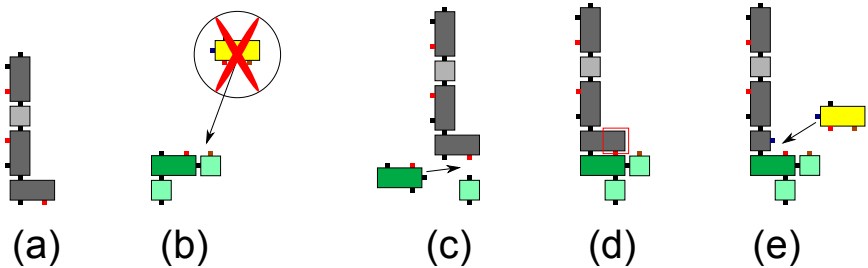}
\caption{An assembly sequence of a gap cooperator gadget.}
\label{fig:gad_coop}
\end{center}
\end{figure}

\subsection{Crossers}
As we saw in the previous section, the only ways for probes to mimic cooperation requires them to be connected by a tile wide path.   In order for other probes to cross in between these connected probes we utilize what we call a crosser gadget.  The assembly sequence for a crosser is shown in Figure~\ref{fig:gad_crosser}.  Growth of the gadget begins in part (a) of the figure which shows a singleton arriving at a gap cooperator which is described above.  Upon the arrival of this singleton, two duples may be placed with total binding strength one as shown in part (b).  Note that the attachment of these duples cannot occur before the singleton arrives since they would only have total binding strength zero.  The attachment of these two duples induces a strength zero cut which contains the subassembly inside of the red box shown in part (c) of Figure~\ref{fig:gad_crosser}.  Since this cut of the binding graph has total strength zero, the subassembly inside of the red box will eventually detach which leads to the assembly shown in part (d) of the figure.  Now, it is possible for the single tile wide path to continue its growth to the other side of the probes.

Unlike the other gadget we explored, this gadget allows for subassemblies which consist of more than just one half of a duple to detach from an assembly.  Figure~\ref{fig:gad_junk} shows all of the subassemblies which can detach due to the crosser gadget.  Notice that the ``junk'' in part (a) of this figure is inert since the two exposed glues are unique internal duple glues.  This is the only thing that can detach in the situation that we explored above where the finger gap cooperator gadget is bound to the resistor gadget which total strength 1.  But, it could be the case that the finger gadget is in the process of growing or that the probe to which the resistor gadget is attached never arrives.  This situation gives rise to the junk shown in parts (b)-(e).  Observe that the ``junk'' in parts (b)-(d) can only grow into (e) which is also inert.  Thus, anything that that the crosser gadget causes to detach is inert. Note that (b) and (c) do not grow into (a) since the bottom tile of the subassembly in (a) is half of a duple which was broken apart by the crosser gadget.  This duple half cannot attach to any assembly without its counterpart, thus only a full duple can attach to the subassemblies (b) and (c) which causes them to grow into (e).

\begin{figure}[htp]
\begin{center}
\includegraphics[width=3.0in]{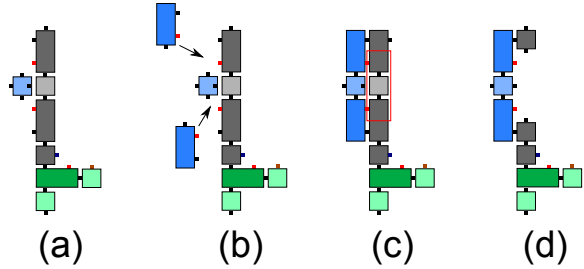}
\caption{An assembly sequence of a crosser gadget.}
\label{fig:gad_crosser}
\end{center}
\end{figure}

\begin{figure}[htp]
\begin{center}
\includegraphics[width=2.0in]{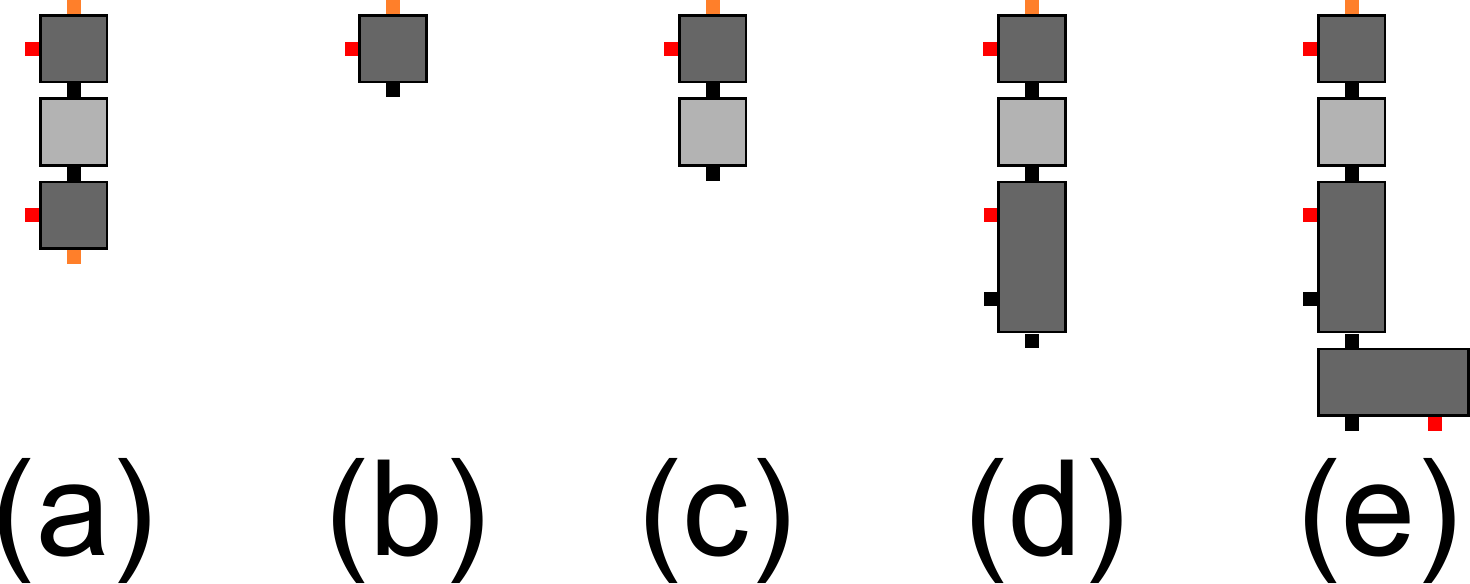}
\caption{The subassemblies that can detach due to the crosser gadget. }
\label{fig:gad_junk}
\end{center}
\end{figure}

\subsection{Interfacing Gadgets}
Now, with these gadgets in our toolbox, we can discuss how they will be utilized in our construction.  First, notice that we can orient these gadgets however we see fit by rotating or flipping the tiles from which they are assembled.  In addition, for our construction we will not use the gap cooperator gadget which we described above, but rather the extension of it shown in Figure~\ref{fig:gad_coop_complex}(a).  This is necessary since, as we will see, we need paths of tiles to cross through the gap cooperators from either direction.  Consequently, we must add another set of three tiles which will allow for crossers coming from either direction to cross through the gadget.  Part (b) of this figure shows a cooperatively placed tile growing a crosser gadget in order to begin growing its probes.  Observe that whenever this occurs, a probe trying to grow southward will be prevented from growing due to the path of tiles which were laid down by the cooperatively placed tile.  But, this is not an issue since at this point the tile in $T$ which the macrotile simulates has already been decided.  Thus, there is no need for any other probes in the macrotile region to grow or cooperate with each other.
\begin{figure}[htp]
\begin{center}
\includegraphics[width=2.0in]{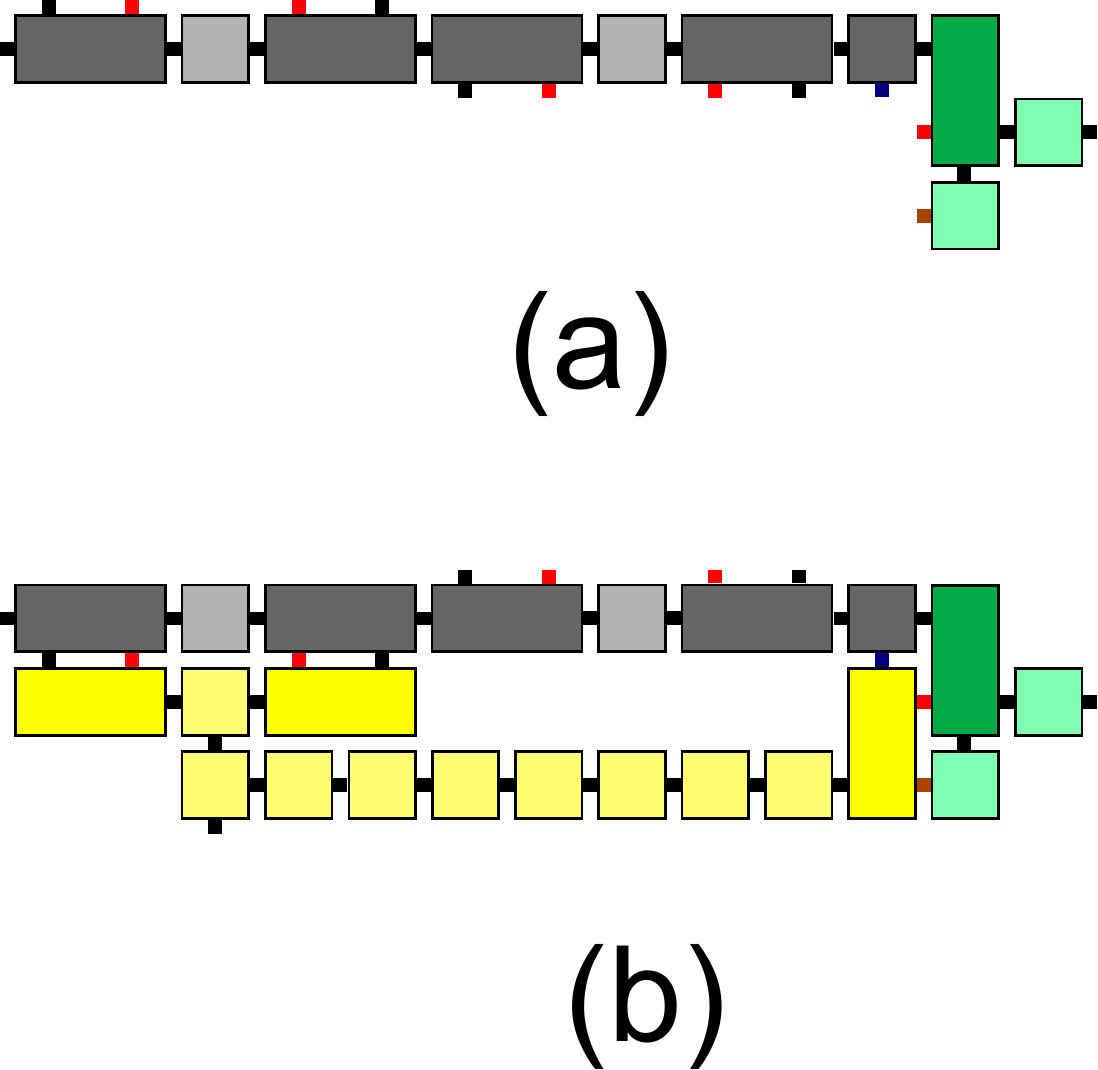}
\caption{The extension of the gap cooperator which our construction will use.}
\label{fig:gad_coop_complex}
\end{center}
\end{figure}

\section{Probe Configurations}\label{sec:probes}
Probes can take on multiple configurations depending on the strength of the glue they are simulating and the probes already in the macrotile region when they arrive.  All probes consist of a single-tile wide path of tiles to which the gadgets describe above are attached.  There are two types of fundamentally different probes: probes that grow from the north and south of the macrotile and probes that grow from the east and west of the macrotile. As shown in Figure~\ref{fig:macro_labeled}, probes that grow from the east and west are single arm probes while probes that grow from the north and south potentially require two arms.  Figure~\ref{fig:macro_labeled} also shows all of the probes along with their corresponding gadgets which are marked as colored number regions.  Gadgets labeled 1-3 in the figure represent gap cooperator gadgets which allow for cooperation between the probes to which they are attached.  The gadgets labeled 5-9 denote adjacent cooperator gadgets which allow for the potential of cooperation between the probes to which they are attached.  Finally, the gadgets labeled 10 and 11 are cooperator gadgets which allow for Probe W to trigger the growth of the second arms of Probe N and Probe S.

\section{Points of Competition and the Representation Function}\label{sec:poc_repr}
Before probes can place tiles which begin the growth of a particular macrotile, the probes must grow paths which claim a point of competition.  Once a point of competition is claimed by a special tile, the representation function can then map the macrotile to a tile in $T$.

\begin{figure}[htp]
\begin{center}
\includegraphics[width=2.5in]{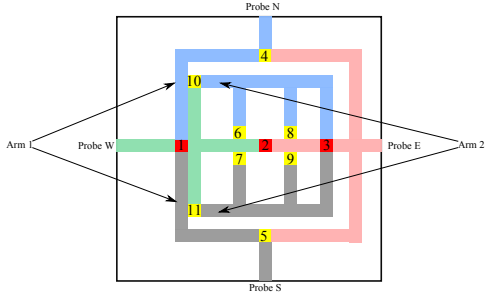}
\caption{A schematic picture of the probes, points of cooperation, and points of competition of a macrotile.}
\label{fig:macro_labeled}
\end{center}
\end{figure}

Figure~\ref{fig:macro_labeled} gives a schematic picture of the paths that probes take as they assemble as well as the location of the points of cooperation and points of competition. To simulate growth of $\tau=2$ aTAM systems, we must handle two cases: (1) a tile binds via a strength-$2$ glue, (2) a tile binds via the cooperation of two strength-$1$ glues.

When simulating case (1), as a macrotile assembles, a probe representing a strength-$2$ glue claims a point of competition (labeled $2$ in Figure~\ref{fig:macro_labeled}) by placing a special tile in  a designated location before any other probe can place a tile at the same location. Once this special tile is placed, subassemblies form by single tile additions starting from a glue exposed by the special tile. These subassemblies output glues on the relevant sides of the macrotile. We call such subassemblies \emph{glue outputting} subassemblies. A glue outputting subassembly may attempt to present glues on a side of the macrotile where a probe has started to assemble. In this case the glue outputting subassembly simply crashes into the (possible partially formed) probe. In Figure~\ref{fig:macro_labeled}, none of the glue outputting subassemblies are shown, only the various probes are shown. See Section~\ref{sec:case_analysis} for detailed analysis of each case of simulating strength-$2$ binding.

More interesting cases arise when simulating case (2). We will give a high-level description of each case of cooperation here. See Section~\ref{sec:case_analysis} for complete details. First, each probe uses unique glues to assemble for each glue in $\calT$. Denote the glue that Probe D represents by $g_D$, where $D$ is one of $N$, $S$, $E$, or $W$. We will see that special duples attached to probes can be placed to win points of competition (specially designated tile locations of a macrotile). In winning these locations, these duples determine which tile is being simulated.

To simulate the cooperation of glues $g_N$ and $g_S$, Probe N and Probe S can win the point of cooperation at the region with label $1$ in Figure~\ref{fig:macro_labeled}. If these two probes indeed cooperate, then appropriate glue outputting subassemblies form. Notice that Probe W may occupy tile locations in region $1$ before Probe N and Probe S have a chance to cooperate. So that this does not prevent the simulation of cooperation of glues $g_N$ and $g_S$, when Probe W crosses region $1$ (using a crossing gadget), it uses adjacent cooperator gadgets to allow secondary probes to form from Probe N and Probe S. Note that these particular adjacent gadgets which trigger the growth of the second arm of the probe are generic.  That is, all west probes, regardless of which glue they are simulating, present the same cooperator gadgets to trigger the growth of the second arm of the south and north probes.  These secondary probes can then cooperate at region $3$. If they do, a glue outputting subassembly forms to present glues on the east side of the macrotile. Thinking of regions $1$ and $3$ as points of competition, when Probe N and Probe S win a special tile location in either region, the representation function maps the macrotile to a tile type in $T$ based on the special tile placed in either region $1$ or $3$.

Similarly, to simulate cooperation of glues $g_E$ and $g_W$, Probe E and Probe W can cooperate at the region labeled $2$ in Figure~\ref{fig:macro_labeled}. At this point, glue outputting subassemblies attempt to present glues to the north and south sides of the macrotile.

Simulation of cooperation of glues $g_W$ and $g_S$ is equivalent up to reflection to simulation of cooperation of glues $g_W$ and $g_N$. We will describe the cooperation of Probe W and Probe S. For Probe W and Probe S to cooperate, Probe W must first cross region $1$ and trigger the growth of secondary probes for Probe S. Using one of these secondary probes, Probe W and Probe S may cooperate at region $7$. Once cooperation has occurred in this region, a path of tiles assembles toward region $2$. If this path of tiles places a tile in a specially designated tile location (a point of competition) of region $2$, appropriate glue outputting subassemblies may form.

Finally, simulation of cooperation of glues $g_E$ and $g_S$ is equivalent up to reflection of simulation of cooperation of glues $g_E$ and $g_N$. Therefore, we only describe cooperation of Probe E and Probe S. Probe E and Probe S may cooperate at region $5$ or region $9$. If cooperation occurs at region $5$, a path of tiles binds one tile at a time until the point of competition in region $2$ is won, at which point, appropriate glue outputting subassemblies may form. Notice that Probe W may have triggered the growth of secondary probes from Probe S. If this is the case, these secondary probes may prevent the formation of the path of tiles that would otherwise be able to claim the point of competition in region $2$. For this reason, Probe S and Probe E may also cooperate in region $9$, at which point a path of tiles forms, claims a point of competition in region $2$, and glue outputting subassemblies form.

\section{Case Analysis of Tile Placements in $\calT$}\label{sec:case_analysis}
We now look at how our simulator is able to simulate every possible way a tile could attach in $\calT$.  In order to accomplish this, we need to only look at $18$ informative cases.  The other cases will follow from the symmetry of our construction. When tiles bind in $\calT$, they may do so by either attaching with a strength-$2$ glue or they may do so by the cooperation of two strength-$1$ glues. When simulating $\calT$, macrotiles that form must take into account the fact that some input glues are not used due to either mismatching or overbinding (i.e. binding with strength greater than $\tau$). Such input glues are called \emph{non-contributing input supersides}. These are input supersides that are not used to simulate tile binding. Mismatching supersides are one such example. We will describe how tile binding in $\calT$ is simulated using macrotiles and make special mention to the cases where there are non-contributing input supersides. Finally, for the remainder of this section, we denote the glue that Probe D represents by $g_D$, where $D$ is one of $N$, $S$, $E$, or $W$, and in the figures for the various cases, we denote the points of competition and points of cooperation in a region labeled $k$ by POC$k$, where $k\in \mathbb{N}$; whether or not POC$k$ is a point of competition or a point of cooperation will be clear from the context.

\subsection{One-sided binding}\label{sec:onesided_binding}

One-sided binding occurs in $\calT$ when a tile binds using a strength-$2$ glue. For example, Figure~\ref{fig:onesided} depicts the attachment of a tile due to the binding of a strength-$2$ east glue of the attaching tile.

\begin{figure}[htp]
\begin{center}
\includegraphics[width=1.5in]{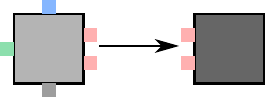}
\caption{Binding via a strength-$2$ glue with no non-contributing input supersides. The tile on the left binds to an assembly using a strength-$2$  east glue.}
\label{fig:onesided}
\end{center}
\end{figure}

To simulate this type of binding, when a strength-$2$ probe grows into an otherwise empty macrotile region (See Figure~\ref{fig:onesided_supertile} for an example of a probe grown from the east.), it grows a path tiles toward the point of competition labeled $2$ in Figure~\ref{fig:onesided_supertile}. If this probe wins this point of competition it places a tile that determines which glues to output on the south, west, and north sides of the macrotile and grows these output glues toward their respective sides. Figure~\ref{fig:onesided_supertile} shows this growth.

\begin{figure}[htp]
\begin{center}
\includegraphics[width=1.5in]{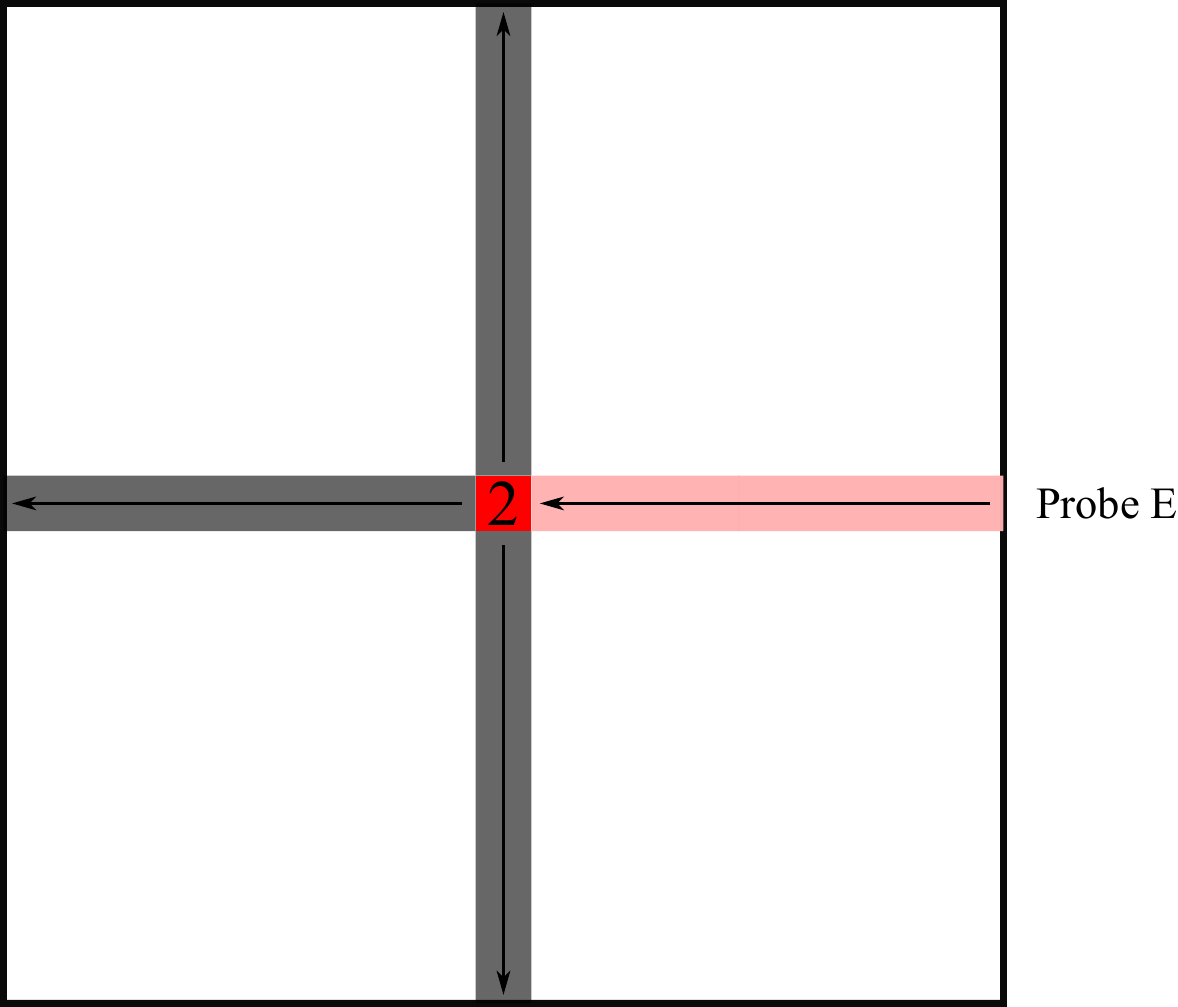}
\caption{Growth of a macrotile that simulates the binding in~\ref{fig:onesided}.}
\label{fig:onesided_supertile}
\end{center}
\end{figure}

For strength-$2$ glues, there are $3$ other cases to consider that are all equivalent to the case in Figure~\ref{fig:onesided_supertile} up to rotation.

\subsection*{One-sided binding with non-contributing input sides}

Now we consider cases of one-sided binding with one or two non-contributing input sides. We consider the three cases of tile binding in $\calT$ depicted in Figure~\ref{fig:onesided_noncontributing} as the rest of the cases are similar to these cases. In each case of Figure~\ref{fig:onesided_noncontributing}, a strength-$2$ glue allows for a tile to bind while either a mismatch or overbinding occurs with the other glues.

\begin{figure}[htp]
\begin{center}
\includegraphics[width=2.5in]{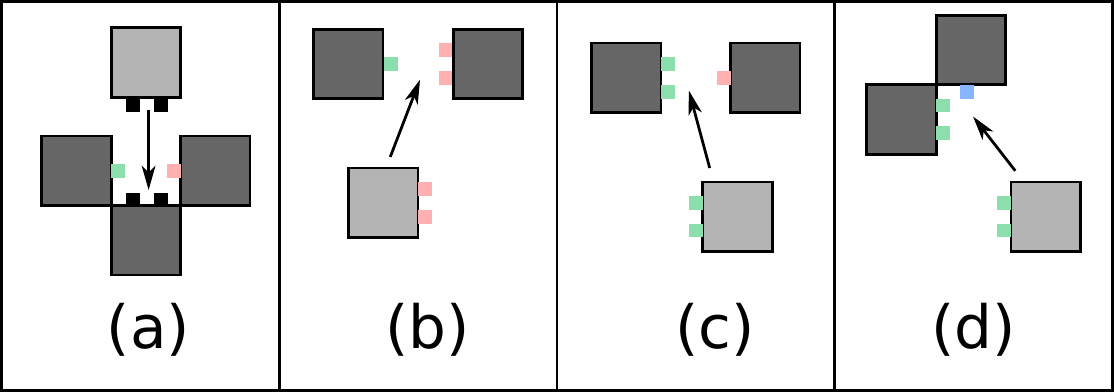}
\caption{Binding via a strength-$2$ glue with non-contributing input supersides.}
\label{fig:onesided_noncontributing}
\end{center}
\end{figure}

In the simulation of $\calT$, special care must be taken to ensure that the probes corresponding to glue mismatching or glue overbinding do not interfere with the growth of a macrotile that is simulating a strength-$2$ tile attachment. For example, when simulating the type of tile attachment shown in Figure~\ref{fig:onesided_noncontributing} part (a), probes enter the macrotile region from the east and west. If the probe from the south wins the point of competition labeled $2$ in Figure~\ref{fig:macro_labeled}, then the east and west probes should not prevent the output of a glue to the north side of the macrotile.

\begin{figure}[htp]
\begin{center}
\includegraphics[width=5.0in]{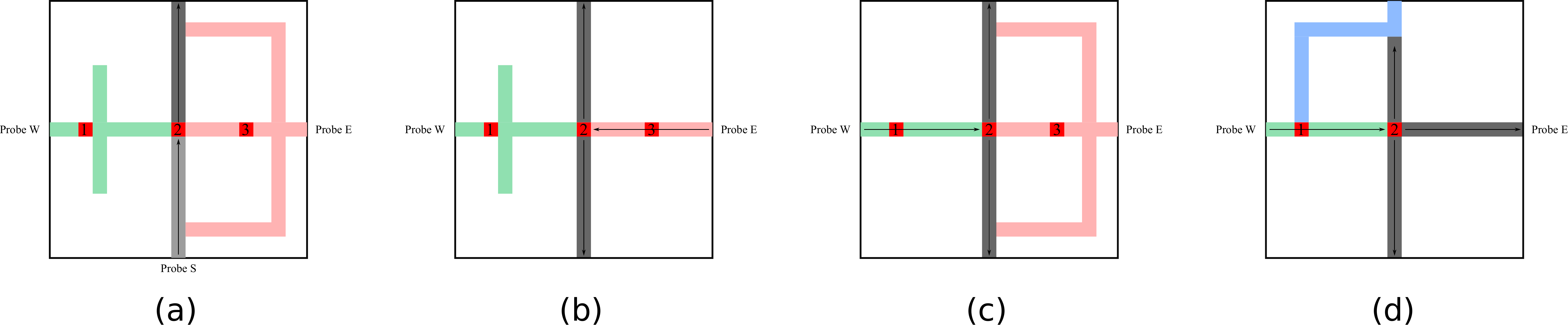}
\caption{Growth of a macrotile that simulates the binding in~\ref{fig:onesided_noncontributing}.}
\label{fig:onesided_noncontributing_supertile}
\end{center}
\end{figure}

In case (a) of Figure~\ref{fig:onesided_noncontributing_supertile}, when Probe S wins POC2, a tile is placed that determines the output glues to be grown to the east, north, and west sides of the macrotile. Since probes have begun growth from the east and/or west (labeled Probe E and Probe W), growth of subassemblies that present glues to the east and west sides of the macrotile will be interrupted by the growth of Probe E and Probe W. If Probe E and Probe W fully form but do not cooperate they do so with a gap cooperator gadget; therefore, using a crossing gadget, Probe S can still cross these probes. A glue outputting subassembly that presents a glue to the north is allowed to assemble since the assembly sequence of this subassembly can be hardcoded to avoid Probe E and Probe W subassemblies. Similarly, in cases (b), (c), and (d) of Figure~\ref{fig:onesided_noncontributing_supertile}, once the point of competition labeled $2$ is won by a strength-$2$ probe (Probe E in case (b) and Probe W in cases (c) and (d)), subassemblies form that output glues to the appropriate sides of the macrotile. Any probes forming from non-contributing input sides prevent the output of glues on those sides.

\subsection{Two-sided binding}

Now we consider the cases where a tile of $\calT$ binds to two tiles via the cooperation of two strength-$1$ glues. We will first consider the cases in Figure~\ref{fig:twosided}. The four cases in Figure~\ref{fig:twosided} cover all cooperative binding cases where there is no non-contributing side that forms. Technically there are two more cases where a north glue cooperates with a west glue (or east glue) to place a tile; however, these cases are equivalent to cases (c) and (d) in Figure~\ref{fig:twosided} since in these cases, the formation of a macrotile that represents a tile in $\calT$ is symmetric about the horizontal line through the center of the macrotile.

\begin{figure}[htp]
\begin{center}
\includegraphics[width=2.5in]{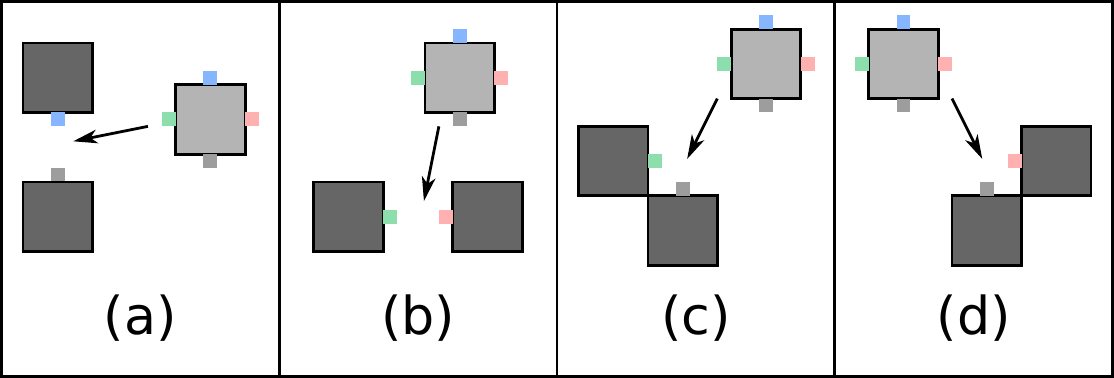}
\caption{Binding via cooperation of two strength-$1$ glues.}
\label{fig:twosided}
\end{center}
\end{figure}

Figure~\ref{fig:twosided_supertile} gives a schematic image for the simulation of the four cases of Figure~\ref{fig:twosided}. In each case, as the macrotile forms, the strategy is essentially the same. When two probes meet at a \emph{point of cooperation}, a cooperator gadget is used to mimic the cooperation that occurs in aTAM systems. There are two types of cooperator gadgets; Section~\ref{sec:gadgets} gives a detailed description of how each cooperator gadget works.

\begin{figure}[htp]
\begin{center}
\includegraphics[width=3.5in]{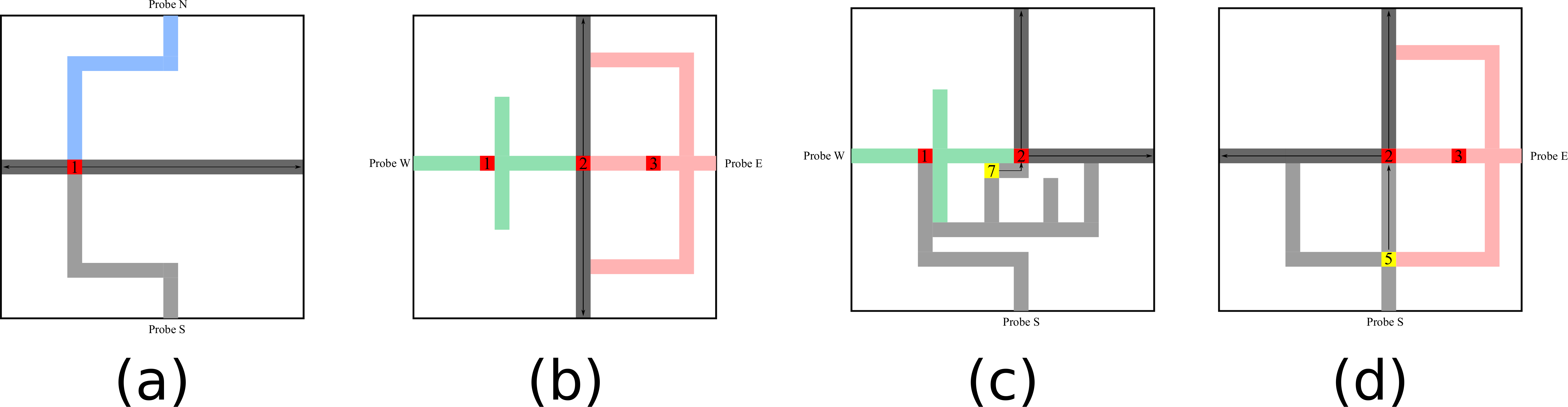}
\caption{Growth of a macrotile that simulates the binding in~\ref{fig:twosided}.}
\label{fig:twosided_supertile}
\end{center}
\end{figure}

\noindent{\bf Case (a):}
In this case, Probe N and Probe S meet at POC1 in Figure~\ref{fig:twosided_supertile}(a). A \emph{gap cooperator} gadget is used to allow for the placement of a duple if and only if there exists a tile type in $T$ with north glue $g_N$ and south glue $g_S$. There is a unique duple for each such tile type in $T$ and the binding of one of these duples allows for the growth of subassemblies that output glues corresponding to the glues on the east and west of the relative tile type in $T$.
\newline

\noindent{\bf Case (b):}
Probe E and Probe W simulate cooperative binding when they meet at POC2 in Figure~\ref{fig:twosided_supertile}(b). A gap cooperator gadget is used to allow for the placement of a duple if and only if there exists a tile type in $T$ with east glue $g_E$ and west glue $g_W$. As in case (a), this duple determines which glue outputting subassemblies form to present glues on the north and south sides of the macrotile.
\newline

\noindent{\bf Case (c):}
Probe S and Probe W simulate cooperative binding as follows. First, Probe W wins POC1. After it wins this point, it grows a subassembly to the north and south of POC1 and to the east of where Probe S assembles. This subassembly uses an adjacent cooperator gadget to trigger Probe S to assemble secondary probes. One of these probes can cooperate with Probe W at POC7. An \emph{adjacent cooperator} gadget is used to allow for the placement of a duple if and only if there exists a tile type in $T$ with west glue $g_W$ and south glue $g_S$. Again, there is a unique duple for each such tile type in $T$ and the binding of one of these duples allows for the growth of subassemblies that output glues corresponding to the glues on the east and north of the relative tile type in $T$. It is at POC7 that a duple is placed that determines which east and north glues to present. Once this duple is placed, growth continues toward POC2. Upon winning POC2, glue outputing subassemblies form that present glues on the east and north sides of the macrotile.
\newline

\noindent{\bf Case (d):}
Probe E and Probe S simulate cooperative binding when they meet at POC5 in Figure~\ref{fig:twosided_supertile}(d). An adjacent cooperator gadget is used to allow for the placement of a duple if and only if there exists a tile type in $T$ with east glue $g_E$ and south glue $g_S$. This duple determines which glue outputting subassemblies form to present glues on the north and west sides of the macrotile. Once this duple is placed tiles attach that race toward POC2. Upon winning this point of competition, the glue outputting subassemblies form.

\subsection*{Two-sided binding with non-contributing input sides}

Here we present eight different cases of two-sided binding with a non-contributing input side. In these cases, three probes grow within a macrotile, and we must take special care to ensure that the probes are coordinated enough to allow for cooperative binding simulation. The eight cases under consideration are given in Figure~\ref{fig:twosided_noncontributing}. In each case we assume that each glue is strength-$1$  and that two of these glues permit cooperative binding while the other glue mismatches or overbinds, whichever the case may be. In general, there are 13 cases of two-sided binding with a non-contributing input side. Five of the eight cases presented here -- (b), (c), (d), (f), and (g) -- are the equivalent up to reflection to the five cases not presented.

\begin{figure}[htp]
\begin{center}
\includegraphics[width=3.0in]{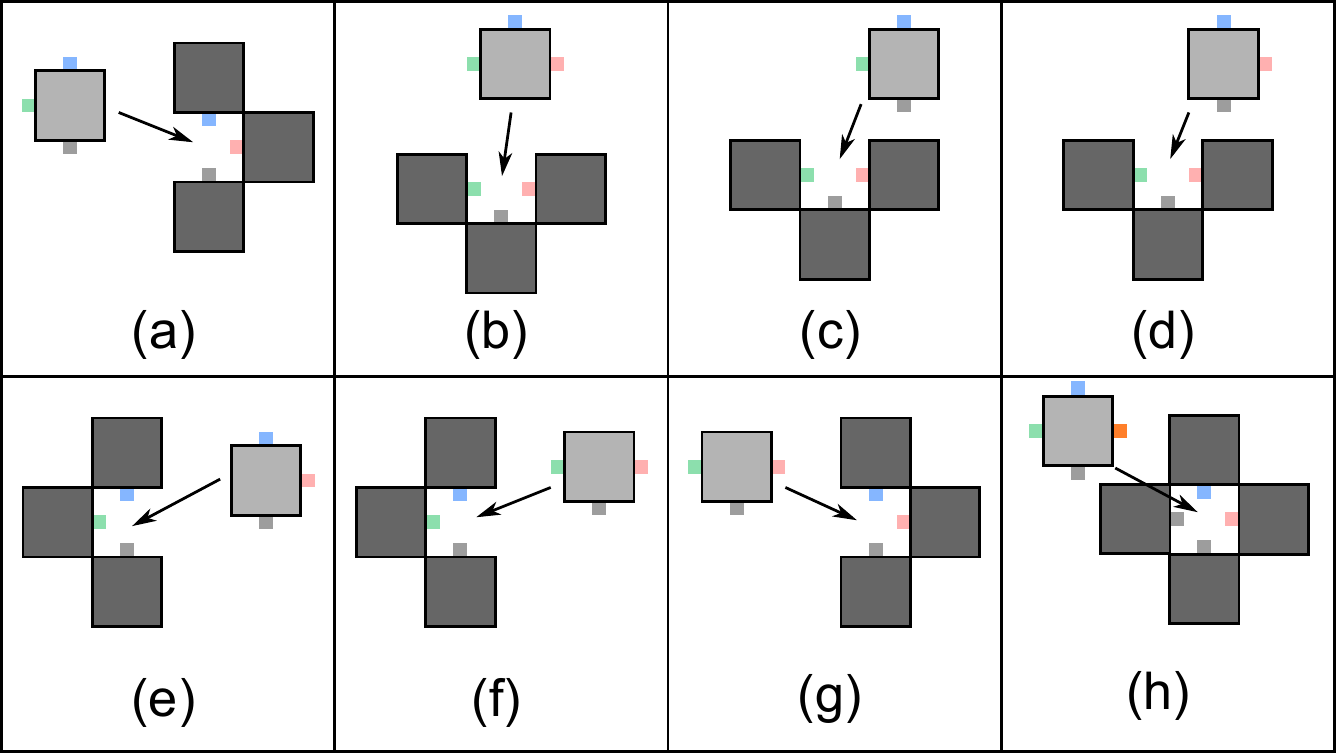}
\caption{Binding via cooperation of two strength-$1$ glues with non-contributing input supersides.}
\label{fig:twosided_noncontributing}
\end{center}
\end{figure}

Figure~\ref{fig:twosided_noncontributing_supertile} gives a schematic image for the simulation of the seven cases of Figure~\ref{fig:twosided_noncontributing}. In each case, two probes meet at a point of cooperation and one of the two cooperator gadgets is used to mimic the cooperation that occurs in aTAM systems. To coordinate these probes, we will also have to use the \emph{crosser} gadget. See Section~\ref{sec:gadgets} for a detailed description of how each of these gadgets works.

\begin{figure}[htp]
\begin{center}
\includegraphics[width=4.5in]{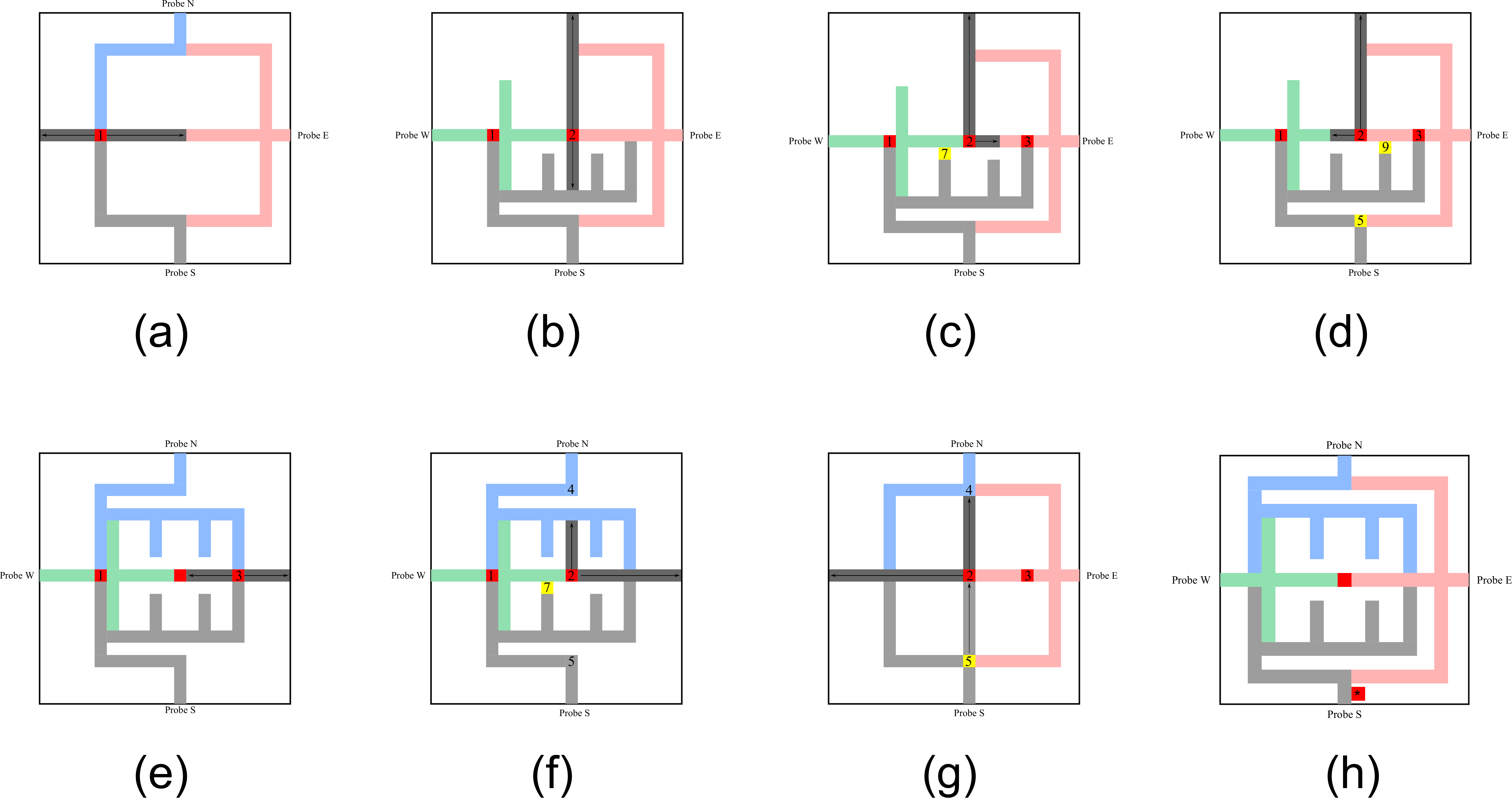}
\caption{Growth of a macrotile that simulates the binding in~\ref{fig:twosided_noncontributing}.}
\label{fig:twosided_noncontributing_supertile}
\end{center}
\end{figure}

\noindent{\bf Case (a):}
Probe N and Probe S meet at POC1 in Figure~\ref{fig:twosided_supertile}(a). A gap cooperator gadget is used to allow for the placement of a duple if and only if there exists a tile type in $T$ with north glue $g_N$ and south glue $g_S$. There is a unique duple for each such tile type in $T$ and the binding of one of these duples allows for the growth of subassemblies that output glues corresponding to the glues on the east and west of the relative tile type in $T$. Notice that since Probe E has started to assemble, the glue outputting subassembly that presents glues on the east side of the macrotile is halted when this subassembly meets the Probe E subassembly.
\newline

\noindent{\bf Case (b):}
Probe E and Probe W simulate cooperative binding when they meet at the POC2 in Figure~\ref{fig:twosided_supertile}(b). A gap cooperator gadget is used to allow for the placement of a duple if and only if there exists a tile type in $T$ with east glue $g_E$ and west glue $g_W$. As in case (a), this duple determines which glue outputting subassemblies form to present glues on the north and south sides of the macrotile. Notice that Probe E and Probe W occupy POC2 and so have automatically won the point of competition at POC2. Therefore, even if Probe S can cooperate with one of the other probes, Probe E and Probe W determine the output glues on the macrotile. Finally, from POC2, subassemblies form to output appropriate glues. The subassembly outputting the south glue of the macrotile will crash into the (at least partially existing) subassembly Probe S.
\newline

\noindent{\bf Case (c):}
Probe S and Probe W simulate cooperative binding as follows. First, Probe W wins POC1 in Figure~\ref{fig:twosided_supertile}(c). After it wins this point, it grows a subassembly to the north and south of POC1 and to the east of where Probe S assembles. This subassembly uses an adjacent cooperator gadget to allow Probe S to assemble secondary probes. One of these probes can cooperate with Probe W at POC7.
An adjacent cooperator gadget is used to allow for the placement of a duple if and only if there exists a tile type in $T$ with west glue $g_W$ and south glue $g_S$. Again, there is a unique duple for each such tile type in $T$ and the binding of one of these duples allows for the growth of subassemblies that output glues corresponding to the glues on the east and north of the relative tile type in $T$. It is at POC7 that a duple is placed that determines which east and north glues to present. Once this duple is placed, growth continues toward POC2. Upon winning POC2, glue outputing subassemblies form that present glues on the east and north sides of the macrotile, and the subassembly presenting the east glue crashes into the subassembly Probe E.
\newline

\noindent{\bf Case (d):}
In this case two assembly sequences can lead to the simulation of this binding type. First, Probe E and Probe S can cooperate at POC5 in Figure~\ref{fig:twosided_noncontributing_supertile}(d), assemble toward POC2  and win POC2. This case is similar to case (d) in Figure~\ref{fig:twosided_supertile}, however, in this case the subassembly presenting glues on the west side of the macrotile is halted by Probe W and only a glue to the north side of the macrotile is presented. Note that it could be the case that secondary probes of Probe S block the assembly that grows from POC5 to POC2. In this case, Probe E and Probe S should still be able to simulate cooperation. They achieve this by cooperating at POC9. At this point, once POC2 is won, glue outputting subassemblies form. This is the case presented in Figure~\ref{fig:twosided_noncontributing_supertile}(d).
\newline

\noindent{\bf Case (e):}
In this case, as in case (d), two assembly sequences can lead to the simulation of this binding type. First, Probe N and Probe S can cooperate at POC1 in Figure~\ref{fig:twosided_noncontributing_supertile}(e), race toward POC2 and win POC2. This case is similar to case (a) in Figure~\ref{fig:twosided_supertile}, however, in this case the subassembly presenting glues on the west side of the macrotile is halted by Probe W and only a glue to the east side of the macrotile is presented. In the case that Probe W wins POC1, secondary probes forms -- one set of secondary probes forms from Probe S and another from Probe N. In this case, Probe N and Probe S should still be able to simulate cooperation. They achieve this by cooperating at POC3. At this point, a glue outputting subassembly forms to present a glue on the east side of the macrotile, since it is known that all other sides have grown input probes. This is the case presented in Figure~\ref{fig:twosided_noncontributing_supertile}(e).
\newline

\noindent{\bf Case (f):}
In this case, first Probe W wins POC1 by assembling a crosser gadget to grow between Probe N and Probe S. In the case where Probe N and Probe S have formed a path of adjacent tiles from the north side of the macrotile to the south side, the crosser gadget detaches a section of this path so that Probe W can assemble. Then Probe W triggers the growth of secondary probes on both Probe N and Probe S. At POC7, an adjacent cooperator gadget assembled from Probe W and Probe S allows for the placement of a duple that determines which glues are output to the north and east sides of the macrotile. Assembly proceeds from POC7 to POC2. Upon winning POC2, an appropriate glue outputting subassembly forms that presents glues to the east side of the macrotile.
\newline

\noindent{\bf Case (g):}
This case is similar to case (d) in Figure~\ref{fig:twosided_supertile} except that the subassembly that presents glues on the north side of the macrotile in case (d) of Figure~\ref{fig:twosided_supertile} crashes into the (at least partially) existing subassembly Probe N.

\noindent{\bf Case (h):}
Up to this point, for simplicity, we have neglected the special point of competition which we look at in the case of two non-contributing sides.  In this case, growth of the macrotile will be similar to the case of one non-contributing side except for the case where $g_N$ and $g_S$ can cooperatively place a tile but no other glues can.  In this particular case, it could be the case that probes representing glues $g_E$ and $g_W$ arrive before either Probe N or Probe S.  Notice that this prevents Probe N and Probe S from cooperatively placing a tile.  In order to handle this peculiar case, we enumerate all of the tiles which $g_N$ and $g_S$ can cooperatively place (this is at most $|T|$) by a function $F$.  In the region labeled $*$ in Figure~\ref{fig:twosided_noncontributing_supertile}(h), we always place a tile from $E \subset S$, which consists of tiles labeled 1 through $T$, nondeterministically.  Let $n$ be the number of tiles $g_N$ and $g_S$ can cooperatively place, and suppose $r$ is the value contained in the label of the tile placed in the $*$ region.  The representation function maps the macrotile to the tile in $T$ given by $F(r \mod n)$.  Since, in this case, the macrotile being assembled is surrounded on all four of its sides, there is not a need for any output subassemblies to be placed.  Furthermore, it should be noted that this region is a ``last resort'' for the representation function.  If there is an appropriate tile placed at any of the other POC regions, the output of the representation function will depend on that tile.

Figure~\ref{fig:macro_detailed} shows an assembled macrotile that simulates the binding which takes place in Figure~\ref{fig:twosided_noncontributing}(d) in the manner shown in Figure~\ref{fig:twosided_noncontributing_supertile}(d).  The blue tiles are part of the subassembly which compose Probe W, the green tiles compose Probe S, and the pink tiles make up Probe E.  All of these probes enter the macrotile region in the direction and location indicated by the arrows. The yellow tiles in the figure show tiles that are cooperatively placed by Probe S and Probe E.  Notice that the growth of the yellow tile placed near the bottom of the figure has been blocked by arm 2 of Probe S, but the second yellow tile in the figure is placed and able to growth a path to POC2.  The dark red tile placed in POC2 starts the growth of an outputting subassembly which grows a new probe into the region north of the macrotile.

\begin{figure}[htp]
\begin{center}
\includegraphics[width=4.0in]{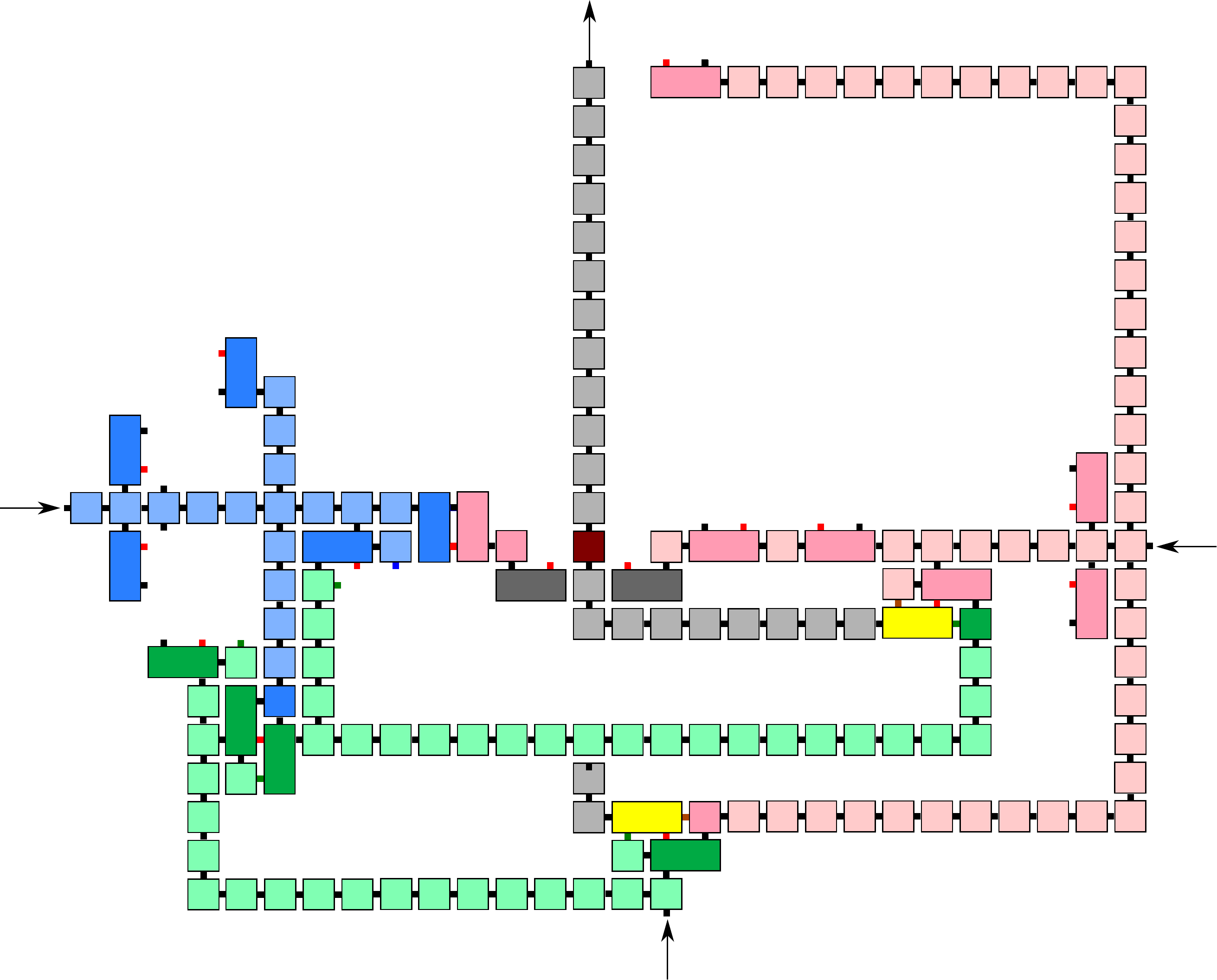}
\caption{A detailed macrotile simulating the binding shown in Figure~\ref{fig:twosided_noncontributing}.}
\label{fig:macro_detailed}
\end{center}
\end{figure}

\section{Seed Formation}\label{sec:seed_form}
In order to complete our description of the simulation of $\calT$, it is necessary to describe the construction of $\sigma'$.  For all $t \in \sigma$, we create special output macrotiles.  An example $\sigma$ is shown in Figure~\ref{fig:seed_form_exT} and the corresponding $\sigma'$ is shown in Figure~\ref{fig:seed_form_exS}.  For these macrotiles, the tile which the representation function depends upon is in the center of the macrotile.  Assembly begins with the macrotiles growing probes for each exposed glue.  If a tile has a side which does not have a glue, then a probe is not grown on that side.  Growth of the assembly then proceeds as described in Section~\ref{sec:poc_repr}.

\begin{figure}[htp]
\begin{center}
\includegraphics[height=1.0in]{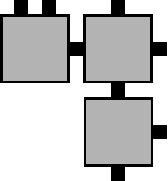}
\caption{An example seed in $\cal T$.}
\label{fig:seed_form_exT}
\end{center}
\end{figure}

\begin{figure}[htp]
\begin{center}
\includegraphics[width=1.5in]{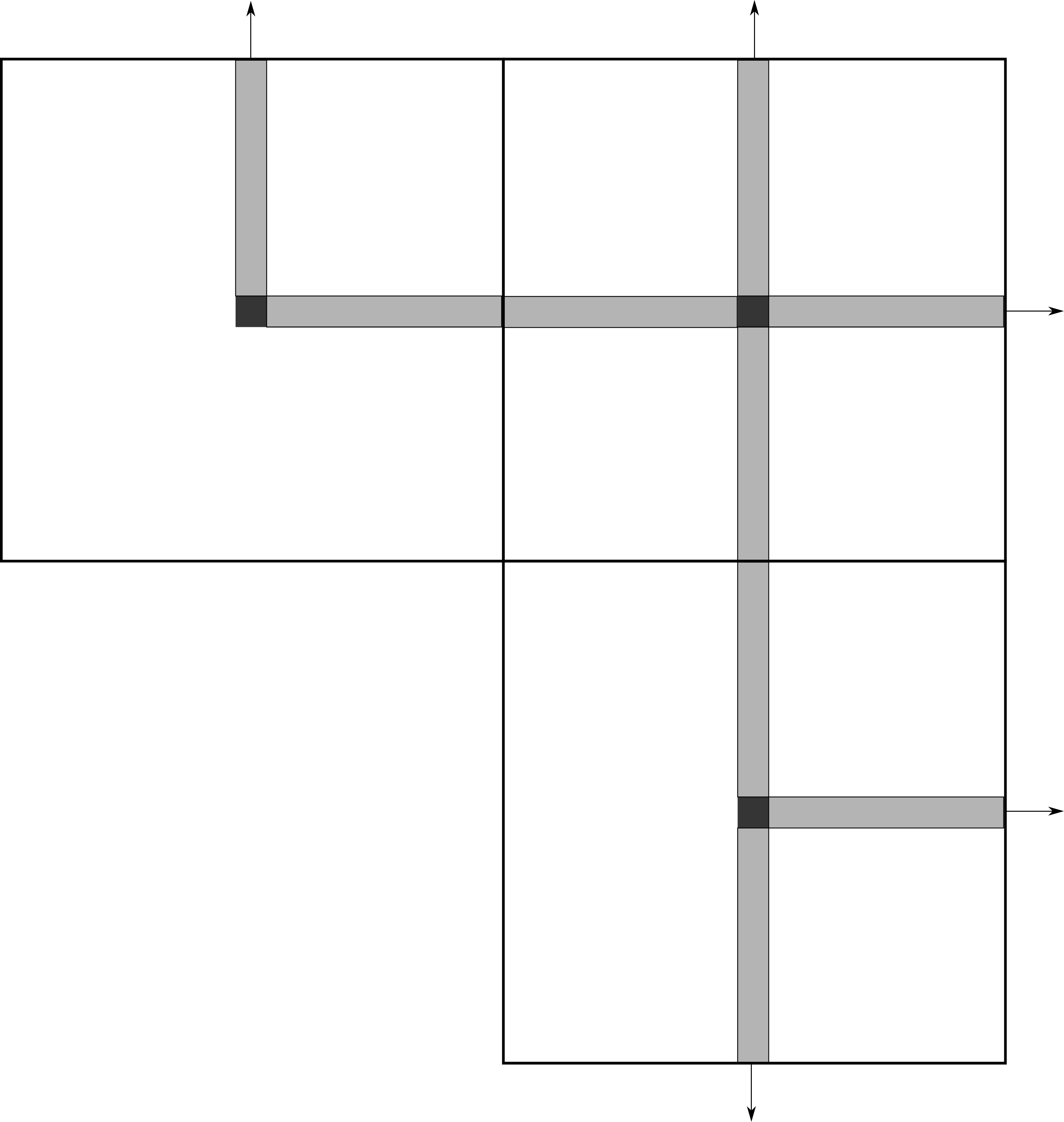}
\caption{The corresponding seed in $\mathcal{S}$ for the seed shown in Figure~\ref{fig:seed_form_exT}.  The arrows indicate in which directions the probes will grow when assembly begins. The black squares represent macrotile regions.}
\label{fig:seed_form_exS}
\end{center}
\end{figure}

\section{Proof of Correctness}\label{sec:proof_of_correctness}
\begin{proof}
Let $\calT = (T, \sigma, 2)$ be an aTAM system and let $\mathcal{S} = (T_S, S, D, \sigma_S)$ be the DrgTAS system obtained from $\calT$ by the construction given in Section~\ref{sec:DrgTAM_sim}.
To show that $\mathcal{S}$ gives a valid simulation of $\calT$ we will use the representation function $R$ described in Section~\ref{sec:poc_repr} and denote the scale factor of this simulation by $m$. We will show the following. 1. Under $R$, $\mathcal{S}$ and $\calT$ have equivalent production. This essentially follows from the construction. 2. Under $R$, $\mathcal{S}$ and $\calT$ have equivalent dynamics. To show this, we must show that when detachment occurs in $\mathcal{S}$ due to a cut of strength less than $1$, the two assemblies produced are of the form that one of them still maps correctly to a represented assembly in $\calT$ while the other produced assembly maps to the empty tile. We must also show that any assembly sequence starting from the latter assembly yields an assembly that maps to the empty tile under $R$.

To show that $\mathcal{S}$ and $\calT$ have equivalent production under $R$, we will first show that $R^*$ maps cleanly.  To see this, note that the probes described in the construction (Section~\ref{sec:poc_repr}) can only be grown from adjacent macrotiles on sides where they placed an outputting subassembly. The probes grown in macrotile regions are never interpreted under $R$ as a tile in $\calT$ until a POC region is won (using cooperation if necessary).  It follows from the construction that macrotile regions which map to the empty tile under $R$ will not grow any outputting subassemblies which can initiate probes in adjacent macrotile locations until a POC region is won and the macrotile first maps to a tile in $\calT$. Therefore, $R^*$ maps cleanly.

Now, to see that $\left\{R^*(\alpha') | \alpha' \in \prodasm{S}\right\} = \prodasm{T}$, let $\alpha'$ be in $\prodasm{S}$. Then by the construction, any $m$-block macrotile, $B$, in $\alpha'$ maps to the empty tile or to some tile type in $T$, and only maps to a tile in $T$ if $B$ is part of the seed of $\mathcal{S}$, or adjacent $m$-block macrotiles of $B$ in $\alpha'$ expose glues that allow for the growth of $B$. In the latter case, the construction shows that $B$ can only map to a tile type whose glues match the glues represented on the sides of adjacent $m$-block macrotiles. Since this holds for any $m$-block macrotile in $\alpha'$, $R^*(\alpha')$ is in $\prodasm{T}$. This shows that $\left\{R^*(\alpha') | \alpha' \in \prodasm{S}\right\} \subseteq \prodasm{T}$. Then, for $\alpha\in \prodasm{T}$ and an assembly sequence $\vec{\alpha}$ resulting in $\alpha$, the construction also shows that we can grow $m$-block macrotiles following the assembly sequence $\vec{\alpha}$ to obtain an $\alpha'$ in $\prodasm{S}$ such that $R^*(\alpha') = \alpha$. Therefore, we also have $\left\{R^*(\alpha') | \alpha' \in \prodasm{S}\right\} \supseteq \prodasm{T}$. Thus, $\left\{R^*(\alpha') | \alpha' \in \prodasm{S}\right\} = \prodasm{T}$.

To show that $\mathcal{S}$ and $\calT$ have equivalent dynamics, first note that the construction implies
that $\alpha' \rightarrow_{+}^\mathcal{S} \beta'$ if and only if $R^*(\alpha') \rightarrow R^*(\beta')$. To see this note that when a single tile (or duple) is added to $\alpha'$, if the tile (or duple) does not win a point of competition, then $R^*(\alpha') = R^*(\beta')$. On the other hand, if the tile (or duple) does win a point of competition, the macrotile is determined once and for all. Moreover, an assembly in a macrotile region cannot map to a tile type $t$ under $R$ unless some adjacent macrotile region (or regions in the case of simulation of cooperation) maps to a tile type with a glue (or glues) that allows for the placement of a tile with type $t$. Therefore,  $R^*(\alpha') \rightarrow R^*(\beta')$.

What is left to show is that for $\alpha'$ in $\prodasm{S}$ such that $R^*(\alpha') = \alpha$, when a cut with strength less than $1$ exists in $\alpha'$, the two assemblies that on each side of the cut are such that one of the assemblies, $\beta'_1$ say, still represents $\alpha$, while the other assembly, $\beta'_2$, represents the empty tile. Moreover, we must show that the result of any assembly sequence starting from $\beta'_2$ must also represent the empty tile. To see this, note that in the only cases where there exists a cut of strength less than $1$ in any of the gadgets given in Section~\ref{sec:gadgets}, the cut separates the assembly into two configurations where one of the configurations is given as one of the assemblies in Figure~\ref{fig:gad_junk}. One can check that the configurations given in Figure~\ref{fig:gad_junk} quickly become terminal and represent the empty tile. The other configuration is an assembly that still represents $\alpha$ since there is never a cut of strength less than $1$ separating points of cooperation or points of competition from an assembly.

To see that the scale factor is $O(1)$, note that the lengths of the paths and sizes of gadgets which make up macrotiles are all fixed, independent of $\calT$.  In order to see that the tile complexity of $|S \cup D|$ is $O(|T|)$ notice that for each tile $t \in T$ there are a bounded number of ways to bind, which means that the different types of tiles that simulate the binding of $t$ is bounded.  Furthermore, observe that for each of these potential ways of binding, the number of tiles required to assemble the macrotile which maps to $t$ under the representation function, given the constant scale factor of macrotiles, is constant.  Consequently, $|S \cup D| = O(|T|)$.
\end{proof}

} 
\vspace{-5pt}
\begin{corollary}\label{cor:IU}
There exists a DrgTAM tile set $U$ which, at temperature-$1$, is intrinsically universal for the aTAM.  Furthermore, the sets of singletons and duples, $S$ and $D$, created from $U$ are constant across all simulations.
\end{corollary}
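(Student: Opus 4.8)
The plan is not to build a fresh DrgTAM tile set for each simulated aTAM system, but to compose the construction of Theorem~\ref{thm:DrgTAS-sim} with the intrinsic universality of the aTAM itself. By \cite{IUSA} there is a single aTAM tile set $U_{\mathrm{aTAM}}$ such that for every aTAM system $\calT'$ of any temperature there is a temperature-$2$ system $\mathcal{U}_{\calT'} = (U_{\mathrm{aTAM}}, \sigma_{\calT'}, 2)$ that simulates $\calT'$, where only the seed $\sigma_{\calT'}$ depends on $\calT'$. So instead of simulating $\calT'$ directly with a DrgTAS, I would simulate the fixed-tile-set system $\mathcal{U}_{\calT'}$.

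First I would inspect the construction behind Theorem~\ref{thm:DrgTAS-sim} and observe that the produced tile set $T_{\mathcal{D}}$, singleton set $S$, and duple set $D$ depend only on the tile set $T$ of the simulated system (and on the fixed value $\tau=2$): probes, the adjacent/gap cooperator gadgets, crosser gadgets, points of competition, and glue-outputting subassemblies are all defined per glue and per tile type of $T$, while only the simulating seed $\sigma'$ encodes $\sigma$. Applying the construction to $T = U_{\mathrm{aTAM}}$ therefore yields a fixed DrgTAM tile set $U := T_{\mathcal{D}}$ together with fixed $S$ and fixed $D$ — which is exactly the ``furthermore'' clause. Then, given an arbitrary target $\calT'$, I would take $\sigma_{\calT'}$ witnessing $\mathcal{U}_{\calT'}$-simulation of $\calT'$, build the seed $\sigma''$ as in Theorem~\ref{thm:DrgTAS-sim} so that $\mathcal{D}_{\calT'} = (U, S, D, \sigma'', 1)$ simulates $\mathcal{U}_{\calT'}$, and compose the two block representations: the $O(1)$-scale DrgTAM macrotiles of $\mathcal{D}_{\calT'}$ group into super-macrotiles which map, via the aTAM IU representation function, to tiles of $\calT'$, giving a valid $m$-block representation under which $\mathcal{D}_{\calT'}$ simulates $\calT'$. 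Since $\calT'$ is arbitrary, $U$ is intrinsically universal for the aTAM.

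The main obstacle is verifying that the simulation relation is transitive in the present, detachment-aware setting, so that the composition above is legitimate. One must check that the composed representation function is again a valid $m$-block representation and maps cleanly (fuzz in $\mathcal{D}_{\calT'}$ stays within one macrotile of an image macrotile, which lies within one super-macrotile of the $\calT'$-image, so the no-diagonal-fuzz condition survives composition), that equivalent productions compose, and — the delicate point — that the junk clause of the definition of simulation composes: any assembly that detaches in $\mathcal{D}_{\calT'}$ and maps to the empty macrotile of $\mathcal{U}_{\calT'}$ also maps to the empty tile of $\calT'$ and can only grow into such assemblies. This last fact follows from the junk analysis in the proof of Theorem~\ref{thm:DrgTAS-sim} (detached pieces are inert or grow only into inert pieces) together with the fact that $\mathcal{U}_{\calT'}$ is an ordinary positive-strength aTAM system and so exhibits no detachment of its own. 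A secondary item to pin down is that the aTAM IU result of \cite{IUSA} supplies simulation in the same technical sense used here; if it differs only superficially, reconciling it is routine, and otherwise one reproves the needed direction directly for the fixed system $\mathcal{U}_{\calT'}$.
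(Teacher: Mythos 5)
Your proposal is correct and follows essentially the same route as the paper: apply the construction of Theorem~\ref{thm:DrgTAS-sim} to the intrinsically universal aTAM tile set of \cite{IUSA}, observe that the resulting $T_{\mathcal{D}}$, $S$, and $D$ depend only on the simulated tile set (only the seed varies with $\calT$), and conclude by composing the two simulations. The paper treats the composition of simulations as immediate, whereas you flag the transitivity/junk-clause verification explicitly --- a reasonable extra caution, but not a departure in approach.
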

\vspace{-5pt}
As mentioned above this result follows from \cite{IUSA}.  See Section~\ref{sec:IU_proof} for more details.

\ifabstract
\later{
\section{Proof of Corollary~\ref{cor:IU}} \label{sec:IU_proof}
\begin{proof}
To prove Corollary~\ref{cor:IU}, we let $\calT = (T,\sigma,\tau)$ be an arbitrary TAS in the aTAM.  Let $U_{\calT} = (U,\sigma_{\calT},2)$ be an aTAM TAS which simulates $\calT$ using the tile set $U$, given in \cite{IUSA}, which is intrinsically universal for the aTAM.  Now let $\mathcal{D} = (T_U, S, D, \sigma_{\calT}', 1)$ be a DrgTAS, constructed as given by the proof of Theorem~\ref{thm:DrgTAS-sim}, which simulates $U_{\calT}$.  We now note that, regardless of $\calT$, the same tile set $U$ is used to simulate it in the aTAM, and that $T_U$, $S$, and $D$ depend only upon the tile set being simulated by $\mathcal{D}$ (i.e. the only thing that changes in $\mathcal{D}$ as $\calT$ changes is $\sigma_{\calT}$).  Therefore, a single tile set $T_U$ suffices to simulate any arbitrary aTAM TAS, and thus $T_U$ is intrinsically universal for the aTAM.
\end{proof}
} 

\vspace{-15pt}
\bibliographystyle{plain}
\bibliography{tam}

\ifabstract
\newpage
\appendix
\magicappendix
\fi

\end{document}